\newtheorem{theorem}{Theorem}[section]
\newtheorem{lemma}{Lemma}[section]
\newtheorem{claim}{Claim}[section]
\newtheorem{definition}{Definition}[section]
\newtheorem{problem}{Problem}[section]
\newtheorem{remark}{Remark}[section]
\newcommand{\supp}{\mbox{\small\rm Supp}}
\newcommand{\VP}{\mbox{\small\rm VP}} 
\newcommand{\VNP}{\mbox{\small\rm VNP}} 
\newcommand{\red}{\mbox{\small\rm RED}}
\newcommand{\Tau}{\mathrm{T}}
\newcommand{\mon}{\mbox{\small\rm mon}} 
\newcommand{\ST}{\mbox{\small\rm ST}}
\newcommand{\E}{\mathbb{E}}
\newcommand{\R}{\mathbb{R}}
\renewcommand{\angle}[1]{{\langle} #1 {\rangle}}
\renewcommand{\mod}[1]{\ (\mathrm{mod}\ #1)}
\DeclareMathOperator{\var}{\mbox{\small\rm var}} 
\newcommand{\IP}{\mbox{\small\rm IP}}
\newcommand{\comst}{\mbox{\small\rm Comm-ST}} 
\newcommand{\disp}{\mbox{\small\rm Disc}}
\newcommand{\cproof}{\noindent{\it Proof of Claim}}
\newif\ifcomment
\DeclareMathOperator{\Good}{\mbox{\small\rm Good}} 
\DeclareMathOperator{\Bad}{\mbox{\small\rm Bad}}
\DeclareMathOperator{\poly}{\mbox{\small\rm poly}}
\newenvironment{appendix-lemma}[1]{\vspace{0.1in}\noindent{\bf Lemma~#1~} \em }{\vspace{0.1in}}
\newenvironment{appendix-theorem}[1]{\vspace{0.1in}\noindent{\bf Theorem~#1~} \em }{\vspace{0.1in}}
\title{Monotone Complexity of Spanning Tree Polynomial Re-visited} 
\author{Arkadev Chattopadhyay\thanks{TIFR, Mumbai. Partially supported by a MATRICS grant of the Science and Engineering Research Board, DST, India. {\tt arkadev.c@tifr.res.in}}
\and Rajit Datta\thanks{Goldman-Sachs, Bangalore. The work was done when the author was a research scholar at CMI partially supported by a TCS Fellowship. {\tt rajitcmi@gmail.com}}
\and Utsab Ghosal\thanks{CMI,Chennai {\tt ghosal@cmi.ac.in}}
\and Partha Mukhopadhyay\thanks{CMI, Chennai. {\tt partham@cmi.ac.in}}
}
\begin{document}
\maketitle
\begin{abstract}
We prove two results that shed new light on the monotone complexity of the spanning tree polynomial, a classic polynomial in algebraic complexity and beyond. 

First, 
we show that the spanning tree polynomials having $n$ variables and defined
over constant-degree expander graphs, have monotone arithmetic complexity $2^{\Omega(n)}$. This yields the first strongly exponential lower bound on monotone arithmetic circuit complexity for a polynomial in VP. 
Before this result, strongly exponential size monotone lower bounds were known only for explicit polynomials in VNP \cite{GS12, RY11, Sri20, CKR20, HY21}. 

Recently, Hrube\v{s} \cite{H20} initiated a program to prove lower bounds against \emph{general} arithmetic circuits by proving $\epsilon$-sensitive lower bounds for monotone arithmetic circuits for a specific range of values for $\epsilon\in (0,1)$. The first $\epsilon$-sensitive lower bound was just proved for a family of polynomials inside $\VNP$ by Chattopadhyay, Datta and Mukhopadhyay \cite{CDM21}. We consider the spanning tree polynomial $\ST_n$ defined over the complete graph of $n$ vertices and show that the polynomials 
 $F_{n-1,n} - \epsilon\cdot \ST_{n}$ and $F_{n-1,n} + \epsilon\cdot \ST_{n}$, defined over $n^2$ variables, have monotone circuit complexity $2^{\Omega(n)}$ if $\epsilon \geq 2^{- \Omega(n)}$ and $F_{n-1,n} \coloneqq \prod_{i=2}^n \big(x_{i,1} +\cdots+x_{i,n}\big)$ is the complete set-multilinear polynomial. 
 This provides the first $\epsilon$-sensitive exponential lower bound for a family of polynomials inside $\VP$. En-route, we consider a problem in 2-party, \emph{best partition} communication complexity of deciding whether two sets of oriented edges distributed among Alice and Bob form a spanning tree or not. We prove that there exists a \emph{fixed} distribution, under which the problem has low discrepancy with respect to every nearly-balanced partition. This result could be of interest beyond algebraic complexity.   
 
 Our two results, thus, are incomparable generalizations of the well known result by Jerrum and Snir \cite{JS82} which showed that the spanning tree polynomial, defined over complete graphs with $n$ vertices (so the number of variables is $n^2$), has monotone complexity $2^{\Omega(n)}$. In particular, the first result is an optimal lower bound and the second result can be thought of as a robust version of the earlier monotone lower bound for the spanning tree polynomial.  
 
\end{abstract}
\section{Introduction}\label{sec:intro}
A central problem in complexity theory aims to understand the following question: How much more powerful are computations with access to negations than computations that are monotone? While strong limitations have been known since almost three decades on monotone computations, in both Boolean and algebraic complexity, several basic gaps in our knowledge remain. We consider two such natural problems in the context of algebraic complexity.

The first question is as follows: Does there exist a monotone polynomial that can be efficiently computed by general circuits using cancellations but for which monotone circuits cannot do anything better than computing it by brute-force, i.e. monomial by monomial. The first big progress on this problem was made by Valiant \cite{Val79} in a seminal work. Soon after, Jerrum and Snir \cite{JS82} proved a similar result using a different polynomial called the spanning tree polynomial. In both these works, the monotone polynomials were respectively exhibited on $n$ variables, shown to be computed by general arithmetic circuits of polynomial in $n$ size, and were also shown to require monotone circuits of size $2^{\Omega(\sqrt{n})}$. While these results show that exponential savings in circuit size is possible by allowing negations, they leave the following question open: can the savings be pushed to the extremes, i.e. for any such polynomial can one prove a strongly exponential lower bound of $2^{\Omega(n)}$ on their monotone complexity. 

Strongly exponential lower bounds on monotone arithmetic circuit size have been also proven before, spanning efforts from the eighties to very recent times: first in the eighties by Kasim-Zade \cite{K86}, Gashkov \cite{G87}, Gashkov and Sergeev \cite{GS12} which remained relatively unknown. Then, Raz and Yehudayoff \cite{RY11} showed such bounds using sophisticated exponential sum estimates, more recently by Srinivasan \cite{Srinivasan19}, much more recently by Cavalar, Kumar and Rossman \cite{CKR20} using polynomials derived from error correcting codes. Just this year, Hrube\v{s} and Yehudayoff \cite{HY21} established such bounds using polyhedral combinatorics. Remarkably, in each case, the target polynomials are not known, and perhaps not expected, to lie in VP, the class of polynomials that have efficient general circuits. This left the following possibility open in principle: proving a strongly exponential lower bound of $2^{\Omega(n)}$ on the monotone complexity of a monotone polynomial was enough to show that it did not admit general circuits of polynomial size. 

Our first result finally rules out this possibility with two interesting features: first the polynomial used is a spanning tree polynomial, something that has been studied since at least Jerrum and Snir's classical work. Second, our argument is quite short.

Our second result concerns establishing 'robust' monotone lower bounds: consider an 'easy' polynomial like the full set multi-linear polynomial $F_{n,m} \coloneqq \prod_{i=1}^n \sum_{j=1}^m x_{i,j}$. A simple depth-2 monotone circuit of size $m+1$ (with unbounded fan-in) computes $F_{n,m}$. Can its complexity be significantly increased by a very small perturbation using another polynomial? Let $f$ be a set-multilinear monotone polynomial and $f_{\epsilon} \coloneqq F_{n,m} - \epsilon\cdot f$, for a small $\epsilon$. Obviously, for $\epsilon = 0$, the monotone complexity of $f_{\epsilon}$ is very low. Surprisingly, Hrube\v{s} \cite{H20} showed that if $f$ can be computed in polynomial size (say $s$) by general set-multilinear circuits, then $f_{\epsilon}$ has polynomial-size  monotone circuits for roughly doubly exponentially small in $s$, but non-zero,  $\epsilon$. This gives rise to a natural program of proving super-polynomial monotone lower bounds for such $f_{\epsilon}$, when $\epsilon$ is small as an approach to ultimately obtain lower bounds for general circuits. Hrube\v{s} calls this $\epsilon$-sensitive lower bounds for $f$. While Hrube\v{s} posed the challenge of establishing such lower bounds for the Permanent polynomial, none seemed to have been known for any explicit polynomial. In a very recent work, Chattopadhyay, Datta and Mukhopadhyay \cite{CDM21}, established the first such lower bound for a polynomial $f$ that lies in VNP but is not known to be in VP. More precisely, they could prove exponential monotone lower bounds for $f_{\epsilon}$ as long as $\epsilon \ge 2^{-\Omega(\sqrt{n})}$.

A natural question that arises from the work of \cite{CDM21} is as follows: how small an $\epsilon$ is needed to establish general circuit lower bounds from $\epsilon$-sensitive super-polynomial monotone ones? Hrube\v{s}' argument yields $\epsilon$ to be as small as doubly exponentially small to rule out $s$-size general circuits. Is that necessary? Or the kind of $\epsilon$ that \cite{CDM21} handles is (nearly) sufficient? We give a strong evidence against the latter by proving the first $\epsilon$-sensitive monotone lower bounds against a polynomial with $n$ variables that is in VP, for $\epsilon \ge 2^{-\Omega(\sqrt{n})}$. The polynomial we use is the spanning tree polynomial each of whose variables correspond to an edge of a complete graph on $n$ vertices. Our argument exploits the recently found connection by Chattopadhyay et. al. \cite{CDM21}, between the notion of rectangular discrepancy in communication complexity and that of $\epsilon$-sensitive monotone lower bounds. The main contribution of this result is to exhibit a (well known) polynomial in VP for which discrepancy based techniques still work. 

Our two results about the spanning tree polynomial can be naturally interpreted to give the following generalizations of the classical results of Valiant \cite{Val79} and Jerrum and Snir \cite{JS82}: the first result shows that VP contains polynomials that are maximally hard for monotone circuits. The second result shows that VP also contains polynomials that are quite robustly hard for monotone circuits. The work also opens up interesting avenues for further investigations that we describe in the last section. 
\subsection*{Details of Our Results}
Our first result about strongly exponential lower bounds on monotone complexity can be thought of as a 
reinterpretation of the argument of \cite{JS82} in more modern terms combined with the use of expander graphs. The idea of using expander graphs seems natural since expander graphs \emph{approximate} complete graph in intuitive sense of pseudo-randomness. Partly, this is also inspired by the result of Srinivasan \cite{Sri20}. Now, we explain our result in detail.  

Let $G$ be an undirected graph on $n$ vertices and let $\widetilde{G}$ be the directed graph obtained from $G$ which has edges $(u,v)$ and $(v,u)$ (in both directions) for every undirected edge $(u,v)$ in $G$. Consider the directed spanning tree polynomial
\[
\ST_n(\widetilde{G}) = \sum_{\nu \in T_n} x_{2,\nu(2)}x_{3,\nu(3)}\cdots x_{n,\nu(n)},
\]
where $T_n=\{\nu:\{2,3,\ldots,n\} \mapsto \{1,2,\ldots,n\} ~|~ \forall i ~\exists k \ \nu^{k}(i)=1 \ ; \forall i  \ (i,\nu(i)) \in E(\widetilde{G}) \}$. We note that the maps in $T_n$ correspond to directed spanning trees rooted at $1$ and every monomial $\kappa$ of $\ST_n$ is of the form $ x_{2,i_2}x_{3,i_3}\cdots x_{n,i_n}$. It is well-known that for every $G$, $\ST_n(\widetilde{G})$ can be computed even by an algebraic branching program of size $\poly(n)$ \cite{Moon70} via a determinant computation \cite{MV97}. Jerrum and Snir showed that if $G$ is the complete graph, then any monotone circuit for $\ST_n(\widetilde{G})$ must be of size $2^{\Omega(n)}$ \cite{JS82}. Note that, in this case the number of variables is $n^2$.
In contrast, we show the following.
\begin{restatable}{thm}{strongly-exponential}
\label{thm:spanning-tree}
For a sufficiently large constant $d$, let $G$ be a $d$ regular expander graph on $n$ vertices with $\lambda_2 \leq d^{1-\epsilon}$ for some $\epsilon > 0$. Then every monotone circuit for $\ST_n(\widetilde{G})$ must be of size at least $2^{\Omega(n)}$.
\end{restatable}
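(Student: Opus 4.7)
The plan is to follow the template of Jerrum and Snir's classical monotone lower bound for $\ST_n$ on the complete graph, adapting the key counting step to exploit the spectral gap of $G$. First, I would invoke the standard parse-tree decomposition: any monotone circuit of size $s$ computing $\ST_n(\widetilde G)$ yields $\ST_n(\widetilde G) = \sum_{i=1}^s g_i h_i$ with each pair $(g_i, h_i)$ set-multilinear in disjoint balanced index-types $(S_i, \bar S_i)$ partitioning $\{2, \ldots, n\}$, with $|S_i|, |\bar S_i| \in [n/3, 2n/3]$. Since every tree-monomial of $\ST_n$ has coefficient $1$ and the circuit is monotone, each $g_i h_i$'s monomial-support is a combinatorial rectangle $A_i \times B_i$ of partial parent-functions on $S_i$ and $\bar S_i$, with every pair $(\alpha, \beta) \in A_i \times B_i$ describing a valid directed spanning tree of $\widetilde G$. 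Summing, $\tau(G) = \sum_i |A_i|\cdot|B_i| \leq s \cdot \max_i |A_i| \cdot |B_i|$, so the lower bound on $s$ reduces to a rectangle-size bound.

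The heart of the argument will be the following rectangle lemma: for every balanced partition $(S, \bar S)$ of $\{2, \ldots, n\}$ and every rectangle $(A, B)$ of spanning trees of $\widetilde G$ with respect to $(S, \bar S)$, $|A|\cdot|B| \leq \tau(G[S]) \cdot \tau(G[\bar S]) \cdot \poly(n)$. The strategy: for any fixed $\beta^\star \in B$, every $\alpha \in A$ must extend $\beta^\star$ to a spanning tree; contracting $\beta^\star$'s $\bar S$-components reduces the count of such $\alpha$'s to a Matrix-Tree evaluation on an auxiliary graph whose ``free'' structure lives on $S$, giving $|A| \leq \tau(G[S]) \cdot \poly(n)$ after absorbing the $O(n)$-many possible ``exit'' choices from each internal $S$-component to $\bar S$. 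A symmetric argument controls $|B|$.

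The argument closes with the expander calculation. Matrix-Tree applied to $G$ gives $\tau(G) \geq \frac{1}{n}(d - d^{1-\epsilon})^{n-1} = \frac{d^{n-1}}{n} (1 - d^{-\epsilon})^{n-1}$. By the Expander Mixing Lemma, any balanced $S$ induces a subgraph $G[S]$ with $\approx dn/8$ edges, hence average degree $\approx d/2$, and AM-GM on its Laplacian spectrum yields $\tau(G[S]) \leq (d/2)^{|S|-1}/|S|$, so $\tau(G[S])\tau(G[\bar S]) \leq (d/2)^{n-2}/\poly(n)$. Combining, $\tau(G)/(\tau(G[S])\tau(G[\bar S])) \geq \Omega(dn) \cdot 2^{n - O(n/d^\epsilon)}$, which is $2^{\Omega(n)}$ for $d$ a sufficiently large constant (so that $d^\epsilon$ is much larger than $\ln 2$). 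Hence $s \geq 2^{\Omega(n)}$. The main technical obstacle will be proving the rectangle lemma: different $\alpha \in A$ can have very different internal $S$-component structures and different numbers of exit-edges to $\bar S$, and one must use the rectangle condition (compatibility with every $\beta \in B$) to control this variability and keep the Matrix-Tree slack polynomial rather than exponential in $n$.
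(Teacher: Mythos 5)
Your decomposition and the final expander/Matrix--Tree calculation are fine, but the rectangle lemma you propose --- $|A|\cdot|B|\leq\tau(G[S])\cdot\tau(G[\bar S])\cdot\poly(n)$ --- is the crux, and as stated it is false. For constant $d$ and a nearly-balanced $S$, the induced graph $G[S]$ can be disconnected: pick a vertex $v$ and place all $d$ of its neighbours in $\bar S$, so $v$ is isolated in $G[S]$ and $\tau(G[S])=0$, while plenty of valid spanning trees send $v$ to a parent in $\bar S$, so $|A|>0$. You also correctly flag that even for connected $G[S]$ the contraction-against-one-$\beta^\star$ argument lets each of $\alpha$'s internal $S$-components make its own exit, so the slack is $n^{\Theta(n)}$, not $\poly(n)$, when there are $\Theta(n)$ components; it is not clear how to restore polynomial slack just by invoking compatibility with more of $B$. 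The inequality you are aiming for is not the right invariant.

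The paper's proof is substantially simpler and sidesteps this entirely: it never counts spanning trees of induced subgraphs. For each term $a_s b_s$, set $X_t=\{x_{t,j}\,:\,x_{t,j}\in\var(a_s)\cup\var(b_s)\}$ and use the trivial bound $|\mon(a_s b_s)|\leq\prod_{t=2}^n|X_t|$. The only structural fact needed is a $2$-cycle observation: if $i\in I(a_s)$ and $j\in I(b_s)$ then $x_{i,j}$ and $x_{j,i}$ cannot both occur, else $a_s b_s$ contains a monomial with the $2$-cycle $x_{i,j}x_{j,i}$, impossible since $a_s b_s\leq\ST_n(\widetilde G)$. Hence $\sum_t|X_t|\leq dn-|E(I(a_s),I(b_s))|$; the Expander Mixing Lemma gives $|E(I(a_s),I(b_s))|\geq dn/18$; AM--GM then bounds $\prod_t|X_t|$ by roughly $\big((17/18)\cdot d\big)^{n-1}$, while Matrix--Tree gives $\tau(G)\geq\frac{1}{n}(d-d^{1-\epsilon})^{n-1}$. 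Comparing the two finishes, with no per-rectangle structural analysis and no sensitivity to whether $G[S]$ is connected. The upshot: your approach is closer to Jerrum--Snir's original rectangle-size argument, but the lemma you need fails; the paper instead bounds rectangles by a raw variable count plus AM--GM, which is both simpler and robust to exactly the pathologies that sink your version.
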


Now we discuss our second main result. 
Hrube\v{s} showed that if a polynomial $f_n$ of degree $d$ over $n$ variables $\{x_1, \ldots,x_n\}$ is computed efficiently by a circuit, with access to negations, of size $s$, then there exists $\epsilon_0 > 0$, such that for every $\epsilon \le \epsilon_0$, the function $F_n + \epsilon\cdot f_n$ has efficient monotone circuits, where $F_n \coloneqq (1+x_1+\cdots+x_n)^d$ is the polynomial that contains all monomials of degree at most $d$. So, proving monotone circuit lower bound for $F_n + \epsilon\cdot f_n$ for sufficiently small $\epsilon$ would lead to general circuit lower bound for $f_n$.


Most monotone lower bound techniques in the literature are crafted on arguments that cannot distinguish between two polynomials as soon as they are supported on the same set of monomials. Such arguments, not factoring in the set of coefficient values of a target polynomial, cannot obviously work here. An attractive feature of Hrube{\v s}' approach is that the choice of the polynomial $F_n$ is rather flexible. If the target polynomial is set-multilinear over the variables $\{x_{1,1},\ldots, x_{1,m}, \ldots, x_{n,1}, \ldots, x_{n,m}\}$, the choice of $F_n$ could be replaced by $$F_{n,m} \coloneqq \prod_{i=1}^n (x_{i,1}+\cdots+x_{i,m}).$$ 
Indeed, Hrube{\v s} suggests to prove lower bounds using $F_{n,m}$ as the easy polynomial, and setting $f_n$ as the Permanent polynomial.
Such monotone $\epsilon$-sensitive lower bounds would then yield commensurate lower bounds on the size of set-multilinear circuits computing $f_n$. Proving super-polynomial lower bounds even on the size of set-multilinear circuits has remained a longstanding open problem. 

In a recent work, the first $\epsilon$-sensitive lower bound was shown for a family of explicit polynomials in $\VNP$ \cite{CDM21} using discrepancy based arguments. Roughly speaking, the range for the parameter $\epsilon$ that they can handle is $\epsilon\geq 2^{-\Omega(m)}$ where the number of variables in the target set-multilinear polynomial is $O(nm)$ and 
$m\log m \leq n/4$. So the parameter $\epsilon$ is (roughly) at least $2^{-\Omega(\sqrt{\widetilde{n}})}$ when $\tilde{n}$ is the number of input variables. It was not clear at all if $\epsilon$-sensitive lower bounds could be proved when the target polynomial $f_n$ is in VP. Our second main result gives an affirmative answer by establishing exponential size lower bounds for the family of spanning tree polynomials which are  in $\VP$.  



\begin{restatable}{thm}{sensitivespanningtree}
\label{thm:spanning-tree-sensitive}
There exists a constant $\eta > 0$ such that both the polynomials $F_{n-1,n} - \epsilon\cdot \ST_n$ and $F_{n-1,n} + \epsilon\cdot \ST_n$ have monotone circuit complexity $2^{\Omega(n)}$ provided $\epsilon \ge 2^{-\eta n}$, where $\ST_n$ is the spanning tree polynomial defined over the complete graph of $n$ vertices.
\end{restatable}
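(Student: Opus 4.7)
The plan is to invoke the discrepancy-to-monotone-lower-bound framework of \cite{CDM21}, which reduces the task of proving a $2^{\Omega(n)}$ monotone lower bound for $F_{n-1,n} \pm \epsilon \cdot \ST_n$ (with $\epsilon \geq 2^{-\Omega(n)}$) to exhibiting a single distribution $\mu$ on the monomials of $\ST_n$ such that, for every nearly-balanced partition of the row-indices $\{2,\ldots,n\}$ between Alice and Bob, the two-party discrepancy (under $\mu$) of the indicator function ``the assignment encodes a rooted spanning tree of $K_n$'' is at most $2^{-\Omega(n)}$. So the problem is reframed as a best-partition communication discrepancy question on spanning trees, which is precisely the problem flagged in the abstract.

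First, I would make the reduction precise by identifying each monomial $x_{2,\nu(2)} x_{3,\nu(3)} \cdots x_{n,\nu(n)}$ of $\ST_n$ with a rooted spanning tree via the parent map $\nu:\{2,\ldots,n\}\to[n]$; the row index $i$ gives exactly the set-multilinear structure aligned with the factorization of $F_{n-1,n}$, and a partition of rows into $R_A$ and $R_B$ corresponds to splitting oriented edges of $K_n$ by tail vertex. As a candidate for $\mu$, I would take (a natural scaling of) the uniform distribution over rooted spanning trees of $K_n$, of which there are $n^{n-1}$ by Cayley's formula, so that marginals can be computed in closed form via the matrix-tree theorem, and the vertex-transitive symmetry of $K_n$ lets one hope that the discrepancy depends on a partition only through its size. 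For a \emph{fixed} balanced partition $(R_A, R_B)$, the discrepancy would be bounded by the spectral norm of an appropriate signed communication matrix, ultimately relying on the fact that a uniformly random spanning tree restricted to Alice's rows carries enough conditional entropy given Bob's rows to shatter any combinatorial rectangle of $\mu$-measure $\geq 2^{-\Omega(n)}$.

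The main obstacle is the \emph{best partition} aspect: unlike standard communication lower bounds, one cannot tailor $\mu$ to each partition, so the same fixed $\mu$ must defeat exponentially many partitions simultaneously. I expect to exploit the permutation invariance of the uniform spanning-tree measure under $S_n$ to collapse the worst case over partitions down to a single canonical one (equivalently, to argue that only the cardinality of $R_A$ matters), and then to control the discrepancy of that canonical partition through a combinatorial analysis of cuts in a random spanning tree---likely via sharp concentration of the number of tree-edges whose tail lies in $R_A$, coupled with an electrical-network / transfer-current style argument that no large near-monochromatic rectangle can survive. Once this uniform-over-partitions discrepancy bound of $2^{-\Omega(n)}$ is established, the reduction from \cite{CDM21} is sign-agnostic and immediately yields the monotone lower bounds for both $F_{n-1,n} - \epsilon\cdot\ST_n$ and $F_{n-1,n} + \epsilon\cdot\ST_n$ in one stroke.
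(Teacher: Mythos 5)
Your high-level framing is right: you correctly reduce the theorem to exhibiting a single ``universal'' distribution under which the spanning-tree communication problem has exponentially small discrepancy for \emph{every} nearly-balanced partition of the rows, and you correctly identify the quantifier flip (one distribution versus all partitions) as the central difficulty, since the monotone decomposition hands each summand its own partition. But the concrete plan for constructing and analyzing that distribution does not work, and the paper does something quite different.

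The immediate problem is your proposed distribution. You suggest (a scaling of) the uniform distribution over rooted spanning trees, i.e. supported on the monomials of $\ST_n$. But the correspondence between discrepancy and $\epsilon$-sensitivity requires the universal distribution $\Delta$ to split its mass \emph{roughly evenly} between monomials of $\ST_n$ and non-monomials: the proof that $|\mathcal{M}(g)|$ is large needs $\sum_{\kappa \in S_1}\Delta(\kappa) - \sum_{\kappa \in S_2}\Delta(\kappa)$ to be at most $\gamma$, hence $\sum_{\kappa \in S_1}\Delta(\kappa) \approx 1/2$. If $\Delta$ lives entirely on spanning trees, then $C_P^f$ is identically $1$ on the support and the discrepancy is $1$, not $2^{-\Omega(n)}$; if instead $\Delta$ is uniform on all of $[n]^{n-1}$, only an $O(1/n)$ fraction of maps are trees, so $C_P^f$ is heavily biased and again has discrepancy $\Omega(1)$ via the all-ones rectangle. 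Neither choice meets the balance condition the transfer theorem requires. Moreover your appeal to $S_n$-symmetry to ``collapse'' to a canonical partition only reduces the quantifier over partitions; it does not tell you that the canonical partition has small discrepancy, and the heuristics offered (high conditional entropy, shattering rectangles, electrical networks) do not bound discrepancy --- high entropy is perfectly compatible with $\Omega(1)$ discrepancy, e.g.\ for EQUALITY.

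What the paper actually does is construct $\Delta$ so that it \emph{manufactures} low discrepancy rather than hoping the uniform measure has it. A gadget graph $G_{x,y}$ on $\Theta(n)$ vertices encodes an $\IP_n$ instance $(x,y)$ as a set of oriented parent-edges, with the property that the edges form a spanning tree iff $\langle x,y\rangle = 1 \pmod 2$. The universal distribution $\Delta$ first samples a uniformly random permutation $\pi$ of the vertex set fixing the root, then samples $(x,y)$ uniformly and embeds $G_{x,y}$ under $\pi$, routing the leftover vertices directly to the root. Because $\IP_n$ is balanced, $\Delta$ puts mass exactly $1/2$ on trees, as required. Then, for any nearly-balanced partition $P$ of the vertices, one argues that with probability $1 - 2^{-\Omega(n)}$ over $\pi$, a constant fraction $\alpha n$ of the gadget layers are ``honoured'' (each such layer has its two $a$-vertices on Alice's side and its two $b$-vertices on Bob's side); conditioned on $\pi$ and on the bits of $(x,y)$ in the dishonoured layers, the communication problem restricted to the honoured layers is exactly $\IP_{\alpha n}$ (or its complement) under the uniform distribution, which has discrepancy $2^{-\Omega(n)}$. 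Averaging over the conditioning and adding back the $2^{-\Omega(n)}$ probability of a bad $\pi$ gives the universal discrepancy bound, and Theorem~\ref{thm-sensitivity-intro} then yields the theorem for both signs. Your proposal is missing precisely this gadget-plus-random-embedding mechanism, which is what makes the single distribution robust to all partitions; without it there is no route from ``uniform spanning tree'' to an exponentially small discrepancy bound.
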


As far as we can tell, there are just two instances in the past where discrepancy has been used for proving monotone lower bounds in arithmetic complexity. First, by Raz and Yehudayoff \cite{RY11} to establish strongly exponential lower bounds, among other things. Second, very recently, by Chattopadhyay et.al. \cite{CDM21} who used it to prove $\epsilon$-sensitive monotone lower bounds. In both instances, arguments seemingly specific to the target polynomial was used. A conceptual contribution of our argument is to formulate a simple but quite general transfer principle, that translates discrepancy bounds under a \emph {universal distribution} to the setting of $\epsilon$-sensitive lower bounds for set-multilinear polynomials. To describe this, let $X$ denote the matrix of variables $\{x_{i,j}\}_{1\leq i\leq n, \leq j\leq m}$. A monomial $\kappa_{\nu}$ over $X$ is naturally identified with a map $\nu : [n]\rightarrow [m]$ such that $\kappa_{\nu} = \prod_{i=1}^n x_{i,\nu(i)}$. Let $P$ be a nearly-balanced partition of $[n]$ into the sets $A$ and $B$ (i.e. $n/3 \leq |A|, |B|\leq 2n/3$). Alice and Bob respectively get two maps $\tau : A\rightarrow [m]$ and $\theta : B\rightarrow [m]$. They jointly want to solve the communication problem $C_P^{f}$ which outputs $1$ if and only if $\kappa_{\tau}\cdot \kappa_{\theta}$ is a monomial in the set-multilinear polynomial $f$. Let $\Delta$ be a distribution on the space $[m]^n$ of monomials. The \emph{best partition discrepancy} w.r.t. $\Delta$ is defined as follows:  
\[\disp_{\Delta}(C^f) \coloneqq \max\limits_{\substack{(A,B)\ \text{is a nearly-balanced}\\{\text{partition of}}\ [n]}}\{\disp_{\Delta_A,\Delta_B}(C_P^f)\}.\]

Here $\Delta_A$ and $\Delta_B$ are the induced distributions by $\Delta$ on $[m]^A$ and $[m]^B$ respectively.  
Small discrepancy with respect to every balanced partition translates to the following lower bound statement. 
\begin{restatable}[Discrepancy-Sensitivity Correspondence]{thm}{Discrepancy}
\label{thm-sensitivity-intro}
Consider a distribution $\Delta$ over $[m]^n$. Then, the monotone complexity of $F_{n,m} - \epsilon\cdot f$  ($\text{resp}. ~F_{n,m} + \epsilon \cdot f$) is at least $\frac{\epsilon}{3\gamma}$ ($\text{resp}.\ \frac{\epsilon}{6\gamma}$) as long as $\epsilon \geq\frac{6\gamma}{1-3\gamma}$ ($\text{resp.} ~\epsilon\geq \frac{6\gamma}{1-12\gamma}$), where $\gamma\coloneqq \disp_{\Delta}(C^f)$.
\end{restatable}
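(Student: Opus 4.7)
My plan is to combine a standard rectangle decomposition of monotone set-multilinear circuits with a discrepancy-based averaging argument. Suppose for contradiction that $P := F_{n,m} - \epsilon \cdot f$ is computed by a monotone circuit of size $s$ (the $+$ case is analogous). Using a Hyafil--Valiant style depth reduction specialized to monotone set-multilinear computation, I write
\[
P = \sum_{i=1}^{s} g_i \cdot h_i,
\]
where each $g_i, h_i$ is a monotone set-multilinear polynomial over a nearly-balanced partition $(A_i, B_i)$ of $[n]$ with $n/3 \leq |A_i|, |B_i| \leq 2n/3$. Let $c_i(\nu) := g_{i,\nu|_{A_i}} \cdot h_{i,\nu|_{B_i}} \geq 0$ denote the coefficient of $\kappa_\nu$ in $g_i h_i$. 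Since $\sum_i c_i(\nu) = P_\nu \in \{1, 1-\epsilon\}$, one has $c_i(\nu) \leq 1$ pointwise, and hence $\|g_i\|_\infty \cdot \|h_i\|_\infty \leq 1$. For $F_{n,m} + \epsilon \cdot f$ the analogous bound is $1+\epsilon$, which is the source of the factor-of-two gap in the two constants.

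Next I apply a layer-cake discrepancy argument. Take the sign test $\chi : [m]^n \to \{\pm 1\}$ with $\chi(\nu) = +1$ if $\nu \in \supp(f)$ and $-1$ otherwise. The hypothesis $\disp_\Delta(C^f) \leq \gamma$ means that for every nearly-balanced partition $(A,B)$ and every combinatorial rectangle $R_A \times R_B$,
\[
\left|\sum_{(\tau,\theta) \in R_A \times R_B} \Delta(\tau,\theta)\,\chi(\tau,\theta)\right| \leq \gamma.
\]
Writing $g_{i,\tau} = \int_0^{\|g_i\|_\infty} \mathbf{1}[g_{i,\tau} \geq r]\,dr$ and similarly for $h_i$, and integrating over level sets (each of which is a Cartesian rectangle), the rectangle bound upgrades to
\[
\left|\sum_\nu \Delta(\nu)\chi(\nu)c_i(\nu)\right| \leq \|g_i\|_\infty \cdot \|h_i\|_\infty \cdot \gamma \leq \gamma.
\]
Summing over $i$ gives the circuit-side inequality $\left|\sum_\nu \Delta(\nu)\chi(\nu)P_\nu\right| \leq s\gamma$; for the $+$ case the right-hand side becomes $s(1+\epsilon)\gamma$.

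To close the argument I evaluate the left-hand side explicitly. Using $\chi(\nu) = 2\cdot\mathbf{1}[\nu \in \supp(f)] - 1$, $P_\nu = 1 - \epsilon\cdot\mathbf{1}[\nu \in \supp(f)]$, and $\mathbf{1}^2 = \mathbf{1}$, a direct computation yields
\[
\sum_\nu \Delta(\nu)\chi(\nu)P_\nu = (2-\epsilon)\Pr_{\nu\sim\Delta}[\nu \in \supp(f)] - 1.
\]
Under the hypothesis $\epsilon \geq 6\gamma/(1-3\gamma)$, I argue that either this quantity, or a suitable convex combination of $\chi$ with the constant test $1$, has absolute value at least $\epsilon/3$. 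Combining with the circuit-side bound from the previous paragraph then produces $s \geq \epsilon/(3\gamma)$; the $+$ case is identical modulo the weaker coefficient bound, replacing $3$ with $6$.

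The main technical obstacle is the last step. The natural correlation $(2-\epsilon)p - 1$, where $p := \Pr_\Delta[\nu \in \supp(f)]$, can vanish when $p$ lands near $1/(2-\epsilon)$, so a single sign test cannot work for every $\Delta$. The hypothesis $\epsilon \geq 6\gamma/(1-3\gamma)$ is precisely calibrated so that some bounded combination of test functions achieves correlation $\Omega(\epsilon)$ with $P$ uniformly over all admissible $\Delta$. Carrying out the case analysis cleanly in $p$ — and verifying that the corresponding analysis for the $+$ case yields $\epsilon/(6\gamma)$ under $\epsilon \geq 6\gamma/(1-12\gamma)$ — is what determines the numerical constants in the statement and is the heart of the discrepancy-to-sensitivity correspondence.
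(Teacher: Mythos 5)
Your overall strategy matches the paper's: decompose via the monotone structure theorem, bound each product by discrepancy after a reduction to 0--1 rectangles, and compute the target-side correlation. Your layer-cake integration over level sets plays the same role as the paper's finite decomposition of a monotone product into a positive combination of 0--1 rectangular products (Lemma~\ref{rect-poly}); both routes are sound and give $\lvert\mathcal{M}(a\cdot b)\rvert \le \|a\cdot b\|_\infty \cdot \gamma$. Your formula $(2-\epsilon)p - 1$ for the correlation with $g = F_{n,m} - \epsilon f$, where $p = \Pr_{\nu\sim\Delta}[\nu\in\supp(f)]$, is also exactly what the paper has: $\mathcal{M}(g) = \mathcal{M}(F_{n,m}) - \epsilon\mathcal{M}(f) = (2p-1) - \epsilon p$.

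The gap is in the final step, where you worry that $(2-\epsilon)p - 1$ could vanish and propose, without carrying it out, that ``some bounded combination of test functions'' must save the day. No such combination is needed, and this worry is where you part ways with the paper. The missing observation is that $F_{n,m} = \prod_{i=1}^n(x_{i,1}+\cdots+x_{i,m})$ is itself a nearly-balanced ordered product polynomial with $\|\cdot\|_\infty = 1$: split the $n$ linear factors into two groups of sizes $n/3$ to $2n/3$ each. Your circuit-side discrepancy bound therefore applies to $F_{n,m}$ directly (with $s=1$), yielding $\lvert\mathcal{M}(F_{n,m})\rvert = \lvert 2p-1\rvert \le \gamma$. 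This pins $p \in \left[\frac{1-\gamma}{2}, \frac{1+\gamma}{2}\right]$, so the would-be zero $p = 1/(2-\epsilon) \approx \tfrac12 + \tfrac{\epsilon}{4}$ lies \emph{outside} the admissible range whenever $\epsilon$ exceeds roughly $2\gamma$. Concretely, $\lvert\mathcal{M}(g)\rvert \ge \epsilon p - \lvert 2p-1\rvert \ge \epsilon\cdot\frac{1-\gamma}{2} - \gamma$, and the hypothesis $\epsilon \ge \frac{6\gamma}{1-3\gamma}$ is exactly what makes this $\ge \epsilon/3$. So there is no case analysis in $p$ and no need for an auxiliary test: the discrepancy hypothesis itself, applied to the ``easy'' polynomial $F_{n,m}$, constrains $\Delta$ to put nearly half its mass on $\supp(f)$, and that is the whole point. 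Your $+$ case reasoning (larger $\|\cdot\|_\infty$ costing a factor of two) is the right instinct for where $\epsilon/(6\gamma)$ comes from, but again closes only once you have the $p$-constraint in hand.
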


The reason we need to handle the discrepancy for every nearly-balanced partition stems from the fact that in the decomposition theorem of monotone circuits each product polynomial $a\cdot b$ appears with its own set of nearly-balanced partition of the variables. Now to use the theorem above, all we need to do is to craft a distribution $\Delta$ for the spanning tree problem such that the discrepancy remains small (under $\Delta$) for every nearly-balanced partition $P$.   

We give a sketch of the key ideas. Firstly, consider the following communication problem in the basic 2-party model (spanning tree problem): Given a vertex set $V$ partitioned into two nearly-balanced sets $V_1$ and $V_2$ ($|V_1|, |V_2|\geq |V|/3$), Alice and Bob receive 
two functions $\tau : V_1\rightarrow V$ and $\theta: V_2\rightarrow V$. They want to jointly decide whether the oriented edges given by $\tau$ and $\theta$ form a spanning tree rooted at a special vertex designated as the root. Using a graph gadget, we show that the standard inner product function can be reduced to this problem via a rectangular reduction. 

If the inputs for the inner product function are sampled according to the uniform distribution $U_{n}$, the reduction induces a distribution $\mathcal{D}_P$ on the set of functions $(\tau,\theta)$ when the partition $P$ is fixed.
Using the low discrepancy of inner product under the uniform distribution, one can directly infer that the spanning tree problem also has low discrepancy under the distribution $\mathcal{D}_P$. 

However to make it useful in the context of monotone circuit lower bounds, we need to find a \emph{single distribution} that handles \emph{all} possible (nearly-balanced) partitions as mandated by our Discrepancy-Sensitivity Correspondence. For instance, if Inner-Product is defined as $x_1 y_1 + \cdots + x_n y_n \text{ (mod }2\text{)}$, then although the partition in which Alice gets $x_1,\ldots,x_n$ and Bob $y_1,\ldots,y_n$ has exponentially small discrepancy w.r.t the uniform distribution, the perfectly balanced partition in which Alice gets $x_1,y_1,\ldots,x_{n/2},y_{n/2}$ and Bob the remaining $n/2$ pairs has large, i.e. $O(1)$ discrepancy w.r.t. \emph{every} distribution. 

Nevertheless, we are able to show exponentially small discrepancy for the spanning tree problem w.r.t. every nearly-balanced partition, by a random embedding of our gadget graph in the complete graph on $n$ vertices. This enables designing a suitable distribution $\Delta$ on the set of maps $[n]^{n-1}$ as stated in the following lemma that is the main technical contribution of our second result.

\begin{lemma}[Informal]\label{lem:variable-dist-intro}
There exists a  distribution $\Delta$ on the set of maps from $\{2,3,\ldots,n\}$ to $[n]$ such that $\disp_{\Delta}\big(C^{\ST_{n}}\big)$ is $2^{-\Omega(n)}$.
\end{lemma}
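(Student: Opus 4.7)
The plan is to realise $\Delta$ as a two-level random process: a rectangular reduction from an Inner Product function $\IP_k$ on $2k$ bits to the spanning-tree predicate via a carefully chosen \emph{gadget graph} $H$, composed with a uniformly random embedding of $H$ into $[n]$. The random embedding is essential, because for any \emph{fixed} reduction one can exhibit a nearly-balanced partition of $[n]$ (for instance, a ``pair-wise'' partition that breaks the canonical Alice/Bob split of IP) under which the best-partition discrepancy is $\Omega(1)$. Random relabelling effectively converts every adversarial nearly-balanced partition of $[n]$ into a uniformly random partition of $V(H)$, which is benign in expectation.

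Concretely, I would first construct a gadget graph $H$ on $\Theta(n)$ vertices with a designated root $r$, whose non-root vertex set is organised as $k = \Theta(n)$ pairs $\{(u_i, v_i)\}_{i \in [k]}$ together with $O(n)$ auxiliary ``spine'' vertices. The gadget should have the \emph{local} property that the admissible choices of parent for $u_i$ (resp.\ $v_i$) encode a bit $x_i$ (resp.\ $y_i$), and the \emph{global} property that the resulting oriented edges form a rooted spanning tree of $H$ iff $\IP_k(x,y)$ takes a prescribed value. A natural template is a spine of auxiliary vertices, each of which collects a parity $x_i \oplus y_i$ through a shared auxiliary vertex, with the spine's in-degree constraints translating the cumulative parity into $\IP_k$. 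The distribution $\Delta$ then samples $(x,y) \in \{0,1\}^{2k}$ uniformly, samples a uniformly random bijection $\pi : V(H) \to [n]$ with $\pi(r) = 1$, and outputs the tree in $K_n$ obtained by relabelling the gadget's edges through $\pi$ (the remaining non-gadget vertices of $[n]$ are attached to $r$ trivially).

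To bound $\disp_{\Delta}(C^{\ST_n})$, fix an arbitrary nearly-balanced partition $P = (A, B)$ of $\{2,\ldots,n\}$. Since $\pi$ is a uniformly random bijection and $|A|, |B| \geq n/3$, for each coordinate pair $(u_i, v_i)$ the probability that $\pi(u_i)$ and $\pi(v_i)$ end up on opposite sides of $P$ is bounded below by a positive constant. A Chernoff bound then gives that the number of ``cut pairs'' is $\Omega(n)$ with probability $1 - 2^{-\Omega(n)}$ over $\pi$. Conditioning on this high-probability event, the induced communication matrix on the cut coordinates is precisely an $\IP$-instance on $\Omega(n)$ bits under its standard Alice/Bob partition; the uncut pairs contribute only deterministic offsets locally computable by one of the parties, and therefore do not inflate the discrepancy. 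The uniform distribution on $(x,y)$ restricts to the uniform distribution on the cut bits, so the classical $\IP$ discrepancy bound yields rectangle mass at most $2^{-\Omega(n)}$; absorbing the $2^{-\Omega(n)}$ contribution of the bad embeddings by the triangle inequality finishes the bound.

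The main obstacle I anticipate is the gadget design itself: realising the global spanning-tree predicate as a rectangular reduction from $\IP_k$ that is \emph{coordinate-local}, so that cutting a random subset of the pairs cleanly isolates an $\IP$-instance on the cut coordinates without entangling them with the partition. Spanning-tree membership is a global connectivity and acyclicity constraint, while the reduction must respect the product structure to survive random embedding. Making the gadget look essentially like a disjoint union of simple two-vertex XOR gadgets glued to a single spine that enforces the one global parity check $\IP_k(x,y) = b$ seems the most promising route; verifying that this gadget truly induces a rectangular reduction (with the right support structure for $\Delta$) and that the uncut pairs only add local offsets will be the technical heart of the argument.
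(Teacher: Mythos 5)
Your high-level plan is essentially the paper's plan: sample $(x,y)$ uniformly, compose a fixed $\IP$-to-spanning-tree reduction with a uniformly random embedding $\pi$ into $[n]$, show that any nearly-balanced partition $P$ honours $\Omega(n)$ of the gadget coordinates except with probability $2^{-\Omega(n)}$ over $\pi$, argue that conditioning on a good $\pi$ and fixing the dishonoured coordinates leaves an $\IP$-instance on the honoured coordinates, invoke the Chor--Goldreich bound, and absorb the bad $\pi$'s by the triangle inequality. That is precisely the structure of Lemmas~\ref{bad-prob} and~\ref{lem:variable-dist}.

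However, there is a concrete gap at the one place you flag as the ``technical heart,'' and the sketch you do give of the gadget is in fact wrong. You describe the gadget as ``collect[ing] a parity $x_i \oplus y_i$ through a shared auxiliary vertex'' and letting ``the spine's in-degree constraints translat[e] the cumulative parity into $\IP_k$.'' But $\IP_k(x,y) = \bigoplus_i (x_i \wedge y_i)$, a XOR of ANDs, not a XOR of XORs. The function $\bigoplus_i (x_i \oplus y_i)$ factors as $(\bigoplus_i x_i) \oplus (\bigoplus_i y_i)$ and hence is computable locally by each party: it has discrepancy $\Omega(1)$ under every distribution, and a reduction targeting it buys you nothing. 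What you actually need is a gadget in which the coordinate-$i$ block ``flips state'' \emph{iff both} $x_i = 1$ \emph{and} $y_i = 1$. The paper's sub-gadget $G_i$ (Section~\ref{gadget reduction}, Figure~1) achieves exactly this by using two Alice-vertices $a_{i,1},a_{i,2}$ and two Bob-vertices $b_{i,1},b_{i,2}$ arranged in two ``layers'': Alice's bit $x_i$ decides whether the $a$-edges route through the $b$-vertices at all, and Bob's bit $y_i$ decides whether the $b$-edges cross layers. The layer therefore swaps exactly when $x_i \wedge y_i = 1$, the cumulative swap parity is $\IP$, and the terminal sub-gadget $G_{n+1}$ converts that parity into spanning tree vs.\ cycle. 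A single pair $(u_i,v_i)$ per coordinate, as in your sketch, does not give you enough freedom to realize the AND. Correspondingly, the paper's ``honour'' condition (Definition~\ref{honour}) is stronger than your ``cut pair'' condition: it requires that $\pi$ send \emph{both} of $a_{i,1},a_{i,2}$ to Alice's side \emph{and both} of $b_{i,1},b_{i,2}$ to Bob's side, so that Remark~\ref{good} holds and the honoured coordinate genuinely reduces to one clean $\IP$ bit after fixing the dishonoured ones. Once you build a gadget that encodes $x_i \wedge y_i$ rather than $x_i \oplus y_i$ and strengthen the honour condition accordingly, the rest of your argument (Chernoff-type concentration on honoured coordinates, restriction to an $\IP$ or $\overline{\IP}$ sub-instance, triangle inequality) goes through as in the paper.
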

This result could be of independent interest even beyond algebraic complexity theory. 


\subsection*{Organization.}
In Section \ref{sec:prelim}, we collect some basic results from monotone arithmetic computation and from communication complexity. We prove a strongly exponential monotone circuit size lower bound for the spanning tree polynomial in Section \ref{sec:spanning-tree}. 
In Section \ref{sec:sensitive}, we develop a general framework for proving $\epsilon$-sensitive lower bounds for set-multilinear polynomials via a suitable discrepancy based technique. We prove the low discrepancy of spanning tree problem under a certain universal distribution in Section \ref{sec: low-discrepancy}. 
We then use the low discrepancy result to complete the $\epsilon$-sensitive lower bound result for the spanning tree polynomial in Section \ref{sec:lb-spanning-tree-polynomial}. We state some open problems in Section \ref{sec:openproblem}.

\section{Preliminaries}\label{sec:prelim}
\subsubsection*{Notation} 
Let $[n]=\{1,2,\ldots, n\}$. Polynomials are always considered over $\R[X]$ where $\R$ is the set of reals. 
For a polynomial $p$, let $\var(p)$ denote the set of variables in $p$. 

\subsubsection*{Set-multilinear Polynomials}
Let $X=\cup_{i=1}^n X_i$ be a set of variables where $X_i=\{x_{i,1}, x_{i,2}, \ldots, x_{i,m}\}$. A polynomial $p\in\R[X]$ is set-multilinear if each monomial in $p$ respects the partition given by the set of variables $X_1, X_2, \ldots, X_n$. In other words, each monomial $\kappa$ in $p$ is of the form $x_{1, j_1} x_{2, j_2}\cdots x_{n, j_n}$. 


\subsubsection*{Ordered Polynomial} 
For a monomial of the form $\kappa=x_{i_1, j_1} x_{i_2, j_2} \cdots x_{i_n, j_n}$ we define the set $I(\kappa)=\{ i_1,i_2,\ldots,i_n\}$. If a polynomial $p$ has the same set $I(\kappa)$ for every monomial occurring it it with a nonzero coefficient, then we say that the polynomial is ordered and we write $I(p)=I(\kappa)$ for each $\kappa$. Clearly, the set-multilinear polynomials are ordered polynomials with $I(p)=\{1,2,\ldots,n\}$. 

\subsubsection*{Structure of Monotone Circuits}
The main structural result for monotone circuits that we use throughout, is the following theorem. 

\begin{theorem}\cite[Lemma 1]{Yehudayoff19}\label{thm:monotone-structure}
Let $n>2$ and $p \in \mathbb{R}[X]$ be an ordered monotone polynomial with $I(p)=[n]$. Let $C$ be a monotone circuit of size $s$ that computes $p$.  Then, we can write 
$$p=\sum^{s}_{t=1} a_t \cdot b_t $$
where $a_t$ and $b_t$ are monotone ordered polynomials with $\frac{n}{3} \leq |I(a_t)| \leq \frac{2n}{3}$ and $I(b_t)= [n]\setminus I(a_t)$. Moreover, $a_t b_t\leq p$ for each $1\leq t\leq s$, by which we mean that the coefficient of any monomial in $a_t b_t$ is bounded by the coefficient of the same monomial in $p$.
\end{theorem}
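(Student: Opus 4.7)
The plan is to read the decomposition off the circuit $C$ by tracing, for each monomial of $p$, a \emph{parse tree} in $C$ and localising it at a balanced product gate. Monotonicity of $C$ guarantees no cancellations: every monomial $\kappa$ of $p$ arises as a nonnegative sum over parse trees, where a parse tree is obtained by selecting one child at each sum gate and keeping both children at each product gate; its contribution is the nonnegative product of the constants read at the leaves.

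I would first show that every parse tree $T$ admits a \emph{balancer}: a product gate $g(T)$ such that the sub-monomial $\kappa_T$ derived at $g(T)$ satisfies $n/3 \leq |I(\kappa_T)| \leq 2n/3$. Such a gate is produced by a heavy-path walk starting at the root, where $|I|=n$. At each sum gate in $T$ there is a unique chosen child; at each product gate, descend into the child whose derived sub-monomial has the larger $I$-set. Along the walk $|I|$ is non-increasing, starts at $n$, ends at $1$, and at a product gate can drop by a factor of at most $2$ (since we picked the heavier child). Hence the first node on the walk whose $|I|$ falls into $[n/3,2n/3]$ is a product gate, which we take as $g(T)$.

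Next, I would partition the parse trees of $p$ by their balancer and, within each class, further by the specific $I$-set $I$ at the balancer child. For each product gate $g=g_1\cdot g_2$ of $C$ and each admissible $I$, define $a_{g,I}$ as the ordered polynomial whose $\kappa_1$-coefficient is the total weight of sub-parse-trees rooted at $g_1$ with sub-monomial $\kappa_1$ whose parent $T$ has $g(T)=g$ and balancer $I$-set $I$; and define $b_{g,I}$ dually using the parse-tree contributions of $g_2$ composed with the context of $g$ above it in $C$. By construction $a_{g,I}$ and $b_{g,I}$ are monotone; the balancer property yields $I(a_{g,I})=I$, $I(b_{g,I})=[n]\setminus I$, and $n/3\leq|I|\leq 2n/3$. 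Summing across classes gives $p=\sum_{g,I}a_{g,I}\,b_{g,I}$.

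The main obstacle will be to collapse this sum to at most $s$ terms, that is to one pair per gate of $C$. A priori, several $I$-sets may arise at a single balancer gate across different parse trees; I would eliminate this by fixing a deterministic tie-breaking rule in the heavy-path walk (for instance, always preferring the lexicographically smaller $I$-set when the heavier-child rule is indifferent), so that each gate $g$ gets assigned a single $I$-set $I_g$ and hence a single pair $(a_g,b_g)$. This uses the fact that $p$ is ordered with root $I$-set $[n]$, so the walk is anchored by a common starting configuration. The inequality $a_t\,b_t\leq p$ then follows monomial-by-monomial: every monomial of $a_t\,b_t$ is $\kappa_1\kappa_2$ for some parse tree in the class assigned to $t$, and its coefficient in $a_t\,b_t$ is a partial sum of nonnegative parse-tree contributions to the coefficient of the same monomial in $p$, hence is bounded by it.
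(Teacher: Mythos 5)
Your high-level plan---tracing parse trees, descending along a heavy path to a balanced gate, and grouping contributions by that gate---is the standard route to this decomposition lemma (the paper cites it as Yehudayoff's Lemma~1 without reproving it). There are two issues with the execution, the second of which is a genuine gap.

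First, a bookkeeping slip in locating the balancer. Along the walk, $|I|$ decreases only at product gates and there by at most a factor of two, so the first node $v$ with $|I_v|\leq 2n/3$ does satisfy $|I_v|>n/3$; but $v$ itself need not be a product gate---only its \emph{parent} $u$ must be. The split should pair $v$ (the heavier child of $u$, which carries the balanced $I$-set) against its lighter sibling together with the context above $u$. As written, your $a_{g,I}$ is built from the sub-parse-trees at a \emph{child} $g_1$ of $g$, so $|I(a_{g,I})|$ sits strictly below $|I(\kappa_T)|$, and the balance you established for $\kappa_T$ does not transfer to $a_{g,I}$.

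Second, and more seriously, the collapse to at most $s$ terms cannot be delivered by a deterministic tie-break in the heavy-path rule. A tie-break only decides which parse trees arrive at which balancer gate; it does not control which $I$-set a parse tree exhibits once it is there. Two parse trees localised at the same gate $g$ could, a priori, carry sub-monomials with different $I$-sets, and preferring a lexicographically smaller $I$-set among equal-size children does nothing to exclude this. The fact you actually need is structural: in a monotone circuit computing an ordered polynomial with $I(p)=[n]$, every gate that contributes to the output itself computes an ordered polynomial (after pruning dead gates). This follows from a no-cancellation argument: if $\mu_1,\mu_2$ are surviving monomials of the polynomial at $g$ and $\nu$ is a surviving context monomial for $g$, monotonicity forces both cross-products $\mu_1\nu$ and $\mu_2\nu$ to appear in $p$ with positive coefficients, hence both carry $I$-set $[n]$ with the two pieces disjoint, which forces $I(\mu_1)=I(\mu_2)$. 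It is that observation---not the tie-breaking rule---that assigns each gate a single $I$-set and therefore a single pair $(a_g,b_g)$. Without it the number of $(g,I)$ classes is not bounded by $s$, and the $s$-term conclusion does not follow.
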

A partition $P=(A,B)$ of $[n]$ is said to be perfectly balanced if $|A|=|B|=\frac{n}{2}$ and is said to be nearly-balanced if $\frac{n}{3}\leq|A|,|B|\leq\frac{2n}{3}$. 
An ordered product polynomial $a\cdot b$ on $n$ variables is said to be nearly-balanced if $\frac{n}{3}\leq|I(a)|,|I(b)|\leq \frac{2n}{3}$.

\subsection*{Communication Complexity}
We recall some basic results from communication complexity. The details can be found in \cite{KN06}.
Let us very briefly first recall the basic notions in the 2-party communication model of Yao. The joint input space of Alice and Bob is $\{0,1\}^m \times \{0,1\}^m$ with each player receiving an $m$-bit Boolean string, and they want to evaluate a Boolean function $F : \{0,1\}^m \times \{0,1\}^m \to \{0,1\}$. One defines a combinatorial rectangle $R$ as a product set $A \times B$, for some $A,B \subseteq \{0,1\}^m$. Put another way, $R$ is just a submatrix of the $2^m \times 2^m$ communication matrix $M_F$ of the function $F$, that Alice and Bob want to compute. The rows of this matrix are indexed by possible inputs of Alice and the columns by the ones of Bob and $M_F(x,y) = F(x,y)$. One of the important notions is discrepancy. For a rectangle $R$, the discrepancy $\disp_{\delta}(F,R) \coloneqq \left| \delta(R\cap F^{-1}(0)) - \delta(R\cap F^{-1}(1))\right|$ where $\delta$ is a distribution on the input space $\{0,1\}^m\times \{0,1\}^m$. The discrepancy of $F$ under $\delta$ is defined as 
$$\disp_{\delta}(F)\coloneqq\max_{R}\disp_{\delta}(F,R)$$.

The inner product function $\IP_m(x,y)\coloneqq\sum_{i=1}^m x_i y_i \mod 2$. It is well-known that the discrepancy of the inner product function is small under the uniform distribution $U_{m}$ over $\{0,1\}^m\times \{0,1\}^m$. This was first proved by Chor and Goldreich \cite{CG88}. A self-contained proof can be found in \cite{KN06}.  

\begin{theorem}\cite[Example  3.29]{KN06}\label{thm:ip-discrepancy}
Under the uniform distribution $U_{m}$ over $\{0,1\}^m\times \{0,1\}^m$, $\disp_{U_{m}}(\IP_m)=2^{-\Omega(m)}$.
\end{theorem}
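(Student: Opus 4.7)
The plan is to prove this via the classical Hadamard/Fourier-analytic argument. The key object is the $2^m \times 2^m$ sign matrix $H$ with entries $H_{x,y} = (-1)^{\langle x,y \rangle \bmod 2}$, which is (up to normalization) the Hadamard matrix of order $2^m$. I would first rewrite the discrepancy of $\IP_m$ on a rectangle $R = A \times B$ in matrix form: since $U_m$ puts weight $2^{-2m}$ on each input,
\[
\disp_{U_m}(\IP_m, R) \;=\; 2^{-2m}\,\bigl|\mathbf{1}_A^{\top} H\, \mathbf{1}_B\bigr|,
\]
where $\mathbf{1}_A, \mathbf{1}_B \in \{0,1\}^{2^m}$ are the characteristic vectors of $A$ and $B$. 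So it suffices to bound $|\mathbf{1}_A^{\top} H \mathbf{1}_B|$ uniformly over all $A,B \subseteq \{0,1\}^m$.

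Second, I would apply Cauchy--Schwarz to split Alice's side from Bob's side: $|\mathbf{1}_A^{\top} H \mathbf{1}_B| \le \|\mathbf{1}_A\|_2 \cdot \|H\mathbf{1}_B\|_2$, and then use orthogonality of $H$, namely $H^{\top} H = 2^m I$, to compute
\[
\|H \mathbf{1}_B\|_2^2 \;=\; \mathbf{1}_B^{\top} H^{\top} H\, \mathbf{1}_B \;=\; 2^m \cdot |B|.
\]
Combined with the trivial bounds $|A|, |B| \le 2^m$, this gives $|\mathbf{1}_A^{\top} H \mathbf{1}_B| \le \sqrt{|A| \cdot 2^m |B|} \le 2^{3m/2}$, so that $\disp_{U_m}(\IP_m, R) \le 2^{3m/2}/2^{2m} = 2^{-m/2}$. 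Taking the maximum over all rectangles $R$ yields $\disp_{U_m}(\IP_m) \le 2^{-m/2} = 2^{-\Omega(m)}$, as desired.

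There is no real obstacle here; the only thing to double-check is the translation between the definition of $\disp_{\delta}(F,R)$ as $|\delta(R \cap F^{-1}(0)) - \delta(R \cap F^{-1}(1))|$ and the bilinear form $\mathbf{1}_A^{\top} H \mathbf{1}_B$. This follows from $H_{x,y} = 1 - 2\cdot \IP_m(x,y)$ (viewing the output as $\{0,1\}$), so that $\mathbf{1}_A^{\top} H \mathbf{1}_B = |R \cap \IP_m^{-1}(0)| - |R \cap \IP_m^{-1}(1)|$, and dividing by $2^{2m}$ turns cardinalities into uniform probabilities. The mild subtlety is simply keeping the signs and normalizations consistent; once that is done, the bound follows in three lines from Cauchy--Schwarz and the Hadamard orthogonality $H^{\top} H = 2^m I$.
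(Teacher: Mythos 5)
Your proof is correct, and it is exactly the standard Lindsey's Lemma argument that the paper invokes by citing the reference for this theorem rather than re-proving it: translate the discrepancy of $\IP_m$ on a rectangle $A\times B$ into the bilinear form $2^{-2m}\bigl|\mathbf{1}_A^{\top} H\,\mathbf{1}_B\bigr|$ for the $\pm 1$ Hadamard matrix $H$, then apply Cauchy--Schwarz together with $H^{\top}H = 2^m I$ to get the $\sqrt{|A|\,|B|\,2^m}$ bound. Your normalization bookkeeping is correct, and the resulting bound $2^{-m/2}$ is indeed $2^{-\Omega(m)}$, matching the statement.
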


\section{Strong Exponential Separation of VP and  Monotone VP}  \label{sec:spanning-tree}
In this section we prove Theorem \ref{thm:spanning-tree}. 
For a graph $G$, let $V(G), E(G)$ denote the set of vertices and edges of $G$ respectively, and for any pair $S,T \subseteq V(G)$, let $E(S,T) \equiv \{(u,v) \in E(G)\,:\, u \in S\,, v\in T\}$.
\begin{lemma}[Expander Mixing Lemma]\label{lem:expmixing}\cite[Lemma 2.5]{HLW06}
Let $G$ be an undirected $d$ regular graph such that $\lambda_2$ is the second largest eigenvalue of the adjacency matrix of $G$. Then, for every $S,T \subseteq V(G)$
$$\bigg||E(S,T)| - \frac{d}{n} |S||T| \bigg| \leq \lambda_2 \sqrt{|S| |T|}.$$
\end{lemma}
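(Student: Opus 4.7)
The plan is to use the standard spectral proof of the Expander Mixing Lemma. Let $A$ be the $n \times n$ adjacency matrix of $G$. Since $G$ is undirected and $d$-regular, $A$ is real symmetric with an orthonormal eigenbasis $v_1, v_2, \ldots, v_n$ and corresponding real eigenvalues $d = \lambda_1 \geq \lambda_2 \geq \cdots \geq \lambda_n \geq -d$, where $v_1 = \tfrac{1}{\sqrt{n}} \mathbf{1}$ is the normalized all-ones vector. The key algebraic identity to start from is that $|E(S,T)| = \mathbf{1}_S^\top A \mathbf{1}_T$, where $\mathbf{1}_S, \mathbf{1}_T \in \{0,1\}^n$ are the indicator vectors of the sets $S$ and $T$.

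Next, I would decompose $\mathbf{1}_S$ and $\mathbf{1}_T$ in the eigenbasis. Writing $\mathbf{1}_S = \alpha_S \mathbf{1} + u_S$ with $\alpha_S = |S|/n$ and $u_S \perp \mathbf{1}$ (and similarly $\mathbf{1}_T = \alpha_T \mathbf{1} + u_T$), the bilinear form splits as
\[
\mathbf{1}_S^\top A \mathbf{1}_T \;=\; \alpha_S \alpha_T \cdot \mathbf{1}^\top A \mathbf{1} \;+\; u_S^\top A u_T \;=\; \frac{d\,|S|\,|T|}{n} \;+\; u_S^\top A u_T,
\]
using the facts that $A\mathbf{1} = d\mathbf{1}$, $\mathbf{1}^\top \mathbf{1} = n$, and that the cross terms vanish by orthogonality. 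So the quantity we need to bound is precisely $|u_S^\top A u_T|$.

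Since $u_S$ and $u_T$ both lie in the subspace orthogonal to $v_1$, expanding them in the remaining eigenbasis and applying Cauchy--Schwarz yields
\[
|u_S^\top A u_T| \;\leq\; \max_{i \geq 2} |\lambda_i| \cdot \|u_S\|\, \|u_T\| \;\leq\; \lambda_2 \, \|u_S\|\, \|u_T\|,
\]
where I am using the paper's convention that $\lambda_2$ stands for the second-largest eigenvalue in absolute value (which is the standard convention for mixing lemmas). A short calculation then bounds the norms: $\|u_S\|^2 = \|\mathbf{1}_S\|^2 - \alpha_S^2 \|\mathbf{1}\|^2 = |S| - |S|^2/n \leq |S|$, and similarly $\|u_T\|^2 \leq |T|$. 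Combining the two displays gives $|u_S^\top A u_T| \leq \lambda_2 \sqrt{|S|\,|T|}$, which is the claimed inequality.

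There is no genuine obstacle here, since the argument is standard; the only point that requires a little care is ensuring the eigenvalue bound used in the Cauchy--Schwarz step is exactly $\lambda_2$ as defined in the lemma's statement (i.e.\ interpreting it as the spectral gap $\max_{i \geq 2}|\lambda_i|$, matching the convention of \cite{HLW06}). If one insisted on $\lambda_2$ being the literal second largest eigenvalue (which could be smaller than $|\lambda_n|$), the bound would need to be stated in terms of $\max(|\lambda_2|, |\lambda_n|)$, but this is simply a matter of convention and does not affect the structure of the proof.
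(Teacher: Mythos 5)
Your proof is correct, and it is exactly the standard spectral argument; the paper itself does not prove this lemma but simply cites it from \cite{HLW06}, where the same decomposition of $\mathbf{1}_S$ and $\mathbf{1}_T$ into their projections along $\mathbf{1}$ and orthogonal remainders, followed by Cauchy--Schwarz on the eigenbasis, appears. Your remark about the convention for $\lambda_2$ is apt: as stated, with $\lambda_2$ being the literal second-largest eigenvalue, the bound could fail for bipartite-like graphs where $|\lambda_n|$ is large, so the intended reading (matching \cite{HLW06} and what the paper later uses when it assumes $\lambda_2 \le d^{1-\epsilon}$) is $\lambda := \max_{i\ge 2}|\lambda_i|$, and your proof delivers that.
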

We also need Matrix Tree Theorem which we state below. 
\begin{theorem}\label{thm:matrix-tree}[Matrix Tree Theorem]\label{matrixtree}\cite[Theorem 13.1]{MM11}
Let $G$ be an undirected graph on $n$ vertices and let $0,\mu_1,\mu_2,\ldots,\mu_{n-1}$ be the eigenvalues of the Laplacian of $G$. Then the number of spanning trees in $G$ is $\frac{1}{n}\mu_1 \cdot \mu_2\cdots \mu_{n-1}$.
\end{theorem}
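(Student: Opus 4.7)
The plan is to prove the Matrix Tree Theorem in two stages: first establish Kirchhoff's cofactor formula, which identifies the number of spanning trees with any $(n-1)\times(n-1)$ principal minor of the Laplacian $L$, and then relate that common value to the product of the nonzero Laplacian eigenvalues.

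I would start by fixing an arbitrary orientation of the edges of $G$, letting $m = |E(G)|$, and forming the signed incidence matrix $M \in \R^{n\times m}$ with $M_{v,e} = +1$ if $v$ is the head of $e$, $-1$ if $v$ is the tail of $e$, and $0$ otherwise. A direct calculation gives the identity $L = M M^{T}$. For each vertex $i$, let $M_i$ denote $M$ with row $i$ removed, so that $L_i \coloneqq M_i M_i^{T}$ is the principal minor of $L$ obtained by deleting row and column $i$. The Cauchy-Binet formula yields
$$\det(L_i) = \sum_{\substack{S \subseteq E(G) \\ |S| = n-1}} \det(M_i[S])^{2},$$
where $M_i[S]$ is the $(n-1)\times(n-1)$ submatrix of $M_i$ with columns indexed by $S$. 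The combinatorial crux is to show $|\det(M_i[S])| = 1$ when the edges of $S$ form a spanning tree of $G$ and $\det(M_i[S]) = 0$ otherwise. I would argue this by induction on $n$: if the subgraph spanned by $S$ has a leaf $v\neq i$, expansion along the row indexed by $v$ (which has a single $\pm 1$ entry in $M_i[S]$) reduces to a smaller instance; if no such leaf exists, then either $S$ contains a cycle (forcing the signed sum of the corresponding columns to vanish) or $S$ has a connected component not containing $i$ (forcing those columns to sum to zero in $M_i[S]$). Summing the squares over $S$ then gives $\det(L_i) = \tau(G)$, the number of spanning trees of $G$, independent of $i$.

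Next, I would translate this cofactor identity into the eigenvalue statement. Since $L = MM^T$ is positive semidefinite and has the all-ones vector in its kernel, $0$ is an eigenvalue and the remaining eigenvalues $\mu_1,\dots,\mu_{n-1}\geq 0$ satisfy
$$\det(xI - L) = x \prod_{j=1}^{n-1}(x - \mu_j).$$
The coefficient of $x^{1}$ on the right-hand side equals $(-1)^{n-1}\prod_{j=1}^{n-1}\mu_j$. On the other hand, the standard expansion of the characteristic polynomial tells us that the coefficient of $x^{1}$ in $\det(xI - L)$ is $(-1)^{n-1}$ times the sum of all $(n-1)\times(n-1)$ principal minors of $L$, namely $(-1)^{n-1}\sum_{i=1}^{n}\det(L_i)$. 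Combining with $\det(L_i) = \tau(G)$ from the previous step gives $n \cdot \tau(G) = \prod_{j=1}^{n-1}\mu_j$, which is precisely the claimed formula. If $G$ is disconnected, the kernel of $L$ has dimension at least two, so some $\mu_j=0$ and both sides vanish, consistent with there being no spanning trees.

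The main obstacle is the combinatorial identification of $\det(M_i[S])$ with the indicator of a spanning tree up to sign; executing this inductive argument cleanly requires some care with orientations and with distinguishing the two failure modes (a cycle in $S$ versus a component of $S$ disjoint from $i$). Once that lemma is in hand, the Cauchy-Binet step and the reading off of the coefficient of $x^{1}$ from the characteristic polynomial are routine linear algebra.
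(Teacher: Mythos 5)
Your proof is correct: this is the classical Kirchhoff argument (incidence-matrix factorization $L=MM^{T}$, Cauchy--Binet, the $\pm 1$/$0$ dichotomy for $\det(M_i[S])$, and then reading the product of nonzero eigenvalues off the coefficient of $x^{1}$ in the characteristic polynomial), and each step, including the inductive leaf-peeling lemma and the disconnected case, is sound. Note that the paper does not prove this statement at all --- it imports it as a black box from the cited reference \cite[Theorem 13.1]{MM11} --- so there is no in-paper argument to compare against; your writeup is simply the standard textbook proof of the cited result.
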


\begin{proof}[Proof of Theorem~\ref{thm:spanning-tree}]

Consider a family of $d$-regular expander graphs where $d$ is a sufficiently large constant and the second largest eigenvalue is bounded by $d^{1-\epsilon}$ for a suitable $\epsilon >0$. For example, the current proof works for $\epsilon=0.25$ and such a family of graphs can be explicitly constructed \cite{RVW02}. 
Let $G=G_n$ be the $n^{th}$ graph in the family. 

Suppose $\ST_n(\widetilde{G})$ has a monotone circuit of size $S$. Recall from Section \ref{sec:intro} that $\widetilde{G}$ is the graph obtained from $G$ which has edges in both directions for each undirected edge in $G$. Then applying Theorem~\ref{thm:monotone-structure} to the polynomial $\ST_n(\widetilde{G})$ we get
\begin{equation}\label{strapply}
\ST_n(\widetilde{G})=\sum^{S}_{s=1} a_s b_s.
\end{equation}

For a fixed $s$, let $X_t = \{x_{t,j}| x_{t,j} \in \var(a_s) \cup \var(b_s) \}$. Since every monomial of $ST_n(\widetilde{G})$ has distinct first indices we conclude that $I(a_s) \cap I(b_s) = \emptyset$.

Now we upper bound $\sum^{n}_{t=2}|X_t|$. We note that if $i \in I(a_s)$ and $j \in I(b_s)$ then it cannot be the case that both $x_{i,j}$ and $x_{j,i}$ are in $\cup^{n}_{t=2} X_t$. Suppose $x_{i,j} , x_{j,i} \in \cup^{n}_{t=2} X_t$ then it must be the case that $x_{i,j} \in \var(a_s)$ and $x_{j,i} \in \var(b_s)$ (since $i \not \in I(b_s)$ and $j \not \in I(a_s)$). Then some monomial in $a_s b_s$ contains $x_{i,j} x_{j,i}$ which is a two cycle and cannot be part of the spanning tree polynomial. 

This shows that in the set of undirected edges $E(I(a_s),I(b_s))$, at least one out of the two directed edge variables, corresponding to an undirected edge, must be absent in $\cup^n_{t=2} X_t$. Thus we may bound,

$$\sum^{n}_{t=2}|X_t| \leq dn - \big|E(I(a_s),I(b_s))\big|.$$

Since $G$ is an expander, using Lemma~\ref{lem:expmixing} we conclude that
$$\bigg|\big|E(I(a_s),I(b_s))\big| - \frac{d}{n} |I(a_s)| |I(b_s)| \bigg| \leq \lambda_2 \sqrt{|I(a_s)| |I(b_s)|}.$$
On rearranging, we obtain
$$\bigg|E(I(a_s),I(b_s))\bigg| \geq \frac{d}{n} |I(a_s)||I(b_s)| - \lambda_2 \sqrt{|I(a_s)| |I(b_s)|}.$$
Since $|I(a_s)|,|I(b_s)| \geq \frac{n}{3}$ and $|I(a_s)|+|I(b_s)|=n$ we may simplify the right hand side as $$\bigg|E(I(a_s),I(b_s))\bigg| \geq \frac{d}{n} \frac{n^2}{9} - \lambda_2 \frac{n}{2} = n(\frac{d}{9} - \frac{\lambda_2}{2}).$$

Since $\lambda_2 \leq d^{1 - \epsilon}$, we may relax the right hand side and write $|E(I(a_s),I(b_s))| \geq \frac{nd}{18}$ for sufficiently large $d$. 
Let $\alpha=\frac{1}{18}$. 
Now we bound the total numbers of monomials in $a_s b_s$ as

$$|\mon(a_s b_s)| \leq \prod^n_{t=2} |X_t| \leq \left(\frac{\sum^n_{t=2} |X_t|}{n-1}\right)^{n-1} \leq ((1-\alpha) \frac{nd}{n-1})^{n-1} \leq (1.01 d (1-\alpha))^{n-1}$$
for sufficiently large $n$.  

Then, the number of monomials in $\ST_n(\tilde{G})$:

\begin{equation}\label{upper-bd}
|\mon(\ST_n(\widetilde{G}))| \leq S (1.01 d (1-\alpha))^{n-1}.
\end{equation}

Let $L(G)$ be the Laplacian of the graph $G$ with eigenvalues $0 < \mu_1 \leq \mu_2 \leq \ldots \leq \mu_{n-1}$. Since $G$ is an expander, we conclude that $\mu_1 \geq (d-\lambda_2)$. Then, Theorem~\ref{matrixtree} implies that 
$$|\mon(\ST_n(\widetilde{G}))| = \frac{1}{n}\mu_1 \mu_2 \cdots \mu_{n-1} \geq \frac{1}{n}(d - \lambda_2)^{n-1} \geq \frac{1}{n}(d - d^{1-\epsilon})^{n-1}.$$

\begin{remark}
Notice that each spanning tree rooted at the vertex $1$ in $G$ is in bijective correspondence with a rooted tree at the vertex $1$ in $\widetilde{G}$. 
\end{remark}

Putting the above bound together with the upper bound in Equation \ref{upper-bd}, we get that
$$\frac{1}{n}(d - d^{1-\epsilon})^{n-1} \leq |\mon(\ST_n(\widetilde{G}))| \leq S (1.01 d (1-\alpha))^{n-1}. $$
This immediately implies that $S\geq \frac{1}{n}\left(\frac{d - d^{1-\epsilon}}{1.01 d (1-\alpha)}\right)^{n-1} \geq \frac{1}{n}(\frac{99}{101 (1-\alpha)})^{n-1} = 2^{\Omega(n)}$, for sufficiently large $d$.
\end{proof}

\section{Discrepancy Implies $\epsilon$-Sensitive Bounds}
\label{sec:sensitive}

In this section we formulate a simple method to prove ${\epsilon}$-sensitive lower bound against set-multilinear polynomials by showing a connection with discrepancy under a universal distribution.\\\\
Consider the input matrix $X$ of dimension $n \times m$ with entries $X[i,j]\coloneqq  x_{i,j}$ of indeterminates. Define $\mathbb{M}[X]$ to be the set of all set-multilinear monomials of degree $n$ over variable set $X=\{X_i|\  i \in [n]\}$ and $\forall i$ $X_i=\{x_{i,j}|\ j\in [m]\}$.
We identify a monomial $\kappa_{\nu}\in \mathbb{M}[X]$ with a map $\nu: [n]\to [m]$ in the following way, $\kappa_{\nu}= \prod_{i=1}^n x_{i,\nu(i)}$. This forms a bijection between set $\mathbb{M}[X]$ and $\Upsilon=\{\nu |\ \nu:[n]\to [m]\}$.\\ Given a partition $P=(A,B)$ of $[n]$ which is nearly-balanced i.e. $|A|,|B| \in [\frac{n}{3},\frac{2n}{3}]$, 
define sets of maps $\mathcal{A}_P=\{\tau |\ \tau: A \to [m]\}$ and $\mathcal{B}_P=\{\theta|\ \theta: B\to [m]\}$. Clearly any map $\tau \in \mathcal{A}_P$ ($\theta\in \mathcal{B}_P$) gives a set-multilinear monomial $\kappa_{\tau}$ ($\kappa_{\theta}$) of degree $|A|$ ($|B|$) over variable set $X_A=\{X_i| i\in A\}$ ($X_B=\{x_j| j\in B\}$) in the following way, $\kappa_{\tau}=\prod_{i\in A}x_{i,\tau(i)}$ ($\kappa_{\theta}=\prod_{j\in B} x_{j,\theta(j)}$).\\\\
For any degree $n$ set-multilinear polynomial $f$ with $0-1$ coefficients over variable set $X$ and a nearly-balanced partition $P=(A,B)$ of $[n]$, define the communication function $C_P^f: \mathcal{A}_P\times \mathcal{B}_P \to \{0,1\}$ as follows:

$$C_P^f(\tau,\theta) \coloneqq
\begin{cases}
    1&\text{iff coefficient of}\ \kappa_{\tau}\cdot\kappa_{\theta}\ \text{is}\ 1\ \text{in}\ f\\
    0&\text{iff coefficient of}\ \kappa_{\tau}\cdot\kappa_{\theta}\ \text{is}\ 0\ \text{in}\ f 
\end{cases}
$$
Naturally, $C_P^f$ can be viewed as a Boolean communication matrix with rows indexed by elements in $\mathcal{A}_P$ and columns indexed by elements in $\mathcal{B}_P$. Let us consider a distribution $\Delta$ over $[m]^n$. Given a partition $P=(A,B)$ of $[n]$, the distribution $\Delta$ induces two distributions over the rows and columns of $C_P^f$. We denote the induced distributions by $\Delta_A$, $\Delta_B$ respectively. Then, let \[\disp_{\Delta}(C^f) \coloneqq \max\limits_{\substack{(A,B)\ \text{is a nearly-balanced}\\{\text{partition of}}\ [n]}}\{\disp_{\Delta_A,\Delta_B}(C_P^f)\}.\]\\
Consider the full set-multilinear polynomial $F_{n,m}= \prod_{i=1}^n\sum_{j=1}^m x_{i,j}$. Now define the polynomials $g\coloneqq F_{n,m}-\epsilon\cdot f$ and $g'\coloneqq F_{n,m}+\epsilon\cdot f$.  Observe that in the polynomial $g$ ($g'$) all those monomials which have coefficients $1$ in $f$  have coefficients exactly ${1-\epsilon}$ ($1+\epsilon$) and those which have coefficients $0$ in $f$ have  coefficients exactly $1$. 
The main result of this section is the proof of Theorem \ref{thm-sensitivity-intro}. For the reader's convenience, we restate the theorem below.

\Discrepancy*




\begin{proof}
We present the proof for $g$. The proof for $g'$ is analogous. The main idea is to define a measure $\mathcal{M}$ on the set of monomials. Then we extend the measure to the polynomial by linearity. We define the measure $\mathcal{M}$ in the following way.

   $$ \mathcal{M}(\kappa)\coloneqq
\begin{cases}
    \Delta(\kappa)&\text{if coefficients of}\ \kappa\ \text{is}\ 1\ \text{in}\ f\ \text{and}\ \kappa\  \text{is in}\ \supp(\Delta)\\
    -\Delta(\kappa)&\text{if coefficients of}\ \kappa\ \text{is}\ 0\ \text{in}\ f\ \text{and}\ \kappa\  \text{is in}\ \supp(\Delta)\\
    0          &\kappa\ \text{is not in}\ \supp(\Delta)       
\end{cases}    
$$ 
Now we need the following two lemmas to prove Theorem \ref{thm-sensitivity-intro}.
\begin{lemma}\label{upper-bound-lemma}
For every distribution $\Delta$ over $[m]^n$ and for any nearly-balanced ordered product polynomial $a\cdot b$ of degree $n$ with $||a\cdot b||_{\infty}\leq 1$, the measure 
$|\mathcal{M}(a\cdot b)|\leq \disp_{\Delta_A,\Delta_B}(C_P^f)$ where $A=I(a),B=I(b)$.
\end{lemma}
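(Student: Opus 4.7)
My plan is to reduce the bound on $|\mathcal{M}(a\cdot b)|$ to the rectangle-based definition of discrepancy by exploiting the rank-one structure of the coefficient table of $a\cdot b$. Since $a$ and $b$ are ordered and monotone with $I(a)=A$, $I(b)=B$, $A\cap B=\emptyset$ and $A\cup B=[n]$, I can expand
\[ a=\sum_{\tau\in\mathcal{A}_P} a_\tau\,\kappa_\tau,\qquad b=\sum_{\theta\in\mathcal{B}_P} b_\theta\,\kappa_\theta, \]
with non-negative reals $a_\tau,b_\theta$. By set-multilinearity every pair $(\tau,\theta)$ produces a distinct degree-$n$ monomial $\kappa_\tau\kappa_\theta\in\mathbb{M}[X]$, so the coefficient of $\kappa_\tau\kappa_\theta$ in $a\cdot b$ is exactly $a_\tau b_\theta$. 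The assumption $\|a\cdot b\|_\infty\le 1$ therefore yields $a_\tau b_\theta\le 1$ for every $(\tau,\theta)$.

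Using linearity of $\mathcal M$ and the definition of $\mathcal M$ on monomials,
\[ \mathcal{M}(a\cdot b)=\sum_{\tau,\theta} a_\tau b_\theta\,\sigma(\tau,\theta)\,\Delta(\kappa_\tau\kappa_\theta), \]
where $\sigma(\tau,\theta)=2\,C_P^f(\tau,\theta)-1$ on $\supp(\Delta)$ and $\sigma(\tau,\theta)=0$ otherwise. Introduce the signed measure $D(\tau,\theta)\coloneqq \sigma(\tau,\theta)\,\Delta(\kappa_\tau\kappa_\theta)$ on $\mathcal{A}_P\times \mathcal{B}_P$. Unwinding the definition of $\disp_{\Delta_A,\Delta_B}(C_P^f)$, for every combinatorial rectangle $R=R_1\times R_2\subseteq \mathcal{A}_P\times \mathcal{B}_P$ we have
\[ |D(R)|=\bigl|\Delta(R\cap (C_P^f)^{-1}(1)) - \Delta(R\cap (C_P^f)^{-1}(0))\bigr|\le \disp_{\Delta_A,\Delta_B}(C_P^f). \]

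The last step is the standard rank-one-to-rectangle reduction. Setting $\alpha=\max_\tau a_\tau$ and $\beta=\max_\theta b_\theta$ (WLOG both positive, else $a\cdot b\equiv 0$ and the claim is trivial), I rescale to $a'_\tau=a_\tau/\alpha,\ b'_\theta=b_\theta/\beta\in[0,1]$ and note that $\alpha\beta=\max_{\tau,\theta}(a_\tau b_\theta)=\|a\cdot b\|_\infty\le 1$. The layer-cake identity $a'_\tau=\int_0^1 \mathbf{1}[a'_\tau>s]\,ds$, applied to both factors, gives
\[ \mathcal{M}(a\cdot b)=\alpha\beta\int_0^1\!\!\int_0^1 D(R_{s,t})\,ds\,dt,\qquad R_{s,t}\coloneqq \{\tau:a'_\tau>s\}\times\{\theta:b'_\theta>t\}. \]
Since each $R_{s,t}$ is a combinatorial rectangle, taking absolute values inside the integral and applying the uniform bound on $|D(R_{s,t})|$ together with $\alpha\beta\le 1$ yields $|\mathcal{M}(a\cdot b)|\le \disp_{\Delta_A,\Delta_B}(C_P^f)$, as required.

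The only subtle step is the normalisation $\alpha\beta\le 1$, which is where set-multilinearity is essential: without it, the coefficient of a single monomial in $a\cdot b$ could pool contributions from many $(\tau,\theta)$ pairs and the bound $a_\tau b_\theta\le\|a\cdot b\|_\infty$ would fail. Monotonicity of $a$ and $b$ is also used, since the layer-cake representation requires non-negative weights $a_\tau,b_\theta$; everything else is routine manipulation.
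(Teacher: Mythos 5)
Your proof is correct, and it is essentially the paper's argument with a different technical dressing. The paper first proves a separate Lemma~\ref{rect-poly} that decomposes $a\cdot b$, by an induction on the number of distinct positive coefficient levels in $a$ and then in $b$, into a convex combination $\sum_{i,j}\lambda_{i,j}\,\alpha_i\cdot\beta_j$ of rectangular $0$--$1$ product polynomials with $\sum_{i,j}\lambda_{i,j}=\|a\cdot b\|_\infty$; each $\mathcal{M}(\alpha_i\cdot\beta_j)$ is then read off as a rectangle discrepancy and the bound follows by sub-additivity. You replace this discrete inductive decomposition with the continuous layer-cake integral $a'_\tau=\int_0^1\mathbf{1}[a'_\tau>s]\,ds$, which produces the same family of $0$--$1$ rectangles $R_{s,t}$ and the same total weight $\alpha\beta=\|a\cdot b\|_\infty$. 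Both arguments hinge on the identical key fact that the coefficient table of $a\cdot b$ is rank one because $A=I(a)$ and $B=I(b)$ are disjoint, so $\|a\cdot b\|_\infty=\|a\|_\infty\cdot\|b\|_\infty$; you make this dependence explicit, which is a nice touch, but the substance is the same. One small presentational caveat: in the paper the subscript $\Delta_A,\Delta_B$ denotes the distribution $\Delta$ viewed on the product space $[m]^A\times[m]^B$, not the product of marginals, and your signed measure $D(\tau,\theta)=\sigma(\tau,\theta)\,\Delta(\kappa_\tau\kappa_\theta)$ correctly treats it as the joint distribution, so no issue there.
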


\begin{lemma}\label{lower-bound-lemma}
let $f$ be a set-multilinear polynomial of degree $n$ and $g=F_{n,m}-\epsilon\cdot f$ with $\epsilon \geq \frac{6\gamma}{1-3\gamma}$ where $\gamma=\disp_{\Delta}(C^f)$. Then $|\mathcal{M}(g)| \geq \frac{\epsilon}{3}$.
\end{lemma}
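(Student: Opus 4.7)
My plan is to directly compute $\mathcal{M}(g)$, express it as a simple linear function of a single parameter, use the discrepancy bound applied to the trivial (``all pairs'') rectangle to control that parameter, and then verify the arithmetic inequality imposed by the hypothesis on $\epsilon$.

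\medskip

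First I would introduce $p \coloneqq \Pr_{\kappa \sim \Delta}\!\big[\kappa \in \supp(f)\big] = \sum_{\kappa \in \supp(f)} \Delta(\kappa)$. Extending $\mathcal{M}$ linearly and splitting by whether $\kappa$ sits in $\supp(f)$ or not, the defining cases of $\mathcal{M}$ yield
\[
\mathcal{M}(F_{n,m}) \;=\; \sum_{\kappa \in \supp(f)} \Delta(\kappa) \;-\; \sum_{\kappa \notin \supp(f)} \Delta(\kappa) \;=\; 2p-1,
\qquad
\mathcal{M}(f) \;=\; \sum_{\kappa \in \supp(f)} \Delta(\kappa) \;=\; p,
\]
since every set-multilinear monomial of degree $n$ over $X$ has coefficient $1$ in $F_{n,m}$, and monomials outside $\supp(\Delta)$ contribute zero. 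Hence
\[
\mathcal{M}(g) \;=\; \mathcal{M}(F_{n,m}) - \epsilon \cdot \mathcal{M}(f) \;=\; (2p-1) - \epsilon p.
\]

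\medskip

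Next I would bound $|2p-1|$ via the discrepancy. Fix any nearly-balanced partition $P=(A,B)$ of $[n]$ (which exists for $n \geq 3$), and take the full rectangle $R \coloneqq \mathcal{A}_P \times \mathcal{B}_P$. The ``coefficient'' of each monomial in the product indicator $\mathbb{1}_R$ is $1$, so
\[
\disp_{\Delta}(C_P^{f},R) \;=\; \big| \Pr_{\Delta}[C_P^f = 1] - \Pr_{\Delta}[C_P^f = 0] \big| \;=\; |2p-1|,
\]
and since discrepancy is the maximum over rectangles, $|2p-1| \leq \disp_\Delta(C^f) = \gamma$. In particular $p \geq (1-\gamma)/2$.

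\medskip

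Finally, by the reverse triangle inequality,
\[
|\mathcal{M}(g)| \;=\; \big|\,\epsilon p - (2p-1)\,\big| \;\geq\; \epsilon p - |2p-1| \;\geq\; \frac{\epsilon(1-\gamma)}{2} - \gamma,
\]
provided $\epsilon p \ge |2p-1|$, which follows from $\epsilon(1-\gamma)/2 \ge \gamma$; this last inequality is weaker than the hypothesis $\epsilon \ge 6\gamma/(1-3\gamma)$, so it is automatic. It remains to check that the hypothesis implies $\epsilon(1-\gamma)/2 - \gamma \geq \epsilon/3$. This is an elementary rearrangement: clearing denominators gives $3\epsilon(1-\gamma) - 6\gamma \geq 2\epsilon$, i.e.\ $\epsilon(1-3\gamma) \geq 6\gamma$, which is precisely $\epsilon \geq 6\gamma/(1-3\gamma)$. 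The proof for $g' = F_{n,m} + \epsilon \cdot f$ is identical up to a sign, giving $\mathcal{M}(g') = (2p-1) + \epsilon p$ and the same lower bound on $|\mathcal{M}(g')|$.

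\medskip

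There is no real obstacle here once the formula $\mathcal{M}(g) = (2p-1) - \epsilon p$ is in hand; the only conceptual point (and the one I would highlight in the write-up) is that the full rectangle is itself a rectangle, so the ``global'' bias of $f$ under $\Delta$ is automatically controlled by the discrepancy. The constant $6\gamma/(1-3\gamma)$ is tight for this computation and matches the choice of $\epsilon/3$ as the target in the statement.
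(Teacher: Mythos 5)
Your proof is correct and takes essentially the same route as the paper: compute $\mathcal{M}(F_{n,m})=2p-1$ and $\mathcal{M}(f)=p$, bound $|2p-1|\leq\gamma$, and rearrange. The only cosmetic difference is that you obtain $|2p-1|\leq\gamma$ directly from the discrepancy of the full rectangle under one fixed nearly-balanced partition, whereas the paper invokes its Lemma \ref{upper-bound-lemma} applied to $F_{n,m}$ written as a nearly-balanced product of linear forms --- which amounts to the same observation.
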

Let us first prove Theorem \ref{thm-sensitivity-intro} assuming Lemmas \ref{upper-bound-lemma} and \ref{lower-bound-lemma}. Further, assume $g$ has a monotone circuit of size $s$. Then, using Theorem \ref{thm:monotone-structure}, $g$ can be expressed as $g=\sum_{i=1}^s a_i\cdot b_i$ where each $a_i\cdot b_i$ is a nearly-balanced product polynomial. Using Lemmas~\ref{upper-bound-lemma} and \ref{lower-bound-lemma}, and the sub-additivity of the measure $\mathcal{M}$, it follows that,
\[
\frac{\epsilon}{3}\leq_{\substack{\text{Lemma}\  \ref{lower-bound-lemma}}} |\mathcal{M}(g)| =_{\text{Definition}} |\mathcal{M}(\sum_{i=1}^s a_i\cdot b_i)|\leq_{\substack{\text{Sub-additivity}} \sum_{i=1}^s |\mathcal{M}(a_i\cdot b_i)| \leq_{\substack{\text{Lemma}\  \ref{upper-bound-lemma}}}}  s \cdot \gamma
\]This shows $s \geq \frac{\epsilon}{3\gamma}$ with $\epsilon\geq \frac{6\gamma}{1-3\gamma}$. 

\end{proof}
Now we provide proofs of Lemmas \ref{lower-bound-lemma} and \ref{upper-bound-lemma}.
To prove Lemma \ref{upper-bound-lemma}, it'd be convenient to write a product polynomial as a sum of \emph{rectangular} product polynomials, via the following lemma.

\begin{lemma}
\label{rect-poly}
Let $a\cdot b$ be a product polynomial with both $a$ and $b$ are monotone. Then $a\cdot b$ can be written as a sum of rectangular product polynomials, i.e. $$a\cdot b=\sum\limits_{i,j} \lambda_{i,j} ~\alpha_i\cdot \beta_j.$$ Here $\sum\limits_{i,j} \lambda_{i,j}=||a\cdot b||_{\infty}$ and $\lambda_{i,j}>0$. Moreover, the coefficients of monomials in $\alpha_i$ and $\beta_j$ are only $0,1$. 
\end{lemma}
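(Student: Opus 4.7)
The plan is to apply a threshold (level-set) decomposition independently to the monotone polynomials $a$ and $b$, and then distribute the resulting product. Writing $a$ in its monomial expansion $a = \sum_i c_i \kappa_i$ with $c_i > 0$, I list the distinct positive coefficient values of $a$ in decreasing order as $c^{(1)} > c^{(2)} > \cdots > c^{(T)}$, and adopt the convention $c^{(T+1)} = 0$. For each $t \in [T]$, define $\alpha_t$ to be the $0/1$ polynomial consisting of the sum of all monomials $\kappa_i$ whose coefficient satisfies $c_i \geq c^{(t)}$, and set $\mu_t \coloneqq c^{(t)} - c^{(t+1)} > 0$. A direct telescoping calculation gives $a = \sum_{t=1}^T \mu_t \alpha_t$ with $\sum_{t=1}^T \mu_t = c^{(1)} = \|a\|_\infty$. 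The same construction applied to $b$ produces $b = \sum_{s} \nu_s \beta_s$ with $\sum_s \nu_s = \|b\|_\infty$ and each $\beta_s$ a $0/1$ polynomial.

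Distributing the product then yields
\[
a \cdot b \;=\; \sum_{t,s} (\mu_t \nu_s)(\alpha_t \cdot \beta_s).
\]
Setting $\lambda_{t,s} \coloneqq \mu_t \nu_s > 0$, each $\alpha_t \cdot \beta_s$ is a rectangular product polynomial with $0/1$ factors, as demanded by the lemma. It remains only to verify the norm identity $\sum_{t,s} \lambda_{t,s} = \|a \cdot b\|_\infty$. By construction, $\sum_{t,s}\lambda_{t,s} = \|a\|_\infty \cdot \|b\|_\infty$, so the task reduces to showing $\|a \cdot b\|_\infty = \|a\|_\infty \cdot \|b\|_\infty$.

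This final identity is the only substantive point, and it uses the context in which the lemma is invoked: the product polynomials produced by Theorem~\ref{thm:monotone-structure} are ordered with $I(a) \cap I(b) = \emptyset$. Under this disjointness, the monomial products $\kappa_i \tau_j$ (taking $\kappa_i$ from $a$ and $\tau_j$ from $b$) are all distinct in $a\cdot b$, so no coefficient collision can occur, and the coefficient of $\kappa_i \tau_j$ in $a\cdot b$ is exactly $c_i d_j$. Taking the maximum over $(i,j)$ gives $\|a\cdot b\|_\infty = \|a\|_\infty \cdot \|b\|_\infty$, closing the argument.

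No step here is deep; the construction is a clean application of the standard level-set decomposition, and the main thing to be careful about is the tacit ordered-product hypothesis that underpins the norm equality. Without that hypothesis, the same construction still produces a valid rectangular decomposition but only with the weaker identity $\sum_{t,s}\lambda_{t,s} = \|a\|_\infty \cdot \|b\|_\infty$, which can strictly exceed $\|a\cdot b\|_\infty$ in the presence of monomial collisions (as in $(x+y)(x+y)$). In the intended usage within the proof of Theorem~\ref{thm-sensitivity-intro}, this suffices because every product polynomial in sight is ordered.
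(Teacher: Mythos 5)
Your proof takes essentially the same route as the paper's: both decompose $a$ and $b$ as nonnegative combinations of $0/1$-coefficient polynomials whose weights sum to the respective $\ell_\infty$ norms, then distribute the product; your level-set telescoping is a non-recursive repackaging of the paper's inductive ``subtract the minimum coefficient'' step, and unwinding that recursion produces exactly your $\alpha_t,\mu_t$. One small thing you do better than the paper: you explicitly flag that the concluding identity $\|a\cdot b\|_\infty = \|a\|_\infty\|b\|_\infty$ relies on the ordered-product hypothesis $I(a)\cap I(b)=\emptyset$ (so that monomial products from $a$ and $b$ never collide), whereas the paper asserts this equality without comment; your counterexample $(x+y)(x+y)$ correctly shows the inequality can be strict when the variable sets overlap, so pinning the claim to the disjointness guaranteed by Theorem~\ref{thm:monotone-structure} is a genuine, if minor, sharpening of rigor.
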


\begin{proof}
    First we want to show if we take any monotone polynomial $a$, we can write it as $\sum\limits_i \lambda_i a_i$ where $a_i$s are monotone polynomials with coefficients $0,1$ and $\sum\limits_i \lambda_i=||a||_\infty$, and the $\lambda_i$s are non-negative. For this, write polynomial $a$ as a vector $V_a$ of coefficients, i.e. the vector has a coordinate for every set-multilinear  monomial appearing in the polynomial. Let the vector $V_a=[p_1,p_2,\cdots,p_\ell]^{T}$ where each $p_i>0$.

    We prove the claim using induction on the number of nonzero entries in the vector. \\ The base case will be $a$ being a single monomial, for which the statement is trivially true.
    
    For the inductive step, consider $p=\min_i\{p_i\}$. 
    Let $V_p$ be the vector of length $\ell$ such that each entry is $p$. 
    Let $\hat{f}$ be the polynomial having the same monomial set as $a$ and the coefficients are all $1$. 
      Now consider the polynomial $a'$ corresponding to the vector $V_{a'}=V_a-V_p$, where the difference has been taken coordinate wise. This new vector has $0$ valued coordinates for which $V_a$ has $p$. Now by inductive hypothesis $a'=\sum\limits_j \lambda_j a'_j$ where $a'_j$ are polynomials having $0-1$ coefficients and $\sum\limits_j \lambda_j=||a'||_{\infty}$. Observe that  $||a'||_{\infty}=||a||_{\infty}-p$.
    
    Hence, $$a=a'+p\cdot \hat{f}=\sum\limits_j \lambda_j a'_j+p\cdot \hat{f}.$$ Here $\sum_j\lambda_j+p=||a'||_{\infty}+p=||a||_{\infty}.$ This proves our first claim.
    
    Using the above, write $a=\sum\limits_i\lambda_i\cdot \alpha_i$ and $b=\sum\limits_j\lambda'_j\cdot \beta_j$. Where $\sum\limits_i \lambda_i=||a||_\infty$ and $\sum\limits_j \lambda'_j=||b||_\infty$. So, $$a\cdot b= \sum\limits_{i,j} \lambda_i\lambda'_j  \alpha_i\cdot \beta_j, $$ and $\sum\limits_{i,j}\lambda_i\cdot\lambda'_j=(\sum\limits_i\lambda_i)(\sum\limits_j\lambda'_j)=||a||_{\infty}\cdot||b||_{\infty}=||a\cdot b||_{\infty}$.
\end{proof}

Now we are ready to prove Lemma \ref{upper-bound-lemma}.

\begin{proof}
By Lemma \ref{rect-poly}, decompose 
$a\cdot b= \sum\limits_{i,j} \lambda_{i,j} ~\alpha_i\cdot \beta_j$, where $\sum\limits_{i,j} \lambda_{i,j}=||a\cdot b||_{\infty}$ and $\lambda_{i,j}>0$. Hence by triangle inequality and sub-additivity, 
\[
|\mathcal{M}(a\cdot b)|\leq \sum\limits_{i,j}\lambda_{i,j} |\mathcal{M}(\alpha_i\cdot \beta_j)|. 
\]Now $\mathcal{M}(\alpha_i\cdot \beta_j)= \disp_{\Delta_A,\Delta_B}(R)$ where $R$ is the rectangle $A\times B$ with $A$ is indexed by the set of ordered monomials in $\alpha_i$  and $B$ is indexed by the set of ordered monomials in $\beta_j$ and partition $P=(A,B)$ of $[n]$ is given by $A=I(\alpha_{i})$ and $B=I(\beta_j)$.  So we have that $|\mathcal{M}(\alpha_i\cdot \beta_j)|\leq \disp_{\Delta_A,\Delta_B}(R)\leq \disp_{\Delta}(C^f)=\gamma$. 
So,
\[
|\mathcal{M}(a.b)| \leq \sum\limits_{i,j}\lambda_{i,j} |\mathcal{M}(\alpha_i\cdot \beta_j)|\leq \sum\limits_{i,j}\lambda_{i,j} \gamma\leq \gamma
\]
The third inequality comes from Lemma \ref{rect-poly}, i.e. $\sum\limits_{i,j}\lambda_{i,j}=||a\cdot b||_{\infty}\leq 1$. So the proof follows.    
\end{proof}
Next we prove the Lemma \ref{lower-bound-lemma}.

\begin{proof}
Since $g= F_{n,m}-\epsilon\cdot f$, applying the measure $\mathcal{M}$ we have,
$$\mathcal{M}(g)=\mathcal{M}(F_{n,m}) - \epsilon \cdot \mathcal{M}(f).$$
Since the full polynomial $F_{n,m}$ is a product of linear forms, by using Lemma~\ref{upper-bound-lemma} we conclude that $|\mathcal{M}(F_{n,m})|\leq \gamma$.

Let $S_1, S_2$ be the sets of monomials with coefficients $1$ and $0$ in $f$ that are in the support of $\Delta$, then we observe
\[\bigg|\mathcal{M}(F_{n,m})\bigg| = \bigg|\sum_{\kappa\in S_1}\Delta(\kappa) - \sum_{\kappa\in S_2} \Delta(\kappa)\bigg| \leq \gamma
\]
Further,  
\[\sum_{\kappa\in S_1}\Delta(\kappa) + \sum_{\kappa\in S_2} \Delta(\kappa)=1\]
Combining the above two equations we conclude that $$\mathcal{M}(f)= \sum_{\kappa\in S_1} \Delta(\kappa) \geq \frac{1}{2}-\frac{\gamma}{2}.$$
Using this estimate, $|\mathcal{M}(g)| \geq \epsilon \cdot (\frac{1}{2}-\frac{\gamma}{2}) - \gamma$. This will be at least $\geq \frac{\epsilon}{3}$ for $\epsilon \geq\frac{6\gamma}{1-3\gamma}$
\end{proof}

\section{Low Discrepancy of Spanning Tree Problem}\label{sec: low-discrepancy}

In this section, we consider a communication problem that decides whether two sets of oriented edges of a given vertex set distributed among Alice and Bob form a spanning tree. 
By embedding the inner product function in it, we show that the problem has low discrepancy. 

\begin{problem}{Spanning tree problem ($\comst_n$).} 

The input to this problem is a fixed vertex set $V \cup \{r\}$ where $V$ is partitioned into $V_1, V_2$ and $|V|=n$. Alice gets $V_1$ and Bob gets $V_2$ respectively. Additionally, Alice (Bob) is given a map $\tau: V_1\rightarrow V\cup \{r\}$ ($\theta : V_2 \rightarrow V\cup  \{r\}$) specifying the directed edges $i\rightarrow \tau(i)$ ($j\rightarrow \theta(j)$) for $i \in V_1$ ($j \in V_2$).
The communication problem is to decide whether the edges $\{(i,\tau(i)) : i\in V_1\}\cup \{(j,\theta(j)) : j\in V_2\}$ form a spanning tree rooted at $r$ or not.  

\end{problem}

Corresponding to the partition $P=(V_1,V_2)$, we define a boolean function $g_P : \Tau\times \Theta \rightarrow \{0,1\}$ where $\Tau=\{\tau : V_1\rightarrow V\cup \{r\}\}$ and $\Theta = \{\theta : V_2 \rightarrow V\cup \{r\}\}$. We set $g_P(\tau,\theta) = 1$ if and only if $\{(i,\tau(i)) : i\in V_1\}\cup \{(j,\theta(j)) : j\in V_2\}$ forms a spanning tree rooted at $r$. 
In this work, unless explicitly stated otherwise, the partition ($P$) will always be nearly-balanced, i.e.\ $|V_1|,|V_2|\geq \frac{|V|}{3}$. 
\begin{remark}
In the definition of Problem 4.1, the functions $\tau$ and $\theta$ ensure that every vertex in the resulting directed graph has out-degree $1$ except the vertex $r$. So, if the oriented edges form a directed spanning tree then it must be rooted at $r$.
\end{remark}
\begin{remark}
The communication matrix of $g_P$, $M_P$ has rows indexed by elements in $\Tau$ and columns are indexed by elements in $\Theta$. For any entry $(\tau,\theta)$ in the matrix,  $M_P[\tau,\theta]=1$ if and only if $g_P(\tau,\theta)=1$. 
\end{remark}
In the next section we are going to prove the following lemma.
\begin{lemma}\label{lem:fixed-dist}
For each perfectly balanced partition $P$ there exists a distribution $\mathcal{D}_P$ on $\Tau \times \Theta$ such that the discrepancy of $g_P$ under the distribution $\mathcal{D}_P$ is $2^{-\Omega(n)}$.
\end{lemma}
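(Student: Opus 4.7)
The plan is to prove Lemma \ref{lem:fixed-dist} by a rectangular reduction from the two-party inner product function $\IP_m$ under the uniform distribution, which by Theorem \ref{thm:ip-discrepancy} has discrepancy $2^{-\Omega(m)}$. The distribution $\mathcal{D}_P$ will be the push-forward of the uniform distribution $U_m$ on $\{0,1\}^m\times\{0,1\}^m$ through a pair of encoding maps $x\mapsto \tau_x$ and $y\mapsto \theta_y$ defined by a carefully chosen \emph{gadget graph} $H$ whose ``Alice side'' is embedded in $V_1$ and ``Bob side'' in $V_2$.

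First, I would construct the gadget. Setting $m=\Theta(n)$ so that the gadget uses $O(1)$ vertices on each side per IP-coordinate, the perfectly balanced partition $|V_1|=|V_2|=n/2$ gives enough room to place the Alice and Bob vertices of each of the $m$ widgets into $V_1$ and $V_2$ respectively. For coordinate $i\in [m]$ the widget is an ``XOR cell'' with the property that, given bits $x_i,y_i$, Alice's out-edges from her cell vertices depend only on $x_i$ and Bob's only on $y_i$, and the internal wiring produces a locally tree-like structure exactly when $x_i y_i = 0$ and introduces a twist (a short cycle or a disconnection) when $x_i y_i = 1$. A short ``backbone'' joining successive cells to the root $r$ then ensures that the full oriented edge set $\{(v,\tau_x(v))\}_{v\in V_1}\cup\{(v,\theta_y(v))\}_{v\in V_2}$ is a spanning tree rooted at $r$ if and only if $\IP_m(x,y)=0$; any vertices of $V$ not used by the widgets or backbone are given fixed out-edges towards $r$. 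A canonical realisation is a $4$-cycle cell with two Alice and two Bob vertices whose out-edges are permuted by the input bit, verified by a small case analysis.

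With the gadget in hand, I define $\mathcal{D}_P$ by sampling $(x,y)\sim U_m$ and outputting $(\tau_x,\theta_y)$. Because $\tau_x$ depends only on $x$ and $\theta_y$ only on $y$, every combinatorial rectangle $\mathcal{R}=\mathcal{A}\times\mathcal{B}$ in the communication matrix $M_P$ pulls back to a rectangle $\mathcal{R}'=\{x:\tau_x\in\mathcal{A}\}\times\{y:\theta_y\in\mathcal{B}\}$ in the $\IP_m$ matrix. Injectivity of the encoding (built into the widget design) gives, for each $b\in\{0,1\}$,
$$ \mathcal{D}_P\bigl(\mathcal{R}\cap g_P^{-1}(b)\bigr) \;=\; U_m\bigl(\mathcal{R}'\cap \IP_m^{-1}(b)\bigr), $$
so $\disp_{\mathcal{D}_P}(g_P,\mathcal{R})=\disp_{U_m}(\IP_m,\mathcal{R}')\le 2^{-\Omega(m)}=2^{-\Omega(n)}$; taking the maximum over $\mathcal{R}$ finishes the argument.

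The main obstacle is the widget construction itself: it has to simultaneously force out-degree exactly one at every vertex regardless of the players' independent choices, respect the partition $P$ (Alice's half of each widget in $V_1$, Bob's in $V_2$), yield a spanning tree precisely on inputs with $\IP_m=0$ while failing by a cycle or disconnection on every input with $\IP_m=1$, and be injective from $(x,y)$ to $(\tau_x,\theta_y)$ so that rectangles pull back to rectangles cleanly. All the delicate combinatorial content sits here; once the widget is in place the discrepancy transfer is purely formal and the claimed bound $2^{-\Omega(n)}$ follows directly from Theorem \ref{thm:ip-discrepancy}.
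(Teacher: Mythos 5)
Your high-level plan matches the paper exactly: define $\mathcal{D}_P$ as the push-forward of the uniform distribution $U_m$ on $\IP_m$-inputs through a pair of gadget-induced encoding maps $x\mapsto\tau_x$, $y\mapsto\theta_y$, then observe that rectangles in the $\comst$ matrix pull back to rectangles in the $\IP_m$ matrix, so discrepancy transfers directly via Theorem~\ref{thm:ip-discrepancy}. The remarks about injectivity of the encoding and restricting to the support of $\mathcal{D}_P$ also correspond to what the paper does (the paper passes from an arbitrary rectangle $S$ to the sub-rectangle $S'$ in the image $\mathcal{R}$ of the reduction, noting $\disp_{\mathcal{D}_P}(S')=\disp_{\mathcal{D}_P}(S)$). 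The sign convention (spanning tree iff $\IP=0$ versus the paper's iff $\IP=1$) is immaterial.

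However, the description of the widget itself contains a genuine flaw. You write that each cell ``produces a locally tree-like structure exactly when $x_iy_i=0$ and introduces a twist (a short cycle or a disconnection) when $x_iy_i=1$,'' and that a backbone then makes the whole graph a spanning tree iff $\IP_m(x,y)=0$. But if each coordinate with $x_iy_i=1$ independently creates a local cycle or disconnection, then having \emph{two} such coordinates still leaves a defective graph even though $\IP_m=0$; the gadget you describe would compute $\bigvee_i x_iy_i$ (i.e., non-Disjointness), not the parity $\bigoplus_i x_iy_i$. Since Disjointness does not have exponentially small discrepancy, the reduction as stated would not yield the bound. The paper's gadget avoids this by being a \emph{parity accumulator}: the $n$ widgets are chained along two parallel tracks (positions $1$ and $2$), a widget with $x_iy_i=1$ swaps which track the single traversal path sits on, and no widget in isolation creates a cycle. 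Only at the very end does the construction branch: if the accumulated parity is odd the path exits to $b_{n+1,1}\to r$ (giving a spanning tree), and if it is even the path closes the cycle $a_{n+1,1}\to a_{1,1}$. That global parity effect, not a local defect per coordinate, is what makes the reduction compute $\IP_m$. If by ``twist'' you intended a track-swap rather than a local cycle, then your idea is essentially the paper's; but the parenthetical ``(a short cycle or a disconnection)'' and the resulting claimed iff do not fit together, and this is exactly the ``delicate combinatorial content'' that you defer to a ``small case analysis'' but that actually needs the two-track chained design rather than independent XOR cells.
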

The argument to establish the low discrepancy of $g_P$ under a certain distribution uses a rectangular reduction from the inner product problem to the spanning tree problem which we describe next.  

\subsection{The Reduction}\label{gadget reduction}
Consider a vertex set $V\cup \{r\}$ with $|V| = 4n+2$, where $n\geq 1$ is an integer. The set $V$ is partitioned into perfectly balanced sets $V_1,V_2$ such that $|V_1|= 2n+1,|V_2|= 2n+1$. Alice (Bob) gets the vertex set $V_1$ ($V_2$) and a map $\tau \in \Tau$ ($\theta \in \Theta$), where  $\Tau=\{\tau : V_1\rightarrow V\cup \{r\}\}$ ( $\Theta = \{\theta : V_2 \rightarrow V\cup \{r\}\}$).  In this section we exhibit a reduction $\red$ from $\IP_n$ to $\comst_{4n+2}$ which has the following property. Given an input instance $(x,y)\in\{0,1\}^n\times\{0,1\}^n$ of $\IP_n$ the reduction $\red$ computes  $(\tau_x,\theta_y) \in \Tau \times \Theta$ such that $\IP_n(x,y)= 1 \mod 2 ${ if and only if}$\ g_P(\tau_x,\theta_y)=1$.\\
Let $G_{x,y}$ be the directed graph with vertex set $V\cup \{r\}$ and the edges are given by $\{(i,\tau_x(i)) : i\in V_1\}\cup \{(j,\theta_y(j)) : j\in V_2\}$. The graph $G_{x,y}$ has $n+1$ sub-gadgets $G_1,\ldots, G_{n+1}$. Each sub-gadget $G_i$ ($i\leq n$) has two vertices from $V_1$ and two vertices from $V_2$. For convenience, the vertices from $V_1$ are labelled as $a_{i,1},a_{i,2}$ and the vertices from $V_2$ are labelled as $b_{i,1},b_{i,2}$. The sub-gadget $G_{n+1}$ has three vertices,  $a_{n+1,1},\ \in V_1,\ b_{n+1,1}\ \in V_2 $ and $r$.  In the following figure we depict the edge orientations of the sub-gadgets.

\tikzset{every picture/.style={line width=0.75pt}} 

\begin{tikzpicture}[x=0.75pt,y=0.75pt,yscale=-.80,xscale=1]\label{fig:figure1}

\draw    (189.98,159.17) -- (305.48,158.77) ;
\draw [shift={(307.48,158.77)}, rotate = 539.8] [color={rgb, 255:red, 0; green, 0; blue, 0 }  ][line width=0.75]    (10.93,-3.29) .. controls (6.95,-1.4) and (3.31,-0.3) .. (0,0) .. controls (3.31,0.3) and (6.95,1.4) .. (10.93,3.29)   ;
\draw    (166.22,174.74) -- (216.29,242.97) ;
\draw [shift={(217.48,244.58)}, rotate = 233.72] [color={rgb, 255:red, 0; green, 0; blue, 0 }  ][line width=0.75]    (10.93,-3.29) .. controls (6.95,-1.4) and (3.31,-0.3) .. (0,0) .. controls (3.31,0.3) and (6.95,1.4) .. (10.93,3.29)   ;
\draw    (265,244.58) -- (329.88,174.62) ;
\draw [shift={(331.24,173.16)}, rotate = 492.84] [color={rgb, 255:red, 0; green, 0; blue, 0 }  ][line width=0.75]    (10.93,-3.29) .. controls (6.95,-1.4) and (3.31,-0.3) .. (0,0) .. controls (3.31,0.3) and (6.95,1.4) .. (10.93,3.29)   ;
\draw    (202.06,404.63) -- (314.41,405.41) ;
\draw [shift={(316.41,405.43)}, rotate = 180.4] [color={rgb, 255:red, 0; green, 0; blue, 0 }  ][line width=0.75]    (10.93,-3.29) .. controls (6.95,-1.4) and (3.31,-0.3) .. (0,0) .. controls (3.31,0.3) and (6.95,1.4) .. (10.93,3.29)   ;
\draw    (268.13,306.71) -- (338.86,388.34) ;
\draw [shift={(340.17,389.85)}, rotate = 229.09] [color={rgb, 255:red, 0; green, 0; blue, 0 }  ][line width=0.75]    (10.93,-3.29) .. controls (6.95,-1.4) and (3.31,-0.3) .. (0,0) .. controls (3.31,0.3) and (6.95,1.4) .. (10.93,3.29)   ;
\draw    (172.3,389.05) -- (219.59,308.43) ;
\draw [shift={(220.61,306.71)}, rotate = 480.4] [color={rgb, 255:red, 0; green, 0; blue, 0 }  ][line width=0.75]    (10.93,-3.29) .. controls (6.95,-1.4) and (3.31,-0.3) .. (0,0) .. controls (3.31,0.3) and (6.95,1.4) .. (10.93,3.29)   ;
\draw    (265,244.58) .. controls (344.08,264.78) and (391.85,371.1) .. (365.2,403.98) ;
\draw [shift={(363.93,405.43)}, rotate = 313.39] [color={rgb, 255:red, 0; green, 0; blue, 0 }  ][line width=0.75]    (10.93,-3.29) .. controls (6.95,-1.4) and (3.31,-0.3) .. (0,0) .. controls (3.31,0.3) and (6.95,1.4) .. (10.93,3.29)   ;
\draw    (268.13,306.71) .. controls (305.17,298.33) and (419.98,213.53) .. (355.98,158.41) ;
\draw [shift={(355,157.58)}, rotate = 399.78999999999996] [color={rgb, 255:red, 0; green, 0; blue, 0 }  ][line width=0.75]    (10.93,-3.29) .. controls (6.95,-1.4) and (3.31,-0.3) .. (0,0) .. controls (3.31,0.3) and (6.95,1.4) .. (10.93,3.29)   ;
\draw    (455.94,153.86) -- (563.27,153.47) ;
\draw [shift={(565.27,153.46)}, rotate = 539.79] [color={rgb, 255:red, 0; green, 0; blue, 0 }  ][line width=0.75]    (10.93,-3.29) .. controls (6.95,-1.4) and (3.31,-0.3) .. (0,0) .. controls (3.31,0.3) and (6.95,1.4) .. (10.93,3.29)   ;
\draw    (432.18,169.44) -- (479.25,237.22) ;
\draw [shift={(480.39,238.86)}, rotate = 235.22] [color={rgb, 255:red, 0; green, 0; blue, 0 }  ][line width=0.75]    (10.93,-3.29) .. controls (6.95,-1.4) and (3.31,-0.3) .. (0,0) .. controls (3.31,0.3) and (6.95,1.4) .. (10.93,3.29)   ;
\draw    (527.92,238.86) -- (587.72,170.55) ;
\draw [shift={(589.03,169.04)}, rotate = 491.2] [color={rgb, 255:red, 0; green, 0; blue, 0 }  ][line width=0.75]    (10.93,-3.29) .. controls (6.95,-1.4) and (3.31,-0.3) .. (0,0) .. controls (3.31,0.3) and (6.95,1.4) .. (10.93,3.29)   ;
\draw    (455.94,405.13) -- (566.65,404.64) ;
\draw [shift={(568.65,404.63)}, rotate = 539.74] [color={rgb, 255:red, 0; green, 0; blue, 0 }  ][line width=0.75]    (10.93,-3.29) .. controls (6.95,-1.4) and (3.31,-0.3) .. (0,0) .. controls (3.31,0.3) and (6.95,1.4) .. (10.93,3.29)   ;
\draw    (528,304.58) -- (584.96,388.58) ;
\draw [shift={(586.08,390.24)}, rotate = 235.86] [color={rgb, 255:red, 0; green, 0; blue, 0 }  ][line width=0.75]    (10.93,-3.29) .. controls (6.95,-1.4) and (3.31,-0.3) .. (0,0) .. controls (3.31,0.3) and (6.95,1.4) .. (10.93,3.29)   ;
\draw    (432.18,389.55) -- (479.49,306.32) ;
\draw [shift={(480.48,304.58)}, rotate = 479.61] [color={rgb, 255:red, 0; green, 0; blue, 0 }  ][line width=0.75]    (10.93,-3.29) .. controls (6.95,-1.4) and (3.31,-0.3) .. (0,0) .. controls (3.31,0.3) and (6.95,1.4) .. (10.93,3.29)   ;
\draw    (527.92,238.86) .. controls (607.4,259.17) and (664.54,379.37) .. (617.63,403.92) ;
\draw [shift={(616.18,404.63)}, rotate = 335.26] [color={rgb, 255:red, 0; green, 0; blue, 0 }  ][line width=0.75]    (10.93,-3.29) .. controls (6.95,-1.4) and (3.31,-0.3) .. (0,0) .. controls (3.31,0.3) and (6.95,1.4) .. (10.93,3.29)   ;
\draw    (528,304.58) .. controls (565.04,296.2) and (677.8,209.43) .. (613.77,154.29) ;
\draw [shift={(612.8,153.46)}, rotate = 399.78999999999996] [color={rgb, 255:red, 0; green, 0; blue, 0 }  ][line width=0.75]    (10.93,-3.29) .. controls (6.95,-1.4) and (3.31,-0.3) .. (0,0) .. controls (3.31,0.3) and (6.95,1.4) .. (10.93,3.29)   ;
\draw    (592.41,420.21) .. controls (563.87,460.87) and (413.8,525.13) .. (99.38,476.43) ;
\draw [shift={(99.38,476.43)}, rotate = 368.8] [color={rgb, 255:red, 0; green, 0; blue, 0 }  ][line width=0.75]    (10.93,-3.29) .. controls (6.95,-1.4) and (3.31,-0.3) .. (0,0) .. controls (3.31,0.3) and (6.95,1.4) .. (10.93,3.29)   ;
\draw   (307.48,157.58) .. controls (307.48,148.98) and (318.11,142) .. (331.24,142) .. controls (344.36,142) and (355,148.98) .. (355,157.58) .. controls (355,166.18) and (344.36,173.16) .. (331.24,173.16) .. controls (318.11,173.16) and (307.48,166.18) .. (307.48,157.58) -- cycle ;
\draw    (589.03,137.88) .. controls (636.4,108.64) and (206.81,29.87) .. (39.24,142) ;
\draw [shift={(39.24,142)}, rotate = 326.21000000000004] [color={rgb, 255:red, 0; green, 0; blue, 0 }  ][line width=0.75]    (10.93,-3.29) .. controls (6.95,-1.4) and (3.31,-0.3) .. (0,0) .. controls (3.31,0.3) and (6.95,1.4) .. (10.93,3.29)   ;
\draw   (142.46,159.17) .. controls (142.46,150.56) and (153.1,143.59) .. (166.22,143.59) .. controls (179.34,143.59) and (189.98,150.56) .. (189.98,159.17) .. controls (189.98,167.77) and (179.34,174.74) .. (166.22,174.74) .. controls (153.1,174.74) and (142.46,167.77) .. (142.46,159.17) -- cycle ;
\draw   (217.48,244.58) .. controls (217.48,235.98) and (228.11,229) .. (241.24,229) .. controls (254.36,229) and (265,235.98) .. (265,244.58) .. controls (265,253.18) and (254.36,260.16) .. (241.24,260.16) .. controls (228.11,260.16) and (217.48,253.18) .. (217.48,244.58) -- cycle ;
\draw   (220.61,306.71) .. controls (220.61,298.1) and (231.25,291.13) .. (244.37,291.13) .. controls (257.49,291.13) and (268.13,298.1) .. (268.13,306.71) .. controls (268.13,315.31) and (257.49,322.29) .. (244.37,322.29) .. controls (231.25,322.29) and (220.61,315.31) .. (220.61,306.71) -- cycle ;
\draw   (154.54,404.63) .. controls (154.54,396.02) and (165.18,389.05) .. (178.3,389.05) .. controls (191.43,389.05) and (202.06,396.02) .. (202.06,404.63) .. controls (202.06,413.23) and (191.43,420.21) .. (178.3,420.21) .. controls (165.18,420.21) and (154.54,413.23) .. (154.54,404.63) -- cycle ;
\draw   (316.41,405.43) .. controls (316.41,396.82) and (327.04,389.85) .. (340.17,389.85) .. controls (353.29,389.85) and (363.93,396.82) .. (363.93,405.43) .. controls (363.93,414.03) and (353.29,421.01) .. (340.17,421.01) .. controls (327.04,421.01) and (316.41,414.03) .. (316.41,405.43) -- cycle ;
\draw   (408.42,405.13) .. controls (408.42,396.53) and (419.06,389.55) .. (432.18,389.55) .. controls (445.31,389.55) and (455.94,396.53) .. (455.94,405.13) .. controls (455.94,413.74) and (445.31,420.71) .. (432.18,420.71) .. controls (419.06,420.71) and (408.42,413.74) .. (408.42,405.13) -- cycle ;
\draw   (480.48,304.58) .. controls (480.48,295.98) and (491.11,289) .. (504.24,289) .. controls (517.36,289) and (528,295.98) .. (528,304.58) .. controls (528,313.18) and (517.36,320.16) .. (504.24,320.16) .. controls (491.11,320.16) and (480.48,313.18) .. (480.48,304.58) -- cycle ;
\draw   (480.39,238.86) .. controls (480.39,230.26) and (491.03,223.28) .. (504.15,223.28) .. controls (517.28,223.28) and (527.92,230.26) .. (527.92,238.86) .. controls (527.92,247.47) and (517.28,254.44) .. (504.15,254.44) .. controls (491.03,254.44) and (480.39,247.47) .. (480.39,238.86) -- cycle ;
\draw   (565.27,153.46) .. controls (565.27,144.86) and (575.91,137.88) .. (589.03,137.88) .. controls (602.16,137.88) and (612.8,144.86) .. (612.8,153.46) .. controls (612.8,162.07) and (602.16,169.04) .. (589.03,169.04) .. controls (575.91,169.04) and (565.27,162.07) .. (565.27,153.46) -- cycle ;
\draw   (408.42,153.86) .. controls (408.42,145.26) and (419.06,138.28) .. (432.18,138.28) .. controls (445.31,138.28) and (455.94,145.26) .. (455.94,153.86) .. controls (455.94,162.47) and (445.31,169.44) .. (432.18,169.44) .. controls (419.06,169.44) and (408.42,162.47) .. (408.42,153.86) -- cycle ;
\draw   (568.65,404.63) .. controls (568.65,396.03) and (579.29,389.05) .. (592.41,389.05) .. controls (605.54,389.05) and (616.18,396.03) .. (616.18,404.63) .. controls (616.18,413.23) and (605.54,420.21) .. (592.41,420.21) .. controls (579.29,420.21) and (568.65,413.23) .. (568.65,404.63) -- cycle ;
\draw   (51.85,476.43) .. controls (51.85,467.82) and (62.49,460.85) .. (75.62,460.85) .. controls (88.74,460.85) and (99.38,467.82) .. (99.38,476.43) .. controls (99.38,485.03) and (88.74,492.01) .. (75.62,492.01) .. controls (62.49,492.01) and (51.85,485.03) .. (51.85,476.43) -- cycle ;
\draw   (15.48,157.58) .. controls (15.48,148.98) and (26.11,142) .. (39.24,142) .. controls (52.36,142) and (63,148.98) .. (63,157.58) .. controls (63,166.18) and (52.36,173.16) .. (39.24,173.16) .. controls (26.11,173.16) and (15.48,166.18) .. (15.48,157.58) -- cycle ;

\draw (162.78,188.05) node [anchor=north west][inner sep=0.75pt]  [rotate=-53.84]  {$x_{i} =1$};
\draw (214.88,139.25) node [anchor=north west][inner sep=0.75pt]    {$x_{i} =0$};
\draw (162.88,352.66) node [anchor=north west][inner sep=0.75pt]  [rotate=-302]  {$x_{i} =1$};
\draw (236.86,383.12) node [anchor=north west][inner sep=0.75pt]    {$x_{i} =0$};
\draw (287.42,224.83) node [anchor=north west][inner sep=0.75pt]  [rotate=-314.83]  {$y_{i} =0$};
\draw (281.28,334.1) node [anchor=north west][inner sep=0.75pt]  [rotate=-49.84]  {$y_{i} =0$};
\draw (352.83,269.45) node [anchor=north west][inner sep=0.75pt]  [rotate=-301.95]  {$y_{i} =1$};
\draw (372.1,299.78) node [anchor=north west][inner sep=0.75pt]  [rotate=-50.08]  {$y_{i} =1$};
\draw (225.67,232.99) node [anchor=north west][inner sep=0.75pt]    {$b_{i}{}_{,1}$};
\draw (232.22,296.42) node [anchor=north west][inner sep=0.75pt]    {$b_{i,}{}_{2}$};
\draw (166.03,395.29) node [anchor=north west][inner sep=0.75pt]    {$a_{i,}{}_{2}$};
\draw (156.47,145.51) node [anchor=north west][inner sep=0.75pt]    {$a_{i}{}_{,1}$};
\draw (430.22,193.21) node [anchor=north west][inner sep=0.75pt]  [rotate=-65.82]  {$x_{n} =1$};
\draw (477.34,129.22) node [anchor=north west][inner sep=0.75pt]    {$x_{n} =0$};
\draw (425.53,357.37) node [anchor=north west][inner sep=0.75pt]  [rotate=-296.24]  {$x_{n} =1$};
\draw (478.44,384.09) node [anchor=north west][inner sep=0.75pt]    {$x_{n} =0$};
\draw (542.33,225.24) node [anchor=north west][inner sep=0.75pt]  [rotate=-313.52]  {$y_{n} =0$};
\draw (572.08,323.73) node [anchor=north west][inner sep=0.75pt]  [rotate=-59.32]  {$y_{n} =0$};
\draw (608.93,254.75) node [anchor=north west][inner sep=0.75pt]  [rotate=-303.05,xslant=0.02]  {$y_{n} =1$};
\draw (626.85,284.8) node [anchor=north west][inner sep=0.75pt]  [rotate=-62.79]  {$y_{n} =1$};
\draw (488.14,226.06) node [anchor=north west][inner sep=0.75pt]    {$b_{n,}{}_{1}$};
\draw (489.91,293.72) node [anchor=north west][inner sep=0.75pt]    {$b_{n,}{}_{2}$};
\draw (417.92,395.31) node [anchor=north west][inner sep=0.75pt]    {$a_{n,}{}_{2}$};
\draw (414.49,143.31) node [anchor=north west][inner sep=0.75pt]    {$a_{n,}{}_{1}$};
\draw (367.69,397.33) node [anchor=north west][inner sep=0.75pt]    {$\cdots \cdots $};
\draw (361.48,142.07) node [anchor=north west][inner sep=0.75pt]    {$\cdots \cdots $};
\draw (123.03,421.99) node [anchor=north west][inner sep=0.75pt]    {$i=1$};
\draw (566.54,141.75) node [anchor=north west][inner sep=0.75pt]    {$a_{n}{}_{+}{}_{1}{}_{,}{}_{1}$};
\draw (74.21,467.93) node [anchor=north west][inner sep=0.75pt]    {$r$};
\draw (133.38,553.24) node [anchor=north west][inner sep=0.75pt]   [align=left] {\text{Figure} 1 :};
\draw (572.23,390.36) node [anchor=north west][inner sep=0.75pt]    {$b_{n}{}_{+}{}_{1}{}_{,}{}_{1}$};
\draw (24.95,146.86) node [anchor=north west][inner sep=0.75pt]    {$a_{1}{}_{,}{}_{1}$};
\draw (64.1,148.83) node [anchor=north west][inner sep=0.75pt]    {$\cdots \cdots \cdots $};
\draw (199.35,552.45) node [anchor=north west][inner sep=0.75pt]    {$\text{Sub-gadgets}\ {G}_{i} \ for\ 1\leq i\leq \ n+1.$};
\draw (310,148.4) node [anchor=north west][inner sep=0.75pt]    {$a_{i}{}_{+}{}_{1}{}_{,}{}_{1}$};
\draw (319,393.4) node [anchor=north west][inner sep=0.75pt]    {$a_{i}{}_{+}{}_{1}{}_{,}{}_{2}$};
\end{tikzpicture}
\\\newline
In any $G_i$ the edges are oriented in the following way. When $x_i=0$, $\tau_x(a_{i,1})=a_{i+1,1}$ and $\tau_x(a_{i,2})=a_{i+1,2}$. For $x_i=1$, $\tau_x(a_{i,1})=b_{i,1}$ and $\tau_x(a_{i,2})=b_{i,2}$. On the other hand, when $y_i=0$, $\theta_y(b_{i,1})=a_{i+1,1}$ and $\theta_y(b_{i,2})=a_{i+1,2}$. When $y_i=1$, $\theta_y(b_{i,1})=a_{i+1,2}$ and $\theta_y(b_{i,2})=a_{i+1,1}$. In sub-gadget $G_{n+1}$, $\tau_x(a_{n+1,1})=a_{1,1}$ and $\theta_y(b_{n+1,1})=r$.



\begin{claim}
\label{rectangular-reduction}
Given $(x,y)\in\{0,1\}^n\times\{0,1\}^n$ the reduction $\red$ computes the graph $G_{x,y}$ such that the oriented edges form a spanning tree rooted at $r$ if and only if 
$\angle{x,y}=1 \mod 2$. 
\end{claim}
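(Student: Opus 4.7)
The plan is to verify the claim by tracing, for each starting vertex, the walk along the unique out-edges of $G_{x,y}$, while tracking a parity that records $\sum_{j \leq i} x_j y_j \pmod{2}$ at each layer of sub-gadgets.

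First I will record a structural observation: in $G_{x,y}$ every non-root vertex has out-degree exactly $1$, while $r$ has out-degree $0$. So $G_{x,y}$ is a spanning arborescence rooted at $r$ if and only if the walk along out-edges from every vertex terminates at $r$; equivalently, the induced functional digraph on $V$ has no directed cycle. This reduces the whole claim to deciding exactly when such a cycle is present in terms of $(x,y)$.

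The heart of the argument is a ``track index'' in $\{1,2\}$ attached to the walk. For each $i \leq n$ and each $s \in \{1,2\}$, I will show by direct case analysis on $(x_i,y_i) \in \{0,1\}^2$ that the walk entering $G_i$ at $a_{i,s}$ exits at layer $i+1$ with track index $s$ if $x_i y_i = 0$ and with index $3-s$ if $x_i y_i = 1$: when $x_i=0$ this is the single hop $a_{i,s} \to a_{i+1,s}$, and when $x_i=1$ it is the two-hop path $a_{i,s} \to b_{i,s}$ followed by either $b_{i,s} \to a_{i+1,s}$ (if $y_i=0$) or $b_{i,s} \to a_{i+1,3-s}$ (if $y_i=1$). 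So the track index flips precisely when $x_i y_i = 1$, and iterating through $G_1,\dots,G_n$ the walk starting at $a_{1,1}$ reaches layer $n+1$ with index $1$ if $\angle{x,y} \equiv 0 \pmod{2}$ and index $2$ otherwise. A symmetric statement holds for the walk starting at $a_{1,2}$, which always carries the opposite index at every layer.

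To conclude, I will look at $G_{n+1}$, where the only out-edges are $\tau_x(a_{n+1,1}) = a_{1,1}$ and $\theta_y(b_{n+1,1}) = r$, and where the track-$2$ slot at layer $n+1$ is realized by $b_{n+1,1}$ (since $G_{n+1}$ contains no $a_{n+1,2}$). If $\angle{x,y}\equiv 0\pmod{2}$, the walk from $a_{1,1}$ ends at $a_{n+1,1}$, which feeds back into $a_{1,1}$ and closes a directed cycle, so $G_{x,y}$ is not a spanning tree. If $\angle{x,y}\equiv 1\pmod{2}$, the same walk instead ends at $b_{n+1,1} \to r$, while the walk from $a_{1,2}$ ends at $a_{n+1,1}$ and merges back into the first walk at $a_{1,1}$; every remaining vertex $b_{i,s}$ with $x_i=0$ feeds, via its single out-edge, into one of these two walks, so every vertex reaches $r$ and $G_{x,y}$ is a spanning arborescence. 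The step requiring the most care will be the boundary identification between $G_n$ and $G_{n+1}$, which must be kept consistent with the track-index bookkeeping; once that is pinned down the equivalence follows immediately.
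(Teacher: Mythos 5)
Your proposal is correct and takes essentially the same route as the paper's proof: both track the parity of layer flips through the sub-gadgets (your ``track index'' formalism) to show that the walk from $a_{1,1}$ closes a directed cycle via $a_{n+1,1}\to a_{1,1}$ when $\langle x,y\rangle$ is even, and reaches $r$ via $b_{n+1,1}$ when it is odd. Your write-up is somewhat more explicit than the paper's about the boundary convention at layer $n+1$ (identifying the track-$2$ slot with $b_{n+1,1}$, which the paper's prose leaves implicit) and about checking that the remaining vertices $b_{i,s}$ with $x_i=0$ merge into one of the two main walks, a step the paper dismisses as ``simple to verify.''
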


First we give two illustrative examples of the reduction $\red$ for $n=3$ in Figure $2$ and Figure $3$. In the first example the input is $x=(1,1,1)$ and $y=(1,1,1)$. In this case we get a spanning tree rooted at $r$ shown by blue edges in Figure $2$. In the second example the input is $x=(0,1,1)$ and $y=(1,1,1)$. In this case we get a cycle shown by the red edges in Figure $3$.

\tikzset{every picture/.style={line width=0.75pt}} 

\begin{tikzpicture}[x=0.75pt,y=0.75pt,yscale=-1,xscale=1]

\draw   (275.99,68.27) .. controls (275.99,58.75) and (284.04,51.04) .. (293.96,51.04) .. controls (303.88,51.04) and (311.92,58.75) .. (311.92,68.27) .. controls (311.92,77.79) and (303.88,85.51) .. (293.96,85.51) .. controls (284.04,85.51) and (275.99,77.79) .. (275.99,68.27) -- cycle ;
\draw   (277.68,311.22) .. controls (277.68,301.74) and (285.69,294.05) .. (295.57,294.05) .. controls (305.45,294.05) and (313.46,301.74) .. (313.46,311.22) .. controls (313.46,320.7) and (305.45,328.39) .. (295.57,328.39) .. controls (285.69,328.39) and (277.68,320.7) .. (277.68,311.22) -- cycle ;
\draw   (344.94,164.85) .. controls (344.94,155.79) and (352.6,148.44) .. (362.05,148.44) .. controls (371.49,148.44) and (379.15,155.79) .. (379.15,164.85) .. controls (379.15,173.92) and (371.49,181.27) .. (362.05,181.27) .. controls (352.6,181.27) and (344.94,173.92) .. (344.94,164.85) -- cycle ;
\draw   (411.7,68.01) .. controls (411.7,58.49) and (419.75,50.77) .. (429.67,50.77) .. controls (439.59,50.77) and (447.63,58.49) .. (447.63,68.01) .. controls (447.63,77.53) and (439.59,85.24) .. (429.67,85.24) .. controls (419.75,85.24) and (411.7,77.53) .. (411.7,68.01) -- cycle ;
\draw   (413.25,311.22) .. controls (413.25,301.7) and (421.29,293.99) .. (431.21,293.99) .. controls (441.13,293.99) and (449.17,301.7) .. (449.17,311.22) .. controls (449.17,320.74) and (441.13,328.46) .. (431.21,328.46) .. controls (421.29,328.46) and (413.25,320.74) .. (413.25,311.22) -- cycle ;
\draw   (479.95,165.32) .. controls (479.95,155.8) and (488,148.08) .. (497.92,148.08) .. controls (507.84,148.08) and (515.88,155.8) .. (515.88,165.32) .. controls (515.88,174.84) and (507.84,182.55) .. (497.92,182.55) .. controls (488,182.55) and (479.95,174.84) .. (479.95,165.32) -- cycle ;
\draw   (345.39,213.69) .. controls (345.39,204.17) and (353.43,196.45) .. (363.35,196.45) .. controls (373.27,196.45) and (381.32,204.17) .. (381.32,213.69) .. controls (381.32,223.21) and (373.27,230.92) .. (363.35,230.92) .. controls (353.43,230.92) and (345.39,223.21) .. (345.39,213.69) -- cycle ;
\draw   (481.1,213.69) .. controls (481.1,204.17) and (489.14,196.45) .. (499.06,196.45) .. controls (508.99,196.45) and (517.03,204.17) .. (517.03,213.69) .. controls (517.03,223.21) and (508.99,230.92) .. (499.06,230.92) .. controls (489.14,230.92) and (481.1,223.21) .. (481.1,213.69) -- cycle ;
\draw   (548.96,311.22) .. controls (548.96,301.7) and (557,293.99) .. (566.92,293.99) .. controls (576.84,293.99) and (584.88,301.7) .. (584.88,311.22) .. controls (584.88,320.74) and (576.84,328.46) .. (566.92,328.46) .. controls (557,328.46) and (548.96,320.74) .. (548.96,311.22) -- cycle ;
\draw   (547.41,67.55) .. controls (547.41,58.03) and (555.46,50.32) .. (565.38,50.32) .. controls (575.3,50.32) and (583.34,58.03) .. (583.34,67.55) .. controls (583.34,77.07) and (575.3,84.79) .. (565.38,84.79) .. controls (555.46,84.79) and (547.41,77.07) .. (547.41,67.55) -- cycle ;
\draw   (209.23,165.12) .. controls (209.23,156.05) and (216.89,148.7) .. (226.34,148.7) .. controls (235.78,148.7) and (243.44,156.05) .. (243.44,165.12) .. controls (243.44,174.19) and (235.78,181.54) .. (226.34,181.54) .. controls (216.89,181.54) and (209.23,174.19) .. (209.23,165.12) -- cycle ;
\draw   (142,68.54) .. controls (142,59.47) and (149.66,52.13) .. (159.1,52.13) .. controls (168.55,52.13) and (176.21,59.47) .. (176.21,68.54) .. controls (176.21,77.61) and (168.55,84.96) .. (159.1,84.96) .. controls (149.66,84.96) and (142,77.61) .. (142,68.54) -- cycle ;
\draw   (142.89,310.58) .. controls (142.89,301.52) and (150.55,294.17) .. (160,294.17) .. controls (169.45,294.17) and (177.11,301.52) .. (177.11,310.58) .. controls (177.11,319.65) and (169.45,327) .. (160,327) .. controls (150.55,327) and (142.89,319.65) .. (142.89,310.58) -- cycle ;
\draw [color={rgb, 255:red, 0; green, 0; blue, 255 }  ,draw opacity=1 ]   (379.15,164.85) .. controls (473.87,178.4) and (456.66,246.91) .. (431.96,292.61) ;
\draw [shift={(431.21,293.99)}, rotate = 298.77] [color={rgb, 255:red, 0; green, 0; blue, 255 }  ,draw opacity=1 ][line width=0.75]    (10.93,-3.29) .. controls (6.95,-1.4) and (3.31,-0.3) .. (0,0) .. controls (3.31,0.3) and (6.95,1.4) .. (10.93,3.29)   ;
\draw [color={rgb, 255:red, 0; green, 0; blue, 255 }  ,draw opacity=1 ]   (363.35,196.45) .. controls (475.85,192.41) and (454.14,120.55) .. (430.74,86.76) ;
\draw [shift={(429.67,85.24)}, rotate = 414.31] [color={rgb, 255:red, 0; green, 0; blue, 255 }  ,draw opacity=1 ][line width=0.75]    (10.93,-3.29) .. controls (6.95,-1.4) and (3.31,-0.3) .. (0,0) .. controls (3.31,0.3) and (6.95,1.4) .. (10.93,3.29)   ;
\draw [color={rgb, 255:red, 0; green, 0; blue, 255 }  ,draw opacity=1 ]   (311.92,68.27) -- (360.99,146.74) ;
\draw [shift={(362.05,148.44)}, rotate = 237.98] [color={rgb, 255:red, 0; green, 0; blue, 255 }  ,draw opacity=1 ][line width=0.75]    (10.93,-3.29) .. controls (6.95,-1.4) and (3.31,-0.3) .. (0,0) .. controls (3.31,0.3) and (6.95,1.4) .. (10.93,3.29)   ;
\draw [color={rgb, 255:red, 0; green, 0; blue, 255 }  ,draw opacity=1 ]   (447.63,68.01) -- (496.7,146.47) ;
\draw [shift={(497.76,148.17)}, rotate = 237.98] [color={rgb, 255:red, 0; green, 0; blue, 255 }  ,draw opacity=1 ][line width=0.75]    (10.93,-3.29) .. controls (6.95,-1.4) and (3.31,-0.3) .. (0,0) .. controls (3.31,0.3) and (6.95,1.4) .. (10.93,3.29)   ;
\draw [color={rgb, 255:red, 0; green, 0; blue, 255 }  ,draw opacity=1 ]   (313.46,311.22) -- (362.3,232.62) ;
\draw [shift={(363.35,230.92)}, rotate = 481.85] [color={rgb, 255:red, 0; green, 0; blue, 255 }  ,draw opacity=1 ][line width=0.75]    (10.93,-3.29) .. controls (6.95,-1.4) and (3.31,-0.3) .. (0,0) .. controls (3.31,0.3) and (6.95,1.4) .. (10.93,3.29)   ;
\draw [color={rgb, 255:red, 0; green, 0; blue, 255 }  ,draw opacity=1 ]   (449.17,311.22) -- (498.01,232.62) ;
\draw [shift={(499.06,230.92)}, rotate = 481.85] [color={rgb, 255:red, 0; green, 0; blue, 255 }  ,draw opacity=1 ][line width=0.75]    (10.93,-3.29) .. controls (6.95,-1.4) and (3.31,-0.3) .. (0,0) .. controls (3.31,0.3) and (6.95,1.4) .. (10.93,3.29)   ;
\draw [color={rgb, 255:red, 0; green, 0; blue, 255 }  ,draw opacity=1 ]   (515.88,165.32) .. controls (610.6,178.86) and (593.38,247.37) .. (568.69,293.07) ;
\draw [shift={(567.94,294.45)}, rotate = 298.77] [color={rgb, 255:red, 0; green, 0; blue, 255 }  ,draw opacity=1 ][line width=0.75]    (10.93,-3.29) .. controls (6.95,-1.4) and (3.31,-0.3) .. (0,0) .. controls (3.31,0.3) and (6.95,1.4) .. (10.93,3.29)   ;
\draw [color={rgb, 255:red, 0; green, 0; blue, 255 }  ,draw opacity=1 ]   (517.03,213.69) .. controls (613.53,183.46) and (589.9,119.84) .. (566.45,86.3) ;
\draw [shift={(565.38,84.79)}, rotate = 414.31] [color={rgb, 255:red, 0; green, 0; blue, 255 }  ,draw opacity=1 ][line width=0.75]    (10.93,-3.29) .. controls (6.95,-1.4) and (3.31,-0.3) .. (0,0) .. controls (3.31,0.3) and (6.95,1.4) .. (10.93,3.29)   ;
\draw   (299.21,363.07) .. controls (299.21,354.3) and (306.61,347.2) .. (315.74,347.2) .. controls (324.88,347.2) and (332.28,354.3) .. (332.28,363.07) .. controls (332.28,371.83) and (324.88,378.94) .. (315.74,378.94) .. controls (306.61,378.94) and (299.21,371.83) .. (299.21,363.07) -- cycle ;
\draw [color={rgb, 255:red, 0; green, 0; blue, 255 }  ,draw opacity=1 ]   (566.92,328.46) .. controls (537.7,383.45) and (395.01,405.39) .. (333.2,363.7) ;
\draw [shift={(332.28,363.07)}, rotate = 394.76] [color={rgb, 255:red, 0; green, 0; blue, 255 }  ,draw opacity=1 ][line width=0.75]    (10.93,-3.29) .. controls (6.95,-1.4) and (3.31,-0.3) .. (0,0) .. controls (3.31,0.3) and (6.95,1.4) .. (10.93,3.29)   ;
\draw [color={rgb, 255:red, 0; green, 0; blue, 255 }  ,draw opacity=1 ]   (565.38,50.32) .. controls (531.85,21.22) and (339.41,-19.82) .. (159.1,52.13) ;
\draw [shift={(159.1,52.13)}, rotate = 338.25] [color={rgb, 255:red, 0; green, 0; blue, 255 }  ,draw opacity=1 ][line width=0.75]    (10.93,-3.29) .. controls (6.95,-1.4) and (3.31,-0.3) .. (0,0) .. controls (3.31,0.3) and (6.95,1.4) .. (10.93,3.29)   ;
\draw    (275.99,68.27) -- (226.34,148.7) ;
\draw    (311.92,68.27) -- (411.7,68.01) ;
\draw    (411.7,68.01) -- (362.05,148.44) ;
\draw    (363.35,230.92) -- (413.25,311.22) ;
\draw    (313.46,311.22) -- (413.25,311.22) ;
\draw    (447.63,68.01) -- (547.41,67.55) ;
\draw    (547.41,67.55) -- (497.76,148.17) ;
\draw    (499.06,230.92) -- (548.96,311.22) ;
\draw    (449.17,311.22) -- (548.96,311.22) ;
\draw [color={rgb, 255:red, 0; green, 0; blue, 255 }  ,draw opacity=1 ]   (311.92,68.27) -- (360.99,146.74) ;
\draw [shift={(362.05,148.44)}, rotate = 237.98] [color={rgb, 255:red, 0; green, 0; blue, 255 }  ,draw opacity=1 ][line width=0.75]    (10.93,-3.29) .. controls (6.95,-1.4) and (3.31,-0.3) .. (0,0) .. controls (3.31,0.3) and (6.95,1.4) .. (10.93,3.29)   ;
\draw [color={rgb, 255:red, 0; green, 0; blue, 255 }  ,draw opacity=1 ]   (379.15,164.85) .. controls (473.87,178.4) and (456.66,246.91) .. (431.96,292.61) ;
\draw [shift={(431.21,293.99)}, rotate = 298.77] [color={rgb, 255:red, 0; green, 0; blue, 255 }  ,draw opacity=1 ][line width=0.75]    (10.93,-3.29) .. controls (6.95,-1.4) and (3.31,-0.3) .. (0,0) .. controls (3.31,0.3) and (6.95,1.4) .. (10.93,3.29)   ;
\draw [color={rgb, 255:red, 0; green, 0; blue, 255 }  ,draw opacity=1 ]   (449.17,311.22) -- (498.01,232.62) ;
\draw [shift={(499.06,230.92)}, rotate = 481.85] [color={rgb, 255:red, 0; green, 0; blue, 255 }  ,draw opacity=1 ][line width=0.75]    (10.93,-3.29) .. controls (6.95,-1.4) and (3.31,-0.3) .. (0,0) .. controls (3.31,0.3) and (6.95,1.4) .. (10.93,3.29)   ;
\draw [color={rgb, 255:red, 0; green, 0; blue, 255 }  ,draw opacity=1 ]   (517.03,213.69) .. controls (613.53,183.46) and (589.9,119.84) .. (566.45,86.3) ;
\draw [shift={(565.38,84.79)}, rotate = 414.31] [color={rgb, 255:red, 0; green, 0; blue, 255 }  ,draw opacity=1 ][line width=0.75]    (10.93,-3.29) .. controls (6.95,-1.4) and (3.31,-0.3) .. (0,0) .. controls (3.31,0.3) and (6.95,1.4) .. (10.93,3.29)   ;
\draw [color={rgb, 255:red, 0; green, 0; blue, 255 }  ,draw opacity=1 ]   (565.38,50.32) .. controls (531.85,21.22) and (339.41,-19.82) .. (159.1,52.13) ;
\draw [shift={(159.1,52.13)}, rotate = 338.25] [color={rgb, 255:red, 0; green, 0; blue, 255 }  ,draw opacity=1 ][line width=0.75]    (10.93,-3.29) .. controls (6.95,-1.4) and (3.31,-0.3) .. (0,0) .. controls (3.31,0.3) and (6.95,1.4) .. (10.93,3.29)   ;
\draw   (211.23,217.12) .. controls (211.23,208.05) and (218.89,200.7) .. (228.34,200.7) .. controls (237.78,200.7) and (245.44,208.05) .. (245.44,217.12) .. controls (245.44,226.19) and (237.78,233.54) .. (228.34,233.54) .. controls (218.89,233.54) and (211.23,226.19) .. (211.23,217.12) -- cycle ;
\draw [color={rgb, 255:red, 0; green, 0; blue, 255 }  ,draw opacity=1 ]   (243.51,164.92) .. controls (338.23,178.46) and (321.02,246.98) .. (296.32,292.68) ;
\draw [shift={(295.57,294.05)}, rotate = 298.77] [color={rgb, 255:red, 0; green, 0; blue, 255 }  ,draw opacity=1 ][line width=0.75]    (10.93,-3.29) .. controls (6.95,-1.4) and (3.31,-0.3) .. (0,0) .. controls (3.31,0.3) and (6.95,1.4) .. (10.93,3.29)   ;
\draw [color={rgb, 255:red, 0; green, 0; blue, 255 }  ,draw opacity=1 ]   (228.34,200.7) .. controls (340.83,196.66) and (318.45,120.93) .. (295.03,87.03) ;
\draw [shift={(293.96,85.51)}, rotate = 414.31] [color={rgb, 255:red, 0; green, 0; blue, 255 }  ,draw opacity=1 ][line width=0.75]    (10.93,-3.29) .. controls (6.95,-1.4) and (3.31,-0.3) .. (0,0) .. controls (3.31,0.3) and (6.95,1.4) .. (10.93,3.29)   ;
\draw    (228.34,233.54) -- (277.68,311.22) ;
\draw    (176.21,68.54) -- (275.99,68.27) ;
\draw [color={rgb, 255:red, 0; green, 0; blue, 255 }  ,draw opacity=1 ]   (176.21,68.54) -- (225.27,147.01) ;
\draw [shift={(226.34,148.7)}, rotate = 237.98] [color={rgb, 255:red, 0; green, 0; blue, 255 }  ,draw opacity=1 ][line width=0.75]    (10.93,-3.29) .. controls (6.95,-1.4) and (3.31,-0.3) .. (0,0) .. controls (3.31,0.3) and (6.95,1.4) .. (10.93,3.29)   ;
\draw [color={rgb, 255:red, 0; green, 0; blue, 255 }  ,draw opacity=1 ]   (176.21,68.54) -- (225.27,147.01) ;
\draw [shift={(226.34,148.7)}, rotate = 237.98] [color={rgb, 255:red, 0; green, 0; blue, 255 }  ,draw opacity=1 ][line width=0.75]    (10.93,-3.29) .. controls (6.95,-1.4) and (3.31,-0.3) .. (0,0) .. controls (3.31,0.3) and (6.95,1.4) .. (10.93,3.29)   ;
\draw [color={rgb, 255:red, 0; green, 0; blue, 255 }  ,draw opacity=1 ]   (177.11,310.58) -- (225.94,231.99) ;
\draw [shift={(227,230.29)}, rotate = 481.85] [color={rgb, 255:red, 0; green, 0; blue, 255 }  ,draw opacity=1 ][line width=0.75]    (10.93,-3.29) .. controls (6.95,-1.4) and (3.31,-0.3) .. (0,0) .. controls (3.31,0.3) and (6.95,1.4) .. (10.93,3.29)   ;
\draw [color={rgb, 255:red, 0; green, 0; blue, 255 }  ,draw opacity=1 ]   (313.46,311.22) -- (362.3,232.62) ;
\draw [shift={(363.35,230.92)}, rotate = 481.85] [color={rgb, 255:red, 0; green, 0; blue, 255 }  ,draw opacity=1 ][line width=0.75]    (10.93,-3.29) .. controls (6.95,-1.4) and (3.31,-0.3) .. (0,0) .. controls (3.31,0.3) and (6.95,1.4) .. (10.93,3.29)   ;
\draw [color={rgb, 255:red, 0; green, 0; blue, 255 }  ,draw opacity=1 ]   (177.11,310.58) -- (225.94,231.99) ;
\draw [shift={(227,230.29)}, rotate = 481.85] [color={rgb, 255:red, 0; green, 0; blue, 255 }  ,draw opacity=1 ][line width=0.75]    (10.93,-3.29) .. controls (6.95,-1.4) and (3.31,-0.3) .. (0,0) .. controls (3.31,0.3) and (6.95,1.4) .. (10.93,3.29)   ;
\draw    (177.11,310.58) -- (276.89,310.58) ;
\draw [color={rgb, 255:red, 0; green, 0; blue, 255 }  ,draw opacity=1 ]   (243.51,164.92) .. controls (338.23,178.46) and (321.02,246.98) .. (296.32,292.68) ;
\draw [shift={(295.57,294.05)}, rotate = 298.77] [color={rgb, 255:red, 0; green, 0; blue, 255 }  ,draw opacity=1 ][line width=0.75]    (10.93,-3.29) .. controls (6.95,-1.4) and (3.31,-0.3) .. (0,0) .. controls (3.31,0.3) and (6.95,1.4) .. (10.93,3.29)   ;
\draw [color={rgb, 255:red, 0; green, 0; blue, 255 }  ,draw opacity=1 ]   (228.34,200.7) .. controls (340.83,196.66) and (318.45,120.93) .. (295.03,87.03) ;
\draw [shift={(293.96,85.51)}, rotate = 414.31] [color={rgb, 255:red, 0; green, 0; blue, 255 }  ,draw opacity=1 ][line width=0.75]    (10.93,-3.29) .. controls (6.95,-1.4) and (3.31,-0.3) .. (0,0) .. controls (3.31,0.3) and (6.95,1.4) .. (10.93,3.29)   ;
\draw [color={rgb, 255:red, 0; green, 0; blue, 255 }  ,draw opacity=1 ]   (363.35,196.45) .. controls (475.85,192.41) and (454.14,120.55) .. (430.74,86.76) ;
\draw [shift={(429.67,85.24)}, rotate = 414.31] [color={rgb, 255:red, 0; green, 0; blue, 255 }  ,draw opacity=1 ][line width=0.75]    (10.93,-3.29) .. controls (6.95,-1.4) and (3.31,-0.3) .. (0,0) .. controls (3.31,0.3) and (6.95,1.4) .. (10.93,3.29)   ;
\draw [color={rgb, 255:red, 0; green, 0; blue, 255 }  ,draw opacity=1 ]   (447.63,68.01) -- (496.7,146.47) ;
\draw [shift={(497.76,148.17)}, rotate = 237.98] [color={rgb, 255:red, 0; green, 0; blue, 255 }  ,draw opacity=1 ][line width=0.75]    (10.93,-3.29) .. controls (6.95,-1.4) and (3.31,-0.3) .. (0,0) .. controls (3.31,0.3) and (6.95,1.4) .. (10.93,3.29)   ;
\draw [color={rgb, 255:red, 0; green, 0; blue, 255 }  ,draw opacity=1 ]   (515.88,165.32) .. controls (610.6,178.86) and (593.38,247.37) .. (568.69,293.07) ;
\draw [shift={(567.94,294.45)}, rotate = 298.77] [color={rgb, 255:red, 0; green, 0; blue, 255 }  ,draw opacity=1 ][line width=0.75]    (10.93,-3.29) .. controls (6.95,-1.4) and (3.31,-0.3) .. (0,0) .. controls (3.31,0.3) and (6.95,1.4) .. (10.93,3.29)   ;
\draw [color={rgb, 255:red, 0; green, 0; blue, 255 }  ,draw opacity=1 ]   (566.92,328.46) .. controls (537.7,383.45) and (395.01,405.39) .. (333.2,363.7) ;
\draw [shift={(332.28,363.07)}, rotate = 394.76] [color={rgb, 255:red, 0; green, 0; blue, 255 }  ,draw opacity=1 ][line width=0.75]    (10.93,-3.29) .. controls (6.95,-1.4) and (3.31,-0.3) .. (0,0) .. controls (3.31,0.3) and (6.95,1.4) .. (10.93,3.29)   ;

\draw (281.21,60.36) node [anchor=north west][inner sep=0.75pt]    {$a_{2}{}_{,1}$};
\draw (145.36,53.99) node [anchor=north west][inner sep=0.75pt]    {$a_{1}{}_{,1}$};
\draw (421.2,57.36) node [anchor=north west][inner sep=0.75pt]    {$a_{3}{}_{,1}$};
\draw (551.35,56.7) node [anchor=north west][inner sep=0.75pt]    {$a_{4}{}_{,}{}_{1}$};
\draw (149.21,299.85) node [anchor=north west][inner sep=0.75pt]    {$a_{1}{}_{,2}$};
\draw (280.21,297.22) node [anchor=north west][inner sep=0.75pt]    {$a_{2}{}_{,2}$};
\draw (417.63,298.22) node [anchor=north west][inner sep=0.75pt]    {$a_{3}{}_{,2}$};
\draw (214.51,151.74) node [anchor=north west][inner sep=0.75pt]    {$b_{1}{}_{,1}$};
\draw (349.93,153.11) node [anchor=north west][inner sep=0.75pt]    {$b_{2}{}_{,1}$};
\draw (486.78,154.48) node [anchor=north west][inner sep=0.75pt]    {$b_{3}{}_{,1}$};
\draw (351.35,202.36) node [anchor=north west][inner sep=0.75pt]    {$b_{2}{}_{,2}$};
\draw (485.35,203.73) node [anchor=north west][inner sep=0.75pt]    {$b_{3}{}_{,2}$};
\draw (555.35,298.56) node [anchor=north west][inner sep=0.75pt]    {$b_{4}{}_{,}{}_{1}$};
\draw (310.53,354.65) node [anchor=north west][inner sep=0.75pt]    {$r$};
\draw (344.82,73.44) node [anchor=north west][inner sep=0.75pt]  [rotate=-59.58]  {$x_{2} =1$};
\draw (483.83,80.06) node [anchor=north west][inner sep=0.75pt]  [rotate=-62.33]  {$x_{3} =1$};
\draw (331.47,284.54) node [anchor=north west][inner sep=0.75pt]  [rotate=-304.79]  {$x_{2} =1$};
\draw (465.89,293.3) node [anchor=north west][inner sep=0.75pt]  [rotate=-303.17]  {$x_{3} =1$};
\draw (288.96,197.7) node [anchor=north west][inner sep=0.75pt]  [rotate=-64.47]  {$y_{1} =1$};
\draw (397.22,131.48) node [anchor=north west][inner sep=0.75pt]  [rotate=-28.38]  {$y_{2} =1$};
\draw (398.88,213.77) node [anchor=north west][inner sep=0.75pt]  [rotate=-325.25]  {$y_{2} =1$};
\draw (584.86,156.87) node [anchor=north west][inner sep=0.75pt]  [rotate=-284.7]  {$y_{3} =1$};
\draw (526.24,209.22) node [anchor=north west][inner sep=0.75pt]  [rotate=-339.22]  {$y_{3} =1$};
\draw (180.04,402.1) node [anchor=north west][inner sep=0.75pt]    {$\text{Figure}\ 2:\ \text{Gadget for}\ \sum _{i=1}^{3} x_{i} y_{i} =1\bmod 2.$};
\draw (168.95,95.88) node [anchor=north west][inner sep=0.75pt]  [rotate=-57.35]  {$x_{1} =1$};
\draw (212.51,205.74) node [anchor=north west][inner sep=0.75pt]    {$b_{1}{}_{,2}$};
\draw (270.1,155.55) node [anchor=north west][inner sep=0.75pt]  [rotate=-300.01]  {$y_{1} =1$};
\draw (169.56,270.41) node [anchor=north west][inner sep=0.75pt]  [rotate=-302.32]  {$x_{1} =1$};
\end{tikzpicture}
\newline
\newline

\tikzset{every picture/.style={line width=0.75pt}} 

\begin{tikzpicture}[x=0.75pt,y=0.75pt,yscale=-.90,xscale=1]

\draw   (275.99,68.27) .. controls (275.99,58.75) and (284.04,51.04) .. (293.96,51.04) .. controls (303.88,51.04) and (311.92,58.75) .. (311.92,68.27) .. controls (311.92,77.79) and (303.88,85.51) .. (293.96,85.51) .. controls (284.04,85.51) and (275.99,77.79) .. (275.99,68.27) -- cycle ;
\draw   (277.68,311.22) .. controls (277.68,301.74) and (285.69,294.05) .. (295.57,294.05) .. controls (305.45,294.05) and (313.46,301.74) .. (313.46,311.22) .. controls (313.46,320.7) and (305.45,328.39) .. (295.57,328.39) .. controls (285.69,328.39) and (277.68,320.7) .. (277.68,311.22) -- cycle ;
\draw   (344.94,164.85) .. controls (344.94,155.79) and (352.6,148.44) .. (362.05,148.44) .. controls (371.49,148.44) and (379.15,155.79) .. (379.15,164.85) .. controls (379.15,173.92) and (371.49,181.27) .. (362.05,181.27) .. controls (352.6,181.27) and (344.94,173.92) .. (344.94,164.85) -- cycle ;
\draw   (411.7,68.01) .. controls (411.7,58.49) and (419.75,50.77) .. (429.67,50.77) .. controls (439.59,50.77) and (447.63,58.49) .. (447.63,68.01) .. controls (447.63,77.53) and (439.59,85.24) .. (429.67,85.24) .. controls (419.75,85.24) and (411.7,77.53) .. (411.7,68.01) -- cycle ;
\draw   (413.25,311.22) .. controls (413.25,301.7) and (421.29,293.99) .. (431.21,293.99) .. controls (441.13,293.99) and (449.17,301.7) .. (449.17,311.22) .. controls (449.17,320.74) and (441.13,328.46) .. (431.21,328.46) .. controls (421.29,328.46) and (413.25,320.74) .. (413.25,311.22) -- cycle ;
\draw   (479.95,165.32) .. controls (479.95,155.8) and (488,148.08) .. (497.92,148.08) .. controls (507.84,148.08) and (515.88,155.8) .. (515.88,165.32) .. controls (515.88,174.84) and (507.84,182.55) .. (497.92,182.55) .. controls (488,182.55) and (479.95,174.84) .. (479.95,165.32) -- cycle ;
\draw   (345.39,213.69) .. controls (345.39,204.17) and (353.43,196.45) .. (363.35,196.45) .. controls (373.27,196.45) and (381.32,204.17) .. (381.32,213.69) .. controls (381.32,223.21) and (373.27,230.92) .. (363.35,230.92) .. controls (353.43,230.92) and (345.39,223.21) .. (345.39,213.69) -- cycle ;
\draw   (481.1,213.69) .. controls (481.1,204.17) and (489.14,196.45) .. (499.06,196.45) .. controls (508.99,196.45) and (517.03,204.17) .. (517.03,213.69) .. controls (517.03,223.21) and (508.99,230.92) .. (499.06,230.92) .. controls (489.14,230.92) and (481.1,223.21) .. (481.1,213.69) -- cycle ;
\draw   (548.96,311.22) .. controls (548.96,301.7) and (557,293.99) .. (566.92,293.99) .. controls (576.84,293.99) and (584.88,301.7) .. (584.88,311.22) .. controls (584.88,320.74) and (576.84,328.46) .. (566.92,328.46) .. controls (557,328.46) and (548.96,320.74) .. (548.96,311.22) -- cycle ;
\draw   (547.41,67.55) .. controls (547.41,58.03) and (555.46,50.32) .. (565.38,50.32) .. controls (575.3,50.32) and (583.34,58.03) .. (583.34,67.55) .. controls (583.34,77.07) and (575.3,84.79) .. (565.38,84.79) .. controls (555.46,84.79) and (547.41,77.07) .. (547.41,67.55) -- cycle ;
\draw   (209.23,165.12) .. controls (209.23,156.05) and (216.89,148.7) .. (226.34,148.7) .. controls (235.78,148.7) and (243.44,156.05) .. (243.44,165.12) .. controls (243.44,174.19) and (235.78,181.54) .. (226.34,181.54) .. controls (216.89,181.54) and (209.23,174.19) .. (209.23,165.12) -- cycle ;
\draw   (142,68.54) .. controls (142,59.47) and (149.66,52.13) .. (159.1,52.13) .. controls (168.55,52.13) and (176.21,59.47) .. (176.21,68.54) .. controls (176.21,77.61) and (168.55,84.96) .. (159.1,84.96) .. controls (149.66,84.96) and (142,77.61) .. (142,68.54) -- cycle ;
\draw   (142.89,310.58) .. controls (142.89,301.52) and (150.55,294.17) .. (160,294.17) .. controls (169.45,294.17) and (177.11,301.52) .. (177.11,310.58) .. controls (177.11,319.65) and (169.45,327) .. (160,327) .. controls (150.55,327) and (142.89,319.65) .. (142.89,310.58) -- cycle ;
\draw [color={rgb, 255:red, 255; green, 0; blue, 0 }  ,draw opacity=1 ]   (379.15,164.85) .. controls (473.87,178.4) and (456.66,246.91) .. (431.96,292.61) ;
\draw [shift={(431.21,293.99)}, rotate = 298.77] [color={rgb, 255:red, 255; green, 0; blue, 0 }  ,draw opacity=1 ][line width=0.75]    (10.93,-3.29) .. controls (6.95,-1.4) and (3.31,-0.3) .. (0,0) .. controls (3.31,0.3) and (6.95,1.4) .. (10.93,3.29)   ;
\draw [color={rgb, 255:red, 0; green, 0; blue, 255 }  ,draw opacity=1 ]   (363.35,196.45) .. controls (475.85,192.41) and (454.14,120.55) .. (430.74,86.76) ;
\draw [shift={(429.67,85.24)}, rotate = 414.31] [color={rgb, 255:red, 0; green, 0; blue, 255 }  ,draw opacity=1 ][line width=0.75]    (10.93,-3.29) .. controls (6.95,-1.4) and (3.31,-0.3) .. (0,0) .. controls (3.31,0.3) and (6.95,1.4) .. (10.93,3.29)   ;
\draw [color={rgb, 255:red, 255; green, 0; blue, 0 }  ,draw opacity=1 ]   (311.92,68.27) -- (360.99,146.74) ;
\draw [shift={(362.05,148.44)}, rotate = 237.98] [color={rgb, 255:red, 255; green, 0; blue, 0 }  ,draw opacity=1 ][line width=0.75]    (10.93,-3.29) .. controls (6.95,-1.4) and (3.31,-0.3) .. (0,0) .. controls (3.31,0.3) and (6.95,1.4) .. (10.93,3.29)   ;
\draw [color={rgb, 255:red, 0; green, 0; blue, 255 }  ,draw opacity=1 ]   (447.63,68.01) -- (496.7,146.47) ;
\draw [shift={(497.76,148.17)}, rotate = 237.98] [color={rgb, 255:red, 0; green, 0; blue, 255 }  ,draw opacity=1 ][line width=0.75]    (10.93,-3.29) .. controls (6.95,-1.4) and (3.31,-0.3) .. (0,0) .. controls (3.31,0.3) and (6.95,1.4) .. (10.93,3.29)   ;
\draw [color={rgb, 255:red, 0; green, 0; blue, 255 }  ,draw opacity=1 ]   (313.46,311.22) -- (362.3,232.62) ;
\draw [shift={(363.35,230.92)}, rotate = 481.85] [color={rgb, 255:red, 0; green, 0; blue, 255 }  ,draw opacity=1 ][line width=0.75]    (10.93,-3.29) .. controls (6.95,-1.4) and (3.31,-0.3) .. (0,0) .. controls (3.31,0.3) and (6.95,1.4) .. (10.93,3.29)   ;
\draw [color={rgb, 255:red, 255; green, 0; blue, 0 }  ,draw opacity=1 ]   (449.17,311.22) -- (498.01,232.62) ;
\draw [shift={(499.06,230.92)}, rotate = 481.85] [color={rgb, 255:red, 255; green, 0; blue, 0 }  ,draw opacity=1 ][line width=0.75]    (10.93,-3.29) .. controls (6.95,-1.4) and (3.31,-0.3) .. (0,0) .. controls (3.31,0.3) and (6.95,1.4) .. (10.93,3.29)   ;
\draw [color={rgb, 255:red, 0; green, 0; blue, 255 }  ,draw opacity=1 ]   (515.88,165.32) .. controls (610.6,178.86) and (593.38,247.37) .. (568.69,293.07) ;
\draw [shift={(567.94,294.45)}, rotate = 298.77] [color={rgb, 255:red, 0; green, 0; blue, 255 }  ,draw opacity=1 ][line width=0.75]    (10.93,-3.29) .. controls (6.95,-1.4) and (3.31,-0.3) .. (0,0) .. controls (3.31,0.3) and (6.95,1.4) .. (10.93,3.29)   ;
\draw [color={rgb, 255:red, 255; green, 0; blue, 0 }  ,draw opacity=1 ]   (517.03,213.69) .. controls (613.53,183.46) and (589.9,119.84) .. (566.45,86.3) ;
\draw [shift={(565.38,84.79)}, rotate = 414.31] [color={rgb, 255:red, 255; green, 0; blue, 0 }  ,draw opacity=1 ][line width=0.75]    (10.93,-3.29) .. controls (6.95,-1.4) and (3.31,-0.3) .. (0,0) .. controls (3.31,0.3) and (6.95,1.4) .. (10.93,3.29)   ;
\draw   (299.21,363.07) .. controls (299.21,354.3) and (306.61,347.2) .. (315.74,347.2) .. controls (324.88,347.2) and (332.28,354.3) .. (332.28,363.07) .. controls (332.28,371.83) and (324.88,378.94) .. (315.74,378.94) .. controls (306.61,378.94) and (299.21,371.83) .. (299.21,363.07) -- cycle ;
\draw [color={rgb, 255:red, 0; green, 0; blue, 255 }  ,draw opacity=1 ]   (566.92,328.46) .. controls (537.7,383.45) and (395.01,405.39) .. (333.2,363.7) ;
\draw [shift={(332.28,363.07)}, rotate = 394.76] [color={rgb, 255:red, 0; green, 0; blue, 255 }  ,draw opacity=1 ][line width=0.75]    (10.93,-3.29) .. controls (6.95,-1.4) and (3.31,-0.3) .. (0,0) .. controls (3.31,0.3) and (6.95,1.4) .. (10.93,3.29)   ;
\draw [color={rgb, 255:red, 255; green, 0; blue, 0 }  ,draw opacity=1 ]   (565.38,50.32) .. controls (531.85,21.22) and (339.41,-19.82) .. (159.1,52.13) ;
\draw [shift={(159.1,52.13)}, rotate = 338.25] [color={rgb, 255:red, 255; green, 0; blue, 0 }  ,draw opacity=1 ][line width=0.75]    (10.93,-3.29) .. controls (6.95,-1.4) and (3.31,-0.3) .. (0,0) .. controls (3.31,0.3) and (6.95,1.4) .. (10.93,3.29)   ;
\draw [color={rgb, 255:red, 255; green, 0; blue, 0 }  ,draw opacity=1 ]   (176.21,68.54) -- (273.99,68.28) ;
\draw [shift={(275.99,68.27)}, rotate = 539.85] [color={rgb, 255:red, 255; green, 0; blue, 0 }  ,draw opacity=1 ][line width=0.75]    (10.93,-3.29) .. controls (6.95,-1.4) and (3.31,-0.3) .. (0,0) .. controls (3.31,0.3) and (6.95,1.4) .. (10.93,3.29)   ;
\draw    (176.21,68.54) -- (226.34,148.7) ;
\draw    (275.99,68.27) -- (226.34,148.7) ;
\draw    (311.92,68.27) -- (411.7,68.01) ;
\draw    (411.7,68.01) -- (362.05,148.44) ;
\draw    (363.35,230.92) -- (413.25,311.22) ;
\draw    (313.46,311.22) -- (413.25,311.22) ;
\draw    (447.63,68.01) -- (547.41,67.55) ;
\draw    (547.41,67.55) -- (497.76,148.17) ;
\draw    (499.06,230.92) -- (548.96,311.22) ;
\draw    (449.17,311.22) -- (548.96,311.22) ;
\draw [color={rgb, 255:red, 255; green, 0; blue, 0 }  ,draw opacity=1 ]   (176.21,68.54) -- (273.99,68.28) ;
\draw [shift={(275.99,68.27)}, rotate = 539.85] [color={rgb, 255:red, 255; green, 0; blue, 0 }  ,draw opacity=1 ][line width=0.75]    (10.93,-3.29) .. controls (6.95,-1.4) and (3.31,-0.3) .. (0,0) .. controls (3.31,0.3) and (6.95,1.4) .. (10.93,3.29)   ;
\draw [color={rgb, 255:red, 255; green, 0; blue, 0 }  ,draw opacity=1 ]   (311.92,68.27) -- (360.99,146.74) ;
\draw [shift={(362.05,148.44)}, rotate = 237.98] [color={rgb, 255:red, 255; green, 0; blue, 0 }  ,draw opacity=1 ][line width=0.75]    (10.93,-3.29) .. controls (6.95,-1.4) and (3.31,-0.3) .. (0,0) .. controls (3.31,0.3) and (6.95,1.4) .. (10.93,3.29)   ;
\draw [color={rgb, 255:red, 255; green, 0; blue, 0 }  ,draw opacity=1 ]   (379.15,164.85) .. controls (473.87,178.4) and (456.66,246.91) .. (431.96,292.61) ;
\draw [shift={(431.21,293.99)}, rotate = 298.77] [color={rgb, 255:red, 255; green, 0; blue, 0 }  ,draw opacity=1 ][line width=0.75]    (10.93,-3.29) .. controls (6.95,-1.4) and (3.31,-0.3) .. (0,0) .. controls (3.31,0.3) and (6.95,1.4) .. (10.93,3.29)   ;
\draw [color={rgb, 255:red, 255; green, 0; blue, 0 }  ,draw opacity=1 ]   (449.17,311.22) -- (498.01,232.62) ;
\draw [shift={(499.06,230.92)}, rotate = 481.85] [color={rgb, 255:red, 255; green, 0; blue, 0 }  ,draw opacity=1 ][line width=0.75]    (10.93,-3.29) .. controls (6.95,-1.4) and (3.31,-0.3) .. (0,0) .. controls (3.31,0.3) and (6.95,1.4) .. (10.93,3.29)   ;
\draw [color={rgb, 255:red, 255; green, 0; blue, 0 }  ,draw opacity=1 ]   (517.03,213.69) .. controls (613.53,183.46) and (589.9,119.84) .. (566.45,86.3) ;
\draw [shift={(565.38,84.79)}, rotate = 414.31] [color={rgb, 255:red, 255; green, 0; blue, 0 }  ,draw opacity=1 ][line width=0.75]    (10.93,-3.29) .. controls (6.95,-1.4) and (3.31,-0.3) .. (0,0) .. controls (3.31,0.3) and (6.95,1.4) .. (10.93,3.29)   ;
\draw [color={rgb, 255:red, 255; green, 0; blue, 0 }  ,draw opacity=1 ]   (565.38,50.32) .. controls (531.85,21.22) and (339.41,-19.82) .. (159.1,52.13) ;
\draw [shift={(159.1,52.13)}, rotate = 338.25] [color={rgb, 255:red, 255; green, 0; blue, 0 }  ,draw opacity=1 ][line width=0.75]    (10.93,-3.29) .. controls (6.95,-1.4) and (3.31,-0.3) .. (0,0) .. controls (3.31,0.3) and (6.95,1.4) .. (10.93,3.29)   ;
\draw   (211.23,217.12) .. controls (211.23,208.05) and (218.89,200.7) .. (228.34,200.7) .. controls (237.78,200.7) and (245.44,208.05) .. (245.44,217.12) .. controls (245.44,226.19) and (237.78,233.54) .. (228.34,233.54) .. controls (218.89,233.54) and (211.23,226.19) .. (211.23,217.12) -- cycle ;
\draw [color={rgb, 255:red, 0; green, 0; blue, 255 }  ,draw opacity=1 ][fill={rgb, 255:red, 0; green, 0; blue, 255 }  ,fill opacity=1 ]   (177.11,310.58) -- (275.68,311.21) ;
\draw [shift={(277.68,311.22)}, rotate = 180.36] [color={rgb, 255:red, 0; green, 0; blue, 255 }  ,draw opacity=1 ][line width=0.75]    (10.93,-3.29) .. controls (6.95,-1.4) and (3.31,-0.3) .. (0,0) .. controls (3.31,0.3) and (6.95,1.4) .. (10.93,3.29)   ;
\draw [color={rgb, 255:red, 0; green, 0; blue, 255 }  ,draw opacity=1 ]   (243.51,164.92) .. controls (338.23,178.46) and (321.02,246.98) .. (296.32,292.68) ;
\draw [shift={(295.57,294.05)}, rotate = 298.77] [color={rgb, 255:red, 0; green, 0; blue, 255 }  ,draw opacity=1 ][line width=0.75]    (10.93,-3.29) .. controls (6.95,-1.4) and (3.31,-0.3) .. (0,0) .. controls (3.31,0.3) and (6.95,1.4) .. (10.93,3.29)   ;
\draw [color={rgb, 255:red, 0; green, 0; blue, 255 }  ,draw opacity=1 ]   (228.34,200.7) .. controls (340.83,196.66) and (318.45,120.93) .. (295.03,87.03) ;
\draw [shift={(293.96,85.51)}, rotate = 414.31] [color={rgb, 255:red, 0; green, 0; blue, 255 }  ,draw opacity=1 ][line width=0.75]    (10.93,-3.29) .. controls (6.95,-1.4) and (3.31,-0.3) .. (0,0) .. controls (3.31,0.3) and (6.95,1.4) .. (10.93,3.29)   ;
\draw    (228.34,233.54) -- (277.68,311.22) ;
\draw    (177.11,310.58) -- (228.34,233.54) ;
\draw [color={rgb, 255:red, 0; green, 0; blue, 255 }  ,draw opacity=1 ]   (566.92,328.46) .. controls (537.7,383.45) and (395.01,405.39) .. (333.2,363.7) ;
\draw [shift={(332.28,363.07)}, rotate = 394.76] [color={rgb, 255:red, 0; green, 0; blue, 255 }  ,draw opacity=1 ][line width=0.75]    (10.93,-3.29) .. controls (6.95,-1.4) and (3.31,-0.3) .. (0,0) .. controls (3.31,0.3) and (6.95,1.4) .. (10.93,3.29)   ;

\draw (281.21,60.36) node [anchor=north west][inner sep=0.75pt]    {$a_{2}{}_{,2}$};
\draw (145.36,53.99) node [anchor=north west][inner sep=0.75pt]    {$a_{1}{}_{,1}$};
\draw (421.2,57.36) node [anchor=north west][inner sep=0.75pt]    {$a_{3}{}_{,1}$};
\draw (549.35,53.7) node [anchor=north west][inner sep=0.75pt]    {$a_{4}{}_{,}{}_{1}$};
\draw (149.21,299.85) node [anchor=north west][inner sep=0.75pt]    {$a_{1}{}_{,2}$};
\draw (280.21,297.22) node [anchor=north west][inner sep=0.75pt]    {$a_{2}{}_{,2}$};
\draw (417.63,298.22) node [anchor=north west][inner sep=0.75pt]    {$a_{3}{}_{,2}$};
\draw (214.51,151.74) node [anchor=north west][inner sep=0.75pt]    {$b_{1}{}_{,1}$};
\draw (349.93,153.11) node [anchor=north west][inner sep=0.75pt]    {$b_{2}{}_{,1}$};
\draw (486.78,154.48) node [anchor=north west][inner sep=0.75pt]    {$b_{3}{}_{,1}$};
\draw (351.35,202.36) node [anchor=north west][inner sep=0.75pt]    {$b_{2}{}_{,2}$};
\draw (485.35,203.73) node [anchor=north west][inner sep=0.75pt]    {$b_{3}{}_{,2}$};
\draw (554.35,298.56) node [anchor=north west][inner sep=0.75pt]    {$b_{4}{}_{,}{}_{1}$};
\draw (310.53,354.65) node [anchor=north west][inner sep=0.75pt]    {$r$};
\draw (344.82,73.44) node [anchor=north west][inner sep=0.75pt]  [rotate=-59.58]  {$x_{2} =1$};
\draw (482.83,80.06) node [anchor=north west][inner sep=0.75pt]  [rotate=-62.33]  {$x_{3} =1$};
\draw (331.47,284.54) node [anchor=north west][inner sep=0.75pt]  [rotate=-304.79]  {$x_{2} =1$};
\draw (465.89,293.3) node [anchor=north west][inner sep=0.75pt]  [rotate=-303.17]  {$x_{3} =1$};
\draw (288.96,197.7) node [anchor=north west][inner sep=0.75pt]  [rotate=-64.47]  {$y_{1} =1$};
\draw (397.22,131.48) node [anchor=north west][inner sep=0.75pt]  [rotate=-28.38]  {$y_{2} =1$};
\draw (398.88,213.77) node [anchor=north west][inner sep=0.75pt]  [rotate=-325.25]  {$y_{2} =1$};
\draw (592.86,161.87) node [anchor=north west][inner sep=0.75pt]  [rotate=-284.7]  {$y_{3} =1$};
\draw (526.24,209.22) node [anchor=north west][inner sep=0.75pt]  [rotate=-339.22]  {$y_{3} =1$};
\draw (193.04,405.1) node [anchor=north west][inner sep=0.75pt]    {$Figure\ \ 4: \text{ Gadget for}\ \sum _{i=1}^{3} x_{i} y_{i} =0\bmod 2.$};
\draw (199.66,43.12) node [anchor=north west][inner sep=0.75pt]    {$x_{1} =0$};
\draw (212.51,205.74) node [anchor=north west][inner sep=0.75pt]    {$b_{1}{}_{,2}$};
\draw (270.1,155.55) node [anchor=north west][inner sep=0.75pt]  [rotate=-300.01]  {$y_{1} =1$};

\end{tikzpicture}
\newline
Now we give the proof of the claim in details below.
\begin{proof}
If $\angle{x,y}=0 \mod 2$, there is a path $\mathcal{P}_1 : a_{1,1} \rightsquigarrow a_{n+1,1}$ which flips layers even number of times and then completes a cycle via the edge $a_{n+1,1}\rightarrow a_{1,1}$. Thus in such a case there does not exist a path from $a_{1,1}$ to $r$, ruling out a spanning tree.  

On the other hand, if $\angle{x,y}=1 \mod 2$, there is a path $\mathcal{P}_2 : a_{1,1}\rightsquigarrow b_{n+1,1} \rightarrow r$ which flips layers an odd number of times. 
For every node $u \in G_{x,y}$ shown in the figure above that does not lie on the path $\mathcal{P}_2$, observe that either there exists a path $\mathcal{P}_u : u \rightsquigarrow a_{n+1,1}\rightarrow a_{1,1}\rightsquigarrow r$ or there exists a path $P'_u: u \rightsquigarrow b_{n+1,1}\rightarrow r$. In either case, therefore, there exists a path from $u$ to $r$. It is simple to  verify that no cycles are formed proving the existence of a spanning tree rooted at $r$.


\end{proof}

Now we are ready to prove  Lemma \ref{lem:fixed-dist}.

\begin{proof}
Define $\mathcal{R}=\{(\tau_{x}, \theta_{y}) : x,y\in\{0,1\}^n\times \{0,1\}^n\}\subseteq \Tau\times \Theta$. 
Observe that, the reduction $\red$ forms a bijection between $\{0,1\}^n \times \{0,1\}^n$ and $\mathcal{R}$. Thus a uniform distribution $U_{n}$ on $\{0,1\}^n \times\{0,1\}^n$ induces a uniform distribution on $\mathcal{R}$ via $\red$. We call the latter distribution $\mathcal{D}_P$.\\
Consider a rectangle $S=(S_1\times S_2) \subseteq \Tau\times \Theta$. We want to prove that $\disp_{\mathcal{D}_P}(S)= 2^{-\Omega(n)}$. Let $S'=S'_1\times S'_2$ be a sub rectangle of $S$ such that $S'_1=\{ \tau_x| \tau_x\in S_1, x\in \{0,1\}^n\}$ and $S'_2=\{\theta_y| \theta_y \in S_2, y \in \{0,1\}^n\}$.
Observe that $\disp_{\mathcal{D}_P}(S')=\disp_{\mathcal{D}_P}(S)$. Since $\red$ is a bijection between input instances of $\IP_n$ and $\mathcal{R}$, there exists a rectangle $T\subseteq \{0,1\}^n\times\{0,1\}^n$, such that $\red(T)=S'$. So, $\disp_{\mathcal{D}_P}(S')=\disp_{U_{n}}(T)$. Now using Theorem \ref{thm:ip-discrepancy} we conclude $\disp_{\mathcal{D}_P}(S)= 2^{-\Omega(n)}$. 



\end{proof}

\subsection{The Universal Distribution for Nearly-Balanced Partitions}\label{universal}

Lemma \ref{lem:fixed-dist} in the earlier section shows that for any \emph{fixed} perfectly balanced partition $P$, the function $g_P$ has low discrepancy under the distribution $\mathcal{D}_P$. 

However, to use these ideas in the context of lower bounds for monotone circuits, we need to handle a more subtle situation. Recall from Theorem \ref{thm:monotone-structure} that every product polynomial in the decomposition of the circuit, has its \emph{own} nearly-balanced partition of the first indices of the variables. This forces us to switch quantifiers, and specify a universal distribution that gives rise to small discrepancy for every nearly-balanced partition. 

We now consider a graph with a fixed set $W$ of vertices with 
$|W|=n'=c\cdot n$ where $c$ is an appropriate constant to be fixed later in the analysis. The universal distribution $\Delta$ is supported on a subset of the set $\Gamma=\{\nu |\nu : W\backslash\{1\}\rightarrow W \}$. 


Now we define the universal distribution $\Delta$ on an appropriate subset of $\Gamma$ by the following sampling process. 


\subsubsection*{Sampling process for $\Delta$: }
\begin{enumerate}
    \item Sample a random permutation $\pi : W\rightarrow W $ such that $\pi(1)=1$

    \item  Sample $(x,y)\in\{0,1\}^n\times \{0,1\}^n$ uniformly at random. 
    Construct ${G}_{x,y}$ as defined in the previous section \ref{gadget reduction} by applying the reduction $\red$ on $(x,y)$. 
    
       \item Embed an isomorphic copy of $G_{x,y}$ on the set of vertices $\pi(1),\ldots \pi(4n+3)$ using the following vertex map $E:V(G_{x,y})\to W$
    \begin{enumerate}
    \item For $0\leq j\leq n$ set $E(a_{j+1,1})=\pi(4j+2),\  E(a_{j+1,2})=\pi(4j+3),\  E(b_{j+1,1})=\pi(4j+4)$ and $ E(b_{j+1,2})=\pi(4j+5)$.
    \item Set $E(a_{n+1,1})=\pi(4n+2),\   E(b_{n+1,1})=\pi(4n+3),\ E(r)=\pi(1)$.
\end{enumerate}
\item Connect the remaining vertices of $W$ that is $\pi(4n+4),\ldots \pi(n')$ to $E(r)=\pi(1)$.
    
    \end{enumerate}

Let the vertices  of $W$ be partitioned into nearly-balanced sets $W_{A}\cup W_{B}$, where $W_A$ ($W_B$) is with Alice (Bob). Given a map from $\Gamma$, it induces two functions $\tau, \theta : W_A, W_B \rightarrow W$ which will be given to Alice and Bob respectively. 

The main result that we show is the following lemma which is the more precise re-statement of Lemma \ref{lem:variable-dist-intro}.

\begin{lemma}\label{lem:variable-dist}
Under the distribution $\Delta\sim \Gamma$, the discrepancy of $g_P$ is $2^{-\Omega(n)}$ for every nearly-balanced partition $P$. More precisely, 
$\disp_{\Delta}(C^{\ST_{n'}})$ is at most $2^{-\Omega_c(n)}$, where $n'=c n$ for a constant $c$. 
\end{lemma}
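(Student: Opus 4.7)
The plan is to bound the discrepancy of $g_P$ by conditioning on the random permutation $\pi$ and then reducing, for each typical $\pi$, the conditional discrepancy to that of inner product on an $\Omega(n)$-sized sub-universe. Fix any nearly-balanced partition $P = (W_A, W_B)$ of $W$ and any rectangle $R = S_1 \times S_2 \subseteq \Tau \times \Theta$. The first step I would take is to write
\[
\disp_{\Delta_A, \Delta_B}(g_P, R) \;=\; \Bigl|\E_{\pi,x,y}\bigl[ 1_R(\tau,\theta)(-1)^{g_P(\tau,\theta)} \bigr]\Bigr| \;\le\; \E_{\pi}\!\left[\, \Bigl| \E_{x,y\mid\pi}\bigl[ 1_R \cdot (-1)^{g_P} \bigr] \Bigr| \,\right],
\]
reducing the task to bounding the inner quantity for most $\pi$'s.

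Next, I would call an index $i \in [n]$ \emph{good} for $(\pi, P)$ if either $\pi$ sends both of $a_{i,1}, a_{i,2}$ into $W_A$ and both of $b_{i,1}, b_{i,2}$ into $W_B$, or the symmetric condition holds with $W_A$ and $W_B$ swapped. Since $|W_A|, |W_B| \ge n'/3 = cn/3$, the marginal probability that a fixed $i$ is good is bounded below by an absolute constant $p = p(c) > 0$. Because this event depends on only four positions of $\pi$, a standard Azuma-type concentration for random permutations (via a bounded-difference martingale under vertex exposure) should give that at least $pn/2$ indices are good except on a set of $\pi$'s of measure $2^{-\Omega(n)}$; such $\pi$'s will be called \emph{good}.

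For a good $\pi$ with good-index set $S$, the strategy is to reduce the inner expectation to the discrepancy of $\IP_{|S|}$. By Claim~\ref{rectangular-reduction}, whether the edges form a spanning tree rooted at $r$ equals $\sum_{i=1}^n x_i y_i \bmod 2$, and this does not depend on $P$. I would then do a case analysis on the sixteen possible placements of $\{a_{i,1}, a_{i,2}, b_{i,1}, b_{i,2}\}$ for $i \notin S$: in every such case, at least one party locally knows $x_i y_i$, either because both pairs lie on the same side of $P$, or because one pair is split (making that bit redundantly available to both sides) while the other pair is entirely on one side. Assigning each bad index to a party that can compute it yields a decomposition
\[
\sum_{i=1}^n x_i y_i \;\equiv\; \IP_{|S|}(x_S, y_S) + c_A(x,y) + c_B(x,y) \pmod 2,
\]
where $c_A$ depends only on Alice's coordinates and $c_B$ only on Bob's. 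Since $(x_S, y_S)$ is uniform and independent of the remaining coordinates, with $x_S$ belonging to Alice's view and $y_S$ to Bob's, the signs $(-1)^{c_A}$ and $(-1)^{c_B}$ can be absorbed into the indicators of $S_1$ and $S_2$, and summing over the non-$S$ coordinates reduces the inner expectation to the discrepancy of $\IP_{|S|}$ under the uniform distribution on some rectangle in $\{0,1\}^{|S|} \times \{0,1\}^{|S|}$. Theorem~\ref{thm:ip-discrepancy} then gives a bound of $2^{-\Omega(|S|)} = 2^{-\Omega(n)}$, and combining with the $2^{-\Omega(n)}$ bound on bad $\pi$'s completes the proof.

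The main obstacle I anticipate is the bookkeeping in the bad-index case analysis: verifying that absorbing the Alice-only and Bob-only signs $(-1)^{c_A}, (-1)^{c_B}$ preserves the rectangular structure, and confirming the sixteen-case placement analysis correctly assigns each bad $i$ to a party who can compute $x_i y_i$. A secondary technical point will be the concentration argument for the good-index count, but since each indicator depends on only $O(1)$ positions of the permutation, standard Azuma/Hoeffding bounds apply cleanly.
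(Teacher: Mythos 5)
Your proposal is correct and takes essentially the same approach as the paper: split according to whether the random permutation $\pi$ "honours" a linear fraction of the sub-gadgets, show that dishonouring happens with exponentially small probability, and for good $\pi$ condition on the dishonoured coordinates so that the residual quantity is a rectangular test against $\IP$ on the honoured ones, invoking Theorem~\ref{thm:ip-discrepancy}. Your minor variations — symmetrizing the honouring condition to include the swapped case, replacing the paper's conditional-probability union bound with a McDiarmid/Azuma concentration for permutations, and absorbing the dishonoured-index contribution as Alice-only and Bob-only sign terms rather than the paper's explicit nested conditioning on $(u_1,v_1)$ — are refinements in bookkeeping, not a different route.
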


\subsection{Discrepancy Under the Universal Distribution}\label{low-discrepancy}
We need a notion of \emph{honouring} a sub-gadget under a permutation $\pi$. 

\begin{definition}\label{honour}

In ${G}_{x,y}$ each sub-gadget ${G}_i (1\leq i\leq n)$ has two pairs of vertices  $\{a_{i,1}, a_{i,2}\}$ and $\{b_{i,1}, b_{i,2}\}$. For any given nearly-balanced partition $P$ of $W$, we say a permutation $\pi$ honours the sub-gadget ${G}_i$ if for every $j$, $E(a_{i,j})=\pi(k)\in W_A$ and $E(b_{i,j})=\pi(l)\in W_B$ for some $k,l \in [n']\backslash \{1\}$. 
\end{definition}

For a suitable $\alpha\in(0,1)$ which we fix later, 
we say that
$\pi$ is good for a partition $P$ if $\pi$ honours at least $\alpha n$ sub-gadgets 
$G_i$ for $1\leq i\leq n$. Otherwise, we say that $\pi$ is bad. We prove a simple lemma showing that the random map $\pi$ is bad with low probability.

\begin{lemma}\label{bad-prob}
Let $P$ be any nearly  balanced partition. Then, for suitably chosen constants $\alpha$ and $c$,  $$\Pr[\pi ~\text{is bad}]\leq 2^{-\Omega(n)}.$$
\end{lemma}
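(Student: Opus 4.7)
Assume without loss of generality that $1 \in W_A$ (the case $1\in W_B$ is symmetric). Since $\pi(1)=1$ is forced, $\pi$ restricted to the positions $\{2,\ldots,n'\}$ is a uniformly random bijection onto $W\setminus\{1\}$. For each $i \in [n]$, the sub-gadget $G_i$ occupies exactly the block $B_i = \{4i-2,\,4i-1,\,4i,\,4i+1\}$, and the blocks $B_1,\ldots,B_n$ are pairwise disjoint (and disjoint from the positions $\{1,\,4n+2,\,4n+3,\ldots,n'\}$ used for $r$, $G_{n+1}$ and the ``dangling'' vertices). Let $X_i$ be the indicator of the event that $\pi$ honours $G_i$, i.e. $\pi(4i-2),\pi(4i-1)\in W_A$ and $\pi(4i),\pi(4i+1)\in W_B$, and set $N=\sum_{i=1}^n X_i$. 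Then $\pi$ is bad precisely when $N < \alpha n$, so it suffices to show $\Pr[N<\alpha n]\le 2^{-\Omega(n)}$ for a suitable $\alpha$.

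The first step is a straightforward marginal calculation. For each fixed $i$,
\[
\Pr[X_i=1] \;=\; \frac{(|W_A|-1)(|W_A|-2)\,|W_B|(|W_B|-1)}{(n'-1)(n'-2)(n'-3)(n'-4)}.
\]
Since $|W_A|,|W_B|\ge n'/3 = cn/3$, for $c$ sufficiently large this quantity is bounded below by an absolute constant $\beta>0$ (for instance, $\beta = 1/256$ works once $n$ is large enough). Summing over $i$ gives $\E[N]\ge \beta n$. The second step records the Lipschitz behaviour of $N$: if $\pi'$ is obtained from $\pi$ by swapping the values at two positions $a,b\in\{2,\ldots,n'\}$, then the swap can alter $X_i$ only for those $i$ with $B_i\cap\{a,b\}\neq\emptyset$. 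Because each position belongs to at most one block, this accounts for at most two indices, so $|N(\pi)-N(\pi')|\le 2$.

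The final step is to apply the bounded-differences (McDiarmid) inequality for uniformly random permutations: if $f:S_m\to\R$ satisfies $|f(\pi)-f(\pi\circ\tau)|\le c$ for every transposition $\tau$, then $\Pr[f\le \E[f]-t]\le \exp\!\bigl(-t^2/(2mc^2)\bigr)$. Applying this to $N$ with $m=n'-1\le cn$ and Lipschitz constant $2$, and choosing $\alpha := \beta/2$ and $t := \beta n/2$, we obtain
\[
\Pr[\pi\text{ is bad}] \;=\; \Pr[N < \alpha n] \;\le\; \Pr\!\bigl[N \le \E[N] - \beta n/2\bigr] \;\le\; \exp\!\left(-\frac{\beta^2 n}{32c}\right) \;=\; 2^{-\Omega(n)},
\]
as required. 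The only mildly delicate point is invoking concentration in the permutation model rather than for independent samples; but because the sub-gadgets sit on \emph{disjoint} blocks of positions, the swap-Lipschitz constant is a small absolute constant, so the standard permutation version of McDiarmid's inequality applies cleanly.
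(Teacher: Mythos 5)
Your proof is correct, but it takes a genuinely different route from the paper's. The paper proves the result by a union bound: it lower-bounds the probability of honouring $G_i$ \emph{conditioned} on an arbitrary placement of the preceding $4(i-1)$ positions (which is why the paper needs the blow-up constant $c$ to be fairly large, so that at least $\Omega(n)$ vertices of each side remain available even at the last sub-gadget), then bounds the probability of dishonouring any fixed set of $(1-\alpha)n$ sub-gadgets by $(1-\beta)^{(1-\alpha)n}$, and finally union-bounds over the $\binom{n}{(1-\alpha)n}$ choices of which sub-gadgets to dishonour, using the entropy estimate $\binom{n}{k}\le 2^{nH(k/n)}$ and choosing $\alpha$ so that $H(1-\alpha)<\delta(1-\alpha)$ with $2^{-\delta}=1-\beta$. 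You instead observe that the blocks $B_1,\dots,B_n$ are disjoint, compute the unconditional marginal $\Pr[X_i=1]\ge\beta$ (which, as you note, does not actually require $c$ large --- only that $n$ is large and $|W_A|,|W_B|\ge n'/3$; the ``$c$ sufficiently large'' remark is a harmless over-hedge), conclude $\E[N]\ge\beta n$ by linearity, and then invoke McDiarmid's bounded-differences inequality for random permutations with swap-Lipschitz constant $2$ (which holds precisely because each position lies in at most one block). Your route is cleaner and more modular once one is willing to cite the permutation form of McDiarmid, and it avoids the somewhat delicate entropy-vs.-$\delta$ balancing in the paper; the paper's route is more self-contained, relying only on elementary conditional probability and Stirling-type estimates. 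Both correctly establish $\Pr[\pi\text{ is bad}]\le 2^{-\Omega(n)}$ and yield a positive constant $\alpha$, which is all the downstream argument in Lemma~\ref{lem:variable-dist} requires.
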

\begin{proof}
Let $P=W_A\cup W_B$ and $|W_A|=k, |W_B|=t$. Let $\pi$ be any random permutation such that $\pi(1)=1$.
We give a simple lower bound on the probability of $\pi$ honouring sub-gadget $G_i$ conditioned on the vertices of the first $i-1$ sub-gadgets being mapped to an arbitrary set of vertices in $W$. More precisely, for $1 \le j < i$ and $2\le r \le 5$, let $w_{j,r} \in W$ be arbitrary vertices. Then,
\begin{equation*}
\begin{split}
\Pr\bigg[\pi ~\text{honours} ~{G}_i\, \bigg|\, \pi\big(4(j-1)+r\big) = w_{j,r}, 1\le j < i, 2 \le r \le 5  \bigg] &\geq\left(\frac{\binom{k-4(i-1)}{2}\binom{t-4(i-1)}{2}}{\binom{n'-1-4(i-1)}{4}}\right)
\end{split}
\end{equation*}
 Using standard upper and lower bound estimates for binomial coefficients, and the  fact that $k,t\geq n'/3$, we can easily check that the RHS above is at least some \emph{constant} $\beta >0$ for an appropriate choice of $c$. 
  

Thus, for any choice of $i_1,\ldots,i_{(1-\alpha)n} \in [n]$, 
$$ \Pr\bigg[\bigwedge_{\ell=1}^{\ell=(1-\alpha)n} \pi \text{ dishonours } G_{i_{\ell}}\bigg] \leq \big(1-\beta\big)^{(1-\alpha)n}$$

Using the union bound, the probability that 
 $\pi$ dishonours at least $(1-\alpha)n$ sub-gadgets is at most 
 
 $$\binom{n}{(1-\alpha)n}
 (1-\beta)^{(1-\alpha)n}.$$
 
Write $(1-\beta)=2^{-\delta}$ for some $\delta>0$. 
Applying the standard estimate $\binom{n}{k}=2^{H(\frac{k}{n})n}$, 
\[
\binom{n}{(1-\alpha)n}(1-\beta)^{(1-\alpha)n}
\leq 2^{n H(1-\alpha)} \cdot 2^{-\delta(1-\alpha)n}.
\]
By fixing a suitable $\alpha$ such that $H(1-\alpha)<\delta(1-\alpha)$, the estimate can be upper bounded by $2^{-\Omega(n)}$.  
\end{proof}

Now we are ready to prove Lemma \ref{lem:variable-dist}.

\begin{proof}
Let $H_P$ and $F_P$ be respectively the sets of good and bad permutations $\pi$ with respect to a partition $P$. 
Notice that, 
$$\disp_{\Delta}(g_P)\leq \Pr[\pi\in F_P ] + \sum_{\pi\in H_P} \Pr[\pi]\cdot \disp_{\Delta|_{\pi}}(g_P).$$ By Lemma \ref{bad-prob}, the first term is bounded by $2^{-\Omega(n)}$. We now show how to bound the second term $\disp_{\Delta|_{\pi}}(g_P)$ for any 
$\pi\in H_P$. Let $\Bad_{\pi}$ ($\Good_{\pi}$) be the set of indices $i$ such that the sub-gadgets ${G}_i$ are not honoured (honoured) by $\pi$. Now for any arbitrary partition $P$ of $W$ into $W_A, W_B$ define the set $\Tau_P=\{\psi|_{W_A}\ \bigg| \psi \in \supp({\Delta}),\ \psi|_{W_A} : W_A\rightarrow W \}$ ($\Theta_P=\{\psi|_{W_B}\ \bigg| \psi \in \supp({\Delta}),\ \psi|_{W_B} : W_B\rightarrow W \}$). 

When we sample a map according to distribution $\Delta$, the induced maps from $\Tau_P$ ($\Theta_P$) may depend on both $x,y\in \{0,1\}^n$ for an arbitrary partition $P$. Hence we denote these random induced maps by $\tau_{(x,y)}$, $\theta_{(x,y)}$. However the following is simple to verify.

\begin{remark}\label{good}
For a vertex $u \in W_A$ ($v\in W_B$) if $\pi^{-1}(u)= E(a_{i,j})$ ($\pi^{-1}(v)=E(b_{i,j})$) for some $i,j$,  $\tau(u)$ ($\theta(v)$) only depends on $x_i$ ($y_i$).
\end{remark}

For a partition $P$, define $Z_P$ to be a function from $\Tau_P\times \Theta_P \rightarrow \{1,-1\}$ such that 
$$Z_P(\tau,\theta)=
\begin{cases}
	1&\text{when}\ g_P(\tau,\theta)=0\\
    -1&\text{when}\ g_P(\tau,\theta)=1
\end{cases}
$$
For $\pi \in H_P$ we bound the following value, $\disp_{\Delta|_{\pi}}(Z_P,S_P)$ where $S_P$ is an arbitrary rectangle in $\Tau_P\times\Theta_P$. Here for sake of simplicity we abuse the notation $S_P$ and denote it as a characteristic function for the rectangle $S_P$. 
\begin{equation}
\begin{split}
    \disp_{\Delta|_{\pi}}(Z_P, S_P)=\bigg| \E_{(x,y)\sim {U_{n}} }\bigg[Z_P(\tau_{(x,y)},\theta_{(x,y)})\cdot S_P(\tau_{(x,y)}, \theta_{(x,y)})\bigg]\bigg|\\
   = \bigg| \E_{(u_1,v_1)\sim{U_{|\text{\tiny{Bad}}_{\pi}|}}} \bigg[ \E_{(u_2,v_2)\sim_{U_{|\text{\tiny{Good}}_{\pi}|}}} \bigg[ Z_P(\tau_{(u_1\circ u_2, v_1\circ v_2)},\theta_{(u_1\circ u_2, v_1\circ v_2)})\cdot S_P(\tau_{(u_1\circ u_2, v_1\circ v_2)},\theta_{((u_1\circ u_2, v_1\circ v_2)})\bigg]\bigg]\bigg|
\end{split}
 \end{equation}
 For a good map $\pi \in H_P$, let $V_{\Good_{\pi}}$ be the set of vertices from the sub-gadgets $G_i$ that are honoured by $\pi$, i.e.
  $V_{\Good_{\pi}} = \bigcup_{i\in \Good_{\pi}} V(G_i)$. 
 
 Fixing any $u_1,v_1 \in \{0,1\}^{\Bad_{\pi}}$, we denote the restricted map  $\tau_{(u_1\circ u_2,v_1\circ v_2)}\in \Tau_P$ as \[\tau_{(u_2,v_2)}^{(u_1,v_1)} : W_A \cap E(V_{\Good_{\pi}})\to W.\] Using Remark~\ref{good} we observe that, for any vertex $p \in W_A\cap E(V_{\Good_{\pi}})$, $\tau_{(u_2,v_2)}^{(u_1,v_1)}(p)$ only depends on variables from $u_2$. Similarly after fixing $u_1,v_1 \in \{0,1\}^{\Bad_{\pi}}$ we denote the restricted map $\theta_{(u_1\circ u_2,v_1\circ v_2)}$ as  \[\theta_{(u_2,v_2)}^{(u_1,v_1)} : W_B \cap E(V_{\Good_{\pi}}) \to W.\] Observe that for any vertex $q \in W_B\cap E(V_{\Good_{\pi}})$, $\theta_{(u_2,v_2)}^{(u_1,v_1)}(q)$ only depends on variables from $v_2$.\\
 Let $S_P^{(u_1,v_1)}$ be the rectangle in $\Tau_{P}^{(u_1,v_1)} \times \Theta_{P}^{(u_1,v_1)}$,  where $\Tau_P^{(u_1,v_1)}\coloneqq\{\tau_{(u_2,v_2)}^{(u_1,v_1)} \bigg|\  \tau_{(u_1\circ u_2,v_1\circ v_2)}\in \Tau_P\}$ and  $ \Theta_P^{(u_1,v_1)}\coloneqq\{\theta_{(u_2,v_2)}^{(u_1,v_1)} \bigg|\  \theta_{(u_1\circ u_2,v_1\circ v_2)}\in \Theta_P\}$. For simplicity, we denote maps from $\Tau_P^{(u_1,v_1)} $ and $\Theta_P^{(u_2,v_2)}$ as $\tau'_{(u_2)}, \theta'_{(v_2)}$. Now in Equation $(3)$ the inner expectation becomes following,
\[
\E_{(u_2,v_2)\sim{U_{|\text{\tiny{Good}}_{\pi}|}}} \bigg[ Z_P^{(u_1,v_1)}(\tau'_{(u_2)},\theta'_{(v_2)})\cdot S_P^{(u_1,v_1)}(\tau'_{(u_2)},\theta'_{(v_2)})  \bigg] 
\]
Say the above expression is $D_{u_1,v_1}$. Let us define a rectangle $R^{(u_1,v_1)}$ in the space $\{0,1\}^{\Good_{\pi}} \times \{0,1\}^{\Good_{\pi}}$ by setting $R^{(u_1,v_1)}(u_2,v_2)=  S_P^{(u_1,v_1)}(\tau'_{(u_2)},\theta'_{(v_2)}) $. 

The definition of the maps $\tau,\theta$ using $\red$ implies that $Z_P^{(u_1,v_1)}$ is either $\IP_{|\Good_{\pi}|}$ or $\overline{\IP_{|\Good_{\pi}|}}$.  Without loss of generality, assume it computes $\IP_{|\Good_{\pi}|}$. Then, 
\[D_{u_1,v_1}= \E_{(u_2,v_2)\sim U_{|\Good_{\pi}|}}\bigg[\IP_{|\Good_{\pi}|}(u_2,v_2)\cdot R^{(u_1,v_1)}(u_2,v_2)   \bigg].\]
Therefore, \[|D_{u_1,v_1}| \leq \disp_{U_{|\Good_{\pi}|}}(\IP_{|\Good_{\pi}|}(u_2,v_2)) \leq 2^{-\frac{\big|\Good_{\pi}\big|}{2}},\]where the third inequality follows from Theorem \ref{thm:ip-discrepancy}. 
Hence,
\begin{equation}
\begin{split}
\bigg| \E_{(u_1,v_1)\sim U_{|\text{\tiny{Bad}}_{\pi}|}}\bigg[ D_{u_1,v_1} \bigg] \bigg| \leq \E_{(u_1,v_1)\sim U_{|\text{\tiny{Bad}}_{\pi}|}}\bigg|\bigg[ D_{u_1,v_1} \bigg]\bigg| \leq \E_{(u_1,v_1)\sim U_{|\text{\tiny{Bad}}_{\pi}|}}\bigg[ 2^{\frac{-\big|\Good_{\pi}\big|}{2}} \bigg] \leq 2^{\frac{-\big|\Good_{\pi}\big|}{2}} 
\end{split}
\end{equation}

Since $|\Good_{\pi}|=\alpha n$, we conclude that $\disp_{\Delta|_{\pi}}(Z_P)=\disp_{\Delta|_{\pi}}(g_P)= 2^{-\Omega(n)}$.

\end{proof}

\section{$\epsilon$-Sensitive Lower Bound for Spanning Tree Polynomial}\label{sec:lb-spanning-tree-polynomial}
Let us recall Theorem \ref{thm:spanning-tree-sensitive} that we prove next.

\sensitivespanningtree*

\begin{proof}

First we recall the spanning tree polynomial for complete graph $G$ on $n$ vertices $\{1,2,\ldots,n\}$, where every monomial corresponds to a spanning tree in $G$ rooted at vertex $1$, i.e.
\[
\ST_n(\widetilde{G}) = \sum_{\nu \in T_n} x_{2,\nu(2)}x_{3,\nu(3)}\cdots x_{n,\nu(n)},
\]
where $T_n=\{\nu:\{2,3,\ldots,n\} \mapsto \{1,2,\ldots,n\} ~|~ \forall i ~\exists k \ \nu^{k}(i)=1 \ ; \forall i  \ (i,\nu(i)) \in E(\widetilde{G}) \}$ and recall from Section \ref{sec:intro} that $\widetilde{G}$ is the graph obtained from $G$ which has edges in both directions for each undirected edge in $G$. 
We want to invoke the Discrepancy-Sensitivity Correspondence theorem (Theorem \ref{thm-sensitivity-intro}) by using Lemma \ref{lem:variable-dist} to finish the proof of our Theorem \ref{thm:spanning-tree-sensitive}. Towards this effect set $f$ in Theorem \ref{thm-sensitivity-intro} as $\ST_n$. Using Lemma \ref{lem:variable-dist} we observe that $\disp_{\Delta}(C^f)=2^{-\eta_{0}\cdot n}$ for some constant $\eta_{0}>0$ where $\Delta$ is the universal distribution defined in Lemma \ref{lem:variable-dist}. This we denote as $\gamma$. It is simple to verify that choosing $\epsilon=2^{-\frac{\eta_{0}}{10}\cdot n}$ satisfies the condition in Theorem \ref{thm-sensitivity-intro}, i.e. $\epsilon \geq \frac{6\gamma}{1-3\gamma}$. \\
Now using Theorem \ref{thm-sensitivity-intro} we conclude that the monotone complexity of $F_{n-1,n} - \epsilon\cdot \ST_n$ is at least $\frac{\epsilon}{3\gamma}=2^{\Omega(n)}$. By analogous argument the monotone complexity of $F_{n-1,n} + \epsilon\cdot \ST_n$ is also at least $2^{\Omega(n)}$.  

\end{proof}

\section{Open Problems}\label{sec:openproblem}
Our work raises several questions for further investigation. We outline some of them below:

\begin{itemize}
\item It will be very interesting to find a family of explicit polynomials in $\VP$ which can be computed by polynomial-size \emph{formulas} but the monotone circuit complexity is strongly exponential in the number of variables. The known $\VP$ upper bound for the family of spanning tree polynomials only provides polynomial-size algebraic branching programs. Since the upper bound result uses Matrix Tree Theorem involving non-trivial determinant computation, it seems unlikely that in general a family of spanning tree polynomials can be computed by polynomial-size formulas. The known simulation of algebraic branching programs by formulas gives only quasi-polynomial formula size upper bound. One of the main results in \cite{CDM21} does exhibit a family of explicit polynomials computed by \emph{depth-three} arithmetic formulas whose monotone circuit complexity is exponential. However, the lower bound is not strongly exponential in the number of variables.     

\item Another concrete open problem is to improve the result of Theorem \ref{thm:spanning-tree-sensitive} quantitatively. Since the family of spanning tree polynomials $\ST_n$ defined over the complete graphs have $n^2$ variables, the $\epsilon$-sensitive lower bound is not strongly exponential in the number of variables. A natural goal is to obtain strongly exponential $\epsilon$-sensitive monotone lower bounds.
This will subsume both Theorems~\ref{thm:spanning-tree} and \ref{thm:spanning-tree-sensitive} of this work.

\item How sensitive is the following monotone polynomial : $F_{n,n} -\epsilon\cdot \det_{n,n}$, where $\det_{n,n}$ is the determinant polynomial for the $n\times n$ symbolic matrix ? Although, we are able to prove $\epsilon$-sensitive lower bound for the spanning tree polynomial (whose efficient computation involves determinant), we are unable to prove such a result for the determinant polynomial itself. To understand the difficulty for proving robust/sensitive lower bounds for the determinant, consider the following natural communication problem: Alice gets a map $\tau\,:\,\{1,\ldots,n/2\} \to [n]$ and Bob $\theta\,:\, \{n/2 + 1,\ldots,n\} \to [n]$. They have to decide if $\tau,\theta$ together form a permutation of $[n]$. It is simple to verify that this communication problem shows up when one tries to prove $\epsilon$-sensitive lower bounds for either the determinant or the permanent via our discrepancy-sensitivity correspondence. However, this communication problem has even an efficient bounded-error randomized protocol, ruling out the direct use of our current technique. Our work thus throws interesting challenge to  prove such robust monotone lower bounds for either the determinant or the permanent. 

\item Another very interesting question is to understand the usefulness of discrepancy based techniques or complexity measures against general (set-)multilinear circuits. Is there a polynomial efficiently computed by a (set-)multilinear circuit that has large $\epsilon$-sensitive monotone complexity? Spanning tree polynomials are not known have efficient (set-)multilinear circuits. The answer to this question seems difficult to guess. It is worth noting that until recently no super-polynomial separation between the powers of multilinear and ordinary monotone computations were known. Such a separation was just achieved recently by \cite{CDM21}. This separation used a \emph{corruption} based argument that is not known to yield $\epsilon$-sensitive lower bounds.
\end{itemize}

\section*{Acknowledgement} 
We thank Mrinal Kumar for his comments on an earlier version of this work.

\bibliographystyle{amsalpha} 
\bibliography{reference}
\newpage
\appendix

\end{document}